\newtcolorbox{myframe}[1][]{
  enhanced,
  arc=0pt,
  outer arc=0pt,
  colback=white,
  boxrule=0.8pt,
  #1
}
\newcommand{\eps}{\epsilon}
\newcommand{\veps}{\varepsilon}
\newcommand{\norm}[1]{\left\lVert #1 \right\rVert}
\DeclareMathOperator{\Tr}{Tr}
\newcommand{\ket}[1]{| #1 \rangle}
\newcommand{\bra}[1]{\langle #1 |}
\newcommand{\outerprod}[2]{| #1 \rangle \! \langle #2 |}
\newcommand{\id}{\mathrm{id}}
\newcommand{\calH}{\mathcal{H}}
\newcommand{\calE}{\mathcal{E}}
\newcommand{\calN}{\mathcal{N}}
\newcommand{\calM}{\mathcal{M}}
\newcommand{\calU}{\mathcal{U}}
\newcommand{\calW}{\mathcal{W}}
\newcommand{\calV}{\mathcal{V}}
\newcommand{\calB}{\mathcal{B}}
\newcommand{\calT}{\mathcal{T}}
\newcommand{\ee}{\mathrm{e}}
\newcommand{\ii}{\mathrm{i}}
\newcommand{\btheta}{{\bm{\theta}}}
\newcommand{\sfL}{\mathsf{L}}
\newcommand{\sfU}{\mathsf{U}}
\newcommand{\sfD}{\mathsf{D}}
\DeclarePairedDelimiterX{\infdivx}[2]{(}{)}{%
  #1\;\delimsize\|\;#2%
}
\DeclareMathOperator{\Var}{{\textnormal{Var}}}
\DeclareMathOperator{\vect}{{\textnormal{vec}}}
\newcommand{\expect}[1]{\underset{#1}{\expct}}
\DeclareMathOperator{\expct}{{\mathbb E}}
\newtheorem{theorem}{Theorem}[section]
\newtheorem{lemma}[theorem]{Lemma}
\newtheorem{proposition}[theorem]{Proposition}
\newtheorem{claim}[theorem]{Claim}
\theoremstyle{definition}
\newtheorem{definition}[theorem]{Definition}
\newtheorem{protocol}[]{Protocol}
\newtheorem{problem}[]{Problem}
\newcommand*\samethanks[1][\value{footnote}]{\footnotemark[#1]}
\title{\LARGE{\bf{Optimal quantum circuit cuts with application to clustered Hamiltonian simulation}}}
\author{Aram W. Harrow~\thanks{Center for Theoretical Physics, MIT, Cambridge, MA, 02139, USA}~\footnote{\href{mailto:aram@mit.edu}{aram@mit.edu}}\and Angus Lowe~\samethanks[1]~\footnote{\href{mailto:alowe7@mit.edu}{alowe7@mit.edu}}}
\date{\today}
\begin{document}
\maketitle

\begin{abstract}
    We study methods to replace entangling operations with random local
  operations in a quantum computation, at the cost of increasing the
  number of required executions. First, we consider ``space-like cuts"
  where an entangling unitary is replaced with random local
  unitaries. We propose an entanglement measure for quantum dynamics,
  the \textit{product extent}, which bounds the cost in a procedure
  for this replacement based on two copies of the Hadamard test. In
  the terminology of prior work, this procedure yields a
  quasiprobability decomposition with minimal 1-norm in a number of
  cases, which addresses an open question of Piveteau and Sutter.  As an application, we give an improved algorithm for clustered Hamiltonian simulation.
  Specifically we show that interactions can be removed at a cost which is
  exponential in the sum of their strengths times the evolution time, and vanishing in the limit of weak interactions.

  We also give an improved upper bound on the cost of replacing wires
  with measure-and-prepare channels using ``time-like cuts''.  We prove a matching information-theoretic lower bound when estimating output probabilities.
\end{abstract}

\maketitle

\section{Introduction}

The precise control of a large number of entangled qubits presents a
significant challenge for realizing large-scale quantum computation. While
considerable progress has been made toward the design and construction of
devices which overcome this challenge, near-term quantum computers are likely
to be restricted both in terms of the number of logical qubits available as
well as in their ability to generate and maintain long-range entanglement. In
this work, we study methods which aim to alleviate these issues by replacing
entangling operations with an ensemble of local operations in a given quantum
circuit. Such methods have been referred to collectively as \textit{circuit
    cutting} (e.g., \cite{Lowe2023fast}) or \textit{circuit
    knitting}~\cite{piveteau2023circuit} since, when applied to circuits with an
appropriate structure, they may be employed to simulate large quantum circuits
using circuits defined on strictly fewer qubits and resembling sub-regions of
the original.

Besides the obvious practical motivation for studying these methods, it is also
a long-standing theoretical problem to understand smooth trade-offs between the
classical and quantum resources required to accomplish different information
processing tasks. In quantum Shannon theory, for instance, one often studies
the landscape of achievable rates when trading between generating entanglement,
transmitting classical information, and transmitting quantum information. Prior
work on circuit cutting has suggested that trade-offs between entanglement and
classical randomness also exist in the computational setting. For example,
Ref.~\cite{bravyi2016trading} gives a method for adding ``virtual qubits" in
sparse quantum circuits, while Ref.~\cite{peng2020simulating} develops a
framework for decomposing clustered quantum circuits using mid-circuit Pauli
measurements.

\begin{figure}
    \centering
    \begin{subfigure}[t]{0.3\textwidth}
        \centering
        \begin{tikzpicture}
            \node[] at (0,0) (a1) {};
            \node[above=0.5em of a1] (a) {\includegraphics[width=0.6\textwidth]{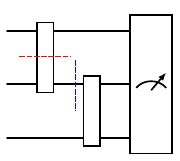}};
        \end{tikzpicture}
        \caption{Original circuit}
    \end{subfigure}
    \hfill
    \begin{subfigure}[t]{0.32\textwidth}
        \centering
        \includegraphics[width=0.8\textwidth]{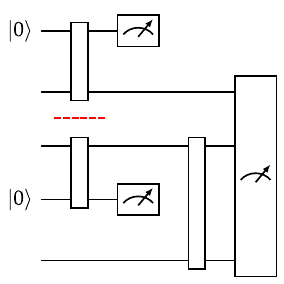}
        \caption{Space-like}
    \end{subfigure}
    \hfill
    \begin{subfigure}[t]{0.33\textwidth}
        \centering
        \begin{tikzpicture}
            \node[] at (0,0) (c1) {};
            \node[above=0.9em of c1] (c) {\includegraphics[width=\textwidth]{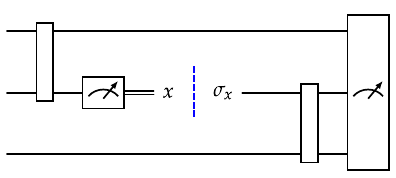}};
        \end{tikzpicture}
        \caption{Time-like}
    \end{subfigure}
    \caption{\label{fig:space_vs_time_fig}We consider two different methods of circuit cutting referred to as space-like and time-like. (a) A quantum circuit with two unitary operations acting on three registers, each comprising some number of qubits. (b) Space-like cut applied to the first unitary operation. (c) Time-like cut applied to the second register, after the first unitary operation.}
\end{figure}

The cost of these and other proposals for circuit cutting is two-fold. First,
the known approaches succeed when the task is to estimate an expectation value,
but it is unclear whether they can be used to sample from the output
distribution of a quantum circuit. This restriction still allows for many
proposed applications of quantum computing, including estimating correlation
functions~\cite{Ortiz_2001}, solving decision problems in $\mathsf{BQP}$, and
optimizing objective functions, e.g., in variational algorithms. The second and
more substantial limitation is an increase in the number of executions of the
computation which tends to grow exponentially in the number of operations
replaced. This exponential cost is to be expected, however, since otherwise one
might envision applying the procedure recursively to derive an efficient, fully
classical algorithm to simulate the circuit. (See~\cite{marshall2023qubit} for
a more detailed heuristic argument along these lines.)

In this paper, we give new methods for two special cases of circuit cutting
which, following Ref.~\cite{Mitarai2021constructing} we refer to as
``space-like" and "time-like", as depicted in \Cref{fig:space_vs_time_fig}. (Despite the terminology, there is no direct connection to relativity.) In
our procedure for space-like cutting, an entangling unitary is replaced by an
ensemble of unitaries acting locally on the original systems and a pair of
ancilla qubits. In a time-like cut, a subset of the wires in a circuit are
replaced by an ensemble of measure-and-prepare operations, i.e., the qubits are
measured and replaced by freshly prepared qubits whose state depends on the
measurement outcome. In both instances we make use of the framework of
quasiprobability decompositions (QPDs), as described in further detail in
\Cref{sec:qpd}.
\subsection{Main results}
\subsubsection{Space-like cuts}
Space-like cuts have been previously studied in, e.g., \cite{yuan2021hybridtn,Mitarai2021constructing,Mitarai2021overhead,piveteau2023circuit}. The cost of cutting a unitary gate can be thought of as a measure of its
entangling power. In \Cref{sec:space_like} we introduce i) a new measure of the
entangling power of unitary operations called the \textit{product extent} and
ii) a simple procedure for space-like cutting whose cost equals the product
extent. In many cases (including all 2-qubit gates, SWAP operators or
transversal operations), we prove that this procedure is optimal. In order to
describe these results quantitatively, it will be helpful to introduce some
preliminary definitions. In the following, $\sfL(\calH_{C})$ denotes a linear operator acting on a quantum system $C$ with Hilbert space $\calH_C$. See \Cref{sec:notation} for a full list of notational conventions.

\begin{definition}[Space-like cut]\label{def:space_like_cut}
    A \emph{space-like cut} of a bipartite quantum channel $\calN_{AB\to AB}$ is a decomposition of the form
    \begin{align}\label{eq:space_like_cut_def}
        \calN & = \sum_{i=1}^m a_i\ (\id_{AB}\otimes \calT_i) \circ \calE_i
    \end{align}
    where
    \begin{itemize}
        \item $a\in \mathbb{R}^m$;
        \item
          $\calE_i\colon \sfL(\calH_{AB})\to \sfL(\calH_{AR_ABR_B})$
          are quantum channels implementable using local operations and classical communication (LOCC) between $A$ and $B$; and
        \item $\calT_i: \sfL(\calH_{R_AR_B})\to \mathbb{C}$ are \emph{post-processing operations} of the form $\calT_i:X\mapsto \Tr(O X)$ for some Hermitian $O$ such that $O=O^{(A)}_i\otimes O^{(B)}_i$ with $O^{(A)}_i\in\sfL(\calH_{R_A})$ and $O^{(B)}_i\in\sfL(\calH_{R_B})$ and $\norm{O}\leq 1$. These $\calT_i$ are not necessarily quantum operations because they will generally output a density matrix times a scalar.
        \end{itemize}
    We refer to the quantity $\norm{a}_1$ as the \emph{1-norm of the (space-like) cut} and the
    infimum of $\norm{a}_1$ over all space-like cuts as the \emph{gamma factor}
    $\gamma(\calN)$. If, in addition, $\calE_i=\calV_i\otimes\calW_i$ for some isometric channels $\calV_i, \calW_i$ we say that the space-like cut is \emph{local}.
\end{definition}
The gamma factor was previously introduced in Ref.~\cite{piveteau2023circuit}. The form of the decomposition
in a space-like cut is motivated by the fact that such an expression can be leveraged to simulate the action of $\calN$ using the channels appearing in the sum, which do not entangle subsystems $A$ and $B$ (cf.\ \Cref{sec:qpd}). The runtime of this procedure scales with the 1-norm $\norm{a}_1$. With these definitions in hand, we can now state our first result.
\begin{theorem}\label{thm:bound_on_overhead}
    Let $U=\sum_j c_j V_j\otimes W_j$ be a decomposition of $U\in\sfU(\calH_{AB})$ into local unitary operations. The double Hadamard test of \Cref{sec:double_hadamard} is a local space-like cut of $\calU\colon\rho\mapsto U\rho U^\dag$ with two ancilla qubits (i.e., $d_{R_A}=d_{R_B}=2$) and 1-norm $\phi:=2\norm{c}_1^2-\norm{c}_2^2$. Moreover, if this decomposition is an operator Schmidt decomposition\footnote{By an operator Schmidt decomposition, we mean a decomposition of a bipartite operator $X$ acting on $\calH_{AB}$ of the form $X =\sum_j \lambda_j A_j\otimes B_j$ such that $\Tr(A_j^\dag A_k) = d_A \delta_{jk}$, $\Tr(B_j^\dag B_k) = d_B \delta_{jk}$, and $\lambda_j>0$, $\sum_j \lambda_j^2 = 1$. Such a decomposition always exists, though the $A_j, B_j$ need not be unitary.} then
    \begin{align}
        \phi = \gamma(\calU) = 2\norm{c}_1^2 - 1.
    \end{align}
\end{theorem}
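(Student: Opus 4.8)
The statement has two parts. The first part is a construction claim: the double Hadamard test of Section \ref{sec:double_hadamard} realizes a local space-like cut of $\calU$ with two ancilla qubits and 1-norm $\phi = 2\norm{c}_1^2 - \norm{c}_2^2$. The second part is an optimality claim: when the input decomposition $U = \sum_j c_j V_j \otimes W_j$ is an operator Schmidt decomposition (so $\norm{c}_2 = 1$ after the normalization built into that notion), $\phi$ coincides with the gamma factor $\gamma(\calU)$.

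For the construction, let me think about what I'd need. The plan is:

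First, I would unpack the double Hadamard test from Section \ref{sec:double_hadamard}. The idea is that a single Hadamard test on an ancilla qubit, controlled on the operators in a decomposition $U = \sum_j c_j V_j \otimes W_j$, lets one "extract" $U$ acting on the $AB$ systems at the cost of a postselection-like overhead; running two copies (one for the ket side $U$ and one for the bra side $U^\dagger$ in $\calU: \rho \mapsto U\rho U^\dagger$) handles the full channel. I would write down explicitly the ensemble of local channels $\calE_i = \calV_i \otimes \calW_i$ (each acting on $A R_A$ and $B R_B$ respectively) together with the tensor-product observables $O_i^{(A)} \otimes O_i^{(B)}$ and the coefficients $a_i$, then verify directly that $\sum_i a_i (\id_{AB} \otimes \calT_i) \circ \calE_i = \calU$. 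The bookkeeping will produce terms indexed by pairs $(j,k)$: diagonal terms $j = k$ and off-diagonal terms $j \neq k$; counting their weights in the resulting 1-norm is what yields $2\norm{c}_1^2 - \norm{c}_2^2$ (the $\norm{c}_1^2$ from the full product of sums $(\sum_j |c_j|)^2$, the correction $-\norm{c}_2^2$ because the diagonal terms can be combined more efficiently). This step is mostly a careful but routine calculation once the circuit is fixed.

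For the optimality part, I expect the main obstacle. I need a matching lower bound $\gamma(\calU) \geq 2\norm{c}_1^2 - 1$ when $\{c_j\}$ are the operator Schmidt coefficients. The natural approach is duality: the infimum defining $\gamma(\calU)$ is a convex (in fact, linear) optimization over quasiprobability decompositions, so its optimal value equals that of a dual program, and any feasible dual witness gives a lower bound. Concretely I would look for a linear functional $\Lambda$ on the space of bipartite channels (equivalently, a suitable operator acting on the Choi space) that (i) evaluates to $\geq 2\norm{c}_1^2 - 1$ on $\calU$ and (ii) has $|\Lambda(\calM)| \leq 1$ for every channel $\calM$ of the form $(\id_{AB} \otimes \calT) \circ \calE$ with $\calE$ LOCC and $\calT$ a product post-processing with $\norm{O} \leq 1$. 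Because the extreme points of the feasible set are built from product/LOCC operations, constraint (ii) should reduce to a bound involving only product structure — which is where the operator Schmidt decomposition is the right tool, since $\sum_j c_j A_j \otimes B_j$ with orthonormal $\{A_j\}, \{B_j\}$ makes the pairing with a product witness clean, and $\sum_j c_j^2 = 1$ makes the "$-1$" appear. I'd guess the witness is something like (a scalar multiple of) $\outerprod{U}{U}$ expressed in the operator Schmidt basis minus a correction proportional to the identity / maximally mixed component, chosen so that the product-operation constraint is tight exactly on the diagonal $j = k$ terms. Verifying (ii) — i.e., that no clever LOCC channel plus product measurement beats the bound — is the delicate part; it may require an argument that the worst case is attained by a single local unitary $V \otimes W$ together with a rank-one product observable, reducing (ii) to an inequality of the form $|\langle U, V \otimes W \rangle|^2$-type quantities being controlled by $\|c\|_1$. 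Finally, I would note that the upper bound from the construction gives $\gamma(\calU) \leq \phi = 2\|c\|_1^2 - \|c\|_2^2 = 2\|c\|_1^2 - 1$ under the Schmidt normalization, so the two bounds meet and $\phi = \gamma(\calU) = 2\|c\|_1^2 - 1$, completing the proof.
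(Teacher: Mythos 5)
Your outline of the first part matches the paper's construction in substance: the paper draws setting random variables $(\bm{i},\bm{j},\bm{g})$ with probability proportional to $c_ic_j$, with the diagonal terms $i=j$ appearing once and the off-diagonal terms twice, which is exactly where the normalization $2\norm{c}_1^2-\norm{c}_2^2$ comes from, and then verifies unbiasedness by the explicit interference calculation you describe (\Cref{lem:gate_cutting_unbiased_estimator} and \Cref{lem:gate_cutting_qpd_valid}). That part of your plan is sound, if schematic.

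The optimality part has a genuine gap. You correctly identify that what is needed is a matching lower bound $\gamma(\calU)\geq 2\norm{c}_1^2-1$, and you propose to obtain it from a dual witness on the space of channels, but you never construct the witness and you explicitly defer the feasibility verification (``the delicate part'') --- that verification \emph{is} the lower bound, so as written the argument is incomplete. The paper's route avoids building a witness from scratch. By \Cref{claim:1_norm_lb} (proved in \Cref{sec:choi_robustness_bounds_1_norm_proof} via \Cref{lem:lower_bound_sep_decomp}), any space-like cut of $\calN$ into LOCC channels followed by product observables with $\norm{O}\leq 1$ induces a decomposition of the Choi state $J_{\calN}$ as a real linear combination of separable states whose coefficient 1-norm is at most $\norm{a}_1$, which yields $\gamma(\calN)\geq 1+2R(J_{\calN})$. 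The second ingredient you are missing is that for a unitary channel the Choi state is \emph{pure}, and the robustness of entanglement of a pure state has the exact closed form $R(\psi)=(\sum_j\lambda_j)^2-1$ in terms of its Schmidt coefficients (\Cref{eq:pure_state_rob_expr}); since the Schmidt coefficients of $J_{\calU}$ are precisely the operator Schmidt coefficients of $U$, the lower bound evaluates to $2(\sum_j c_j)^2-1=2\norm{c}_1^2-1$, meeting the upper bound from the construction (this is \Cref{prop:optimality_condn}). If you wish to pursue your duality idea, the witness you are groping for is essentially the optimal entanglement witness in the dual formulation of the robustness of $J_{\calU}$, but you would still need the pure-state formula, or an equivalent computation, to evaluate its value on $\calU$.
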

This result motivates our definition of
the product extent (\Cref{defn:product_extent}) as the minimum value of
\begin{equation}
    2\norm{c}_1^2-\norm{c}_2^2
    \qquad\text{over all decompositions}\qquad
    U=\sum_j c_j V_j\otimes W_j.
\end{equation}
Our circuit cutting approach is analogous to the stabilizer-rank based classical simulation methods of Ref.~\cite{Bravyi2019simulationofquantum}, where the cost can be related to similar minimizations over decompositions of unitaries.

A potential cause for concern in the motivation for this definition is the
complexity of finding good decompositions, as well as the gate complexity of
implementing the space-like cut using our procedure. Fortunately, the quantity
$\phi$, and hence the product extent, is submultiplicative under composition of
unitaries (again, in a way analogous to \cite{Bravyi2019simulationofquantum}).
This allows one to derive a good decomposition compatible with the double
Hadamard test of \Cref{sec:double_hadamard} using decompositions of the
individual gates which comprise a given circuit.

\begin{figure}
    \centering
    \begin{subfigure}[t]{0.3\textwidth}
        \vskip 0pt
        \centering
        \includegraphics[width=\textwidth]{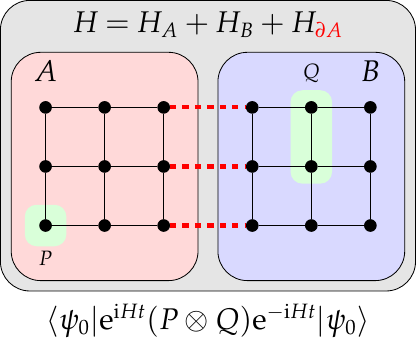}
    \end{subfigure}
    \begin{subfigure}[t]{0.1\textwidth}
        \vskip 0pt
        \vspace{5em}
        \Large{$\ \ \ \ \longrightarrow$}
    \end{subfigure}
    \begin{subfigure}[t]{0.39\textwidth}
        \vskip 0pt
        \centering
        \includegraphics[width=\textwidth]{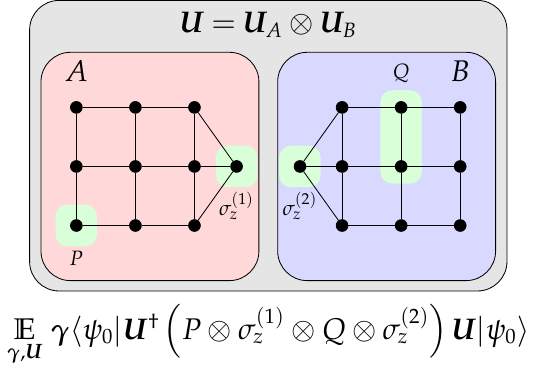}
    \end{subfigure}
    \caption{\label{fig:hamiltonian_intro_fig}The left-hand side depicts the interaction graph of a Hamiltonian $H$ describing a system of qubits with nearest-neighbour
        interactions on a 2D grid. There are 3 weak
        interactions in the boundary $\partial A$, indicated by
        the dashed red lines, and some 3-site (non-local) observable of interest
        $P\otimes Q$ acts on the qubits highlighted in green. The right-hand side shows
        the interaction graph of the pair of randomly-chosen circuits $\bm{U}_A$ and $\bm{U}_B$
        acting locally on $AR_A$ and $BR_B$, respectively, which arises from the procedure described in \Cref{sec:double_hadamard}. By measuring the
        observable $P\otimes Q$ and a product of Pauli-$Z$ operators on the ancillas, we can multiply by a random variable $\bm{\gamma}$ to recover the original mean value, in expectation.}
\end{figure}

We exploit this property in
\Cref{sec:hamiltonian_simulation} to analyze a method for clustered Hamiltonian simulation. This problem was
proposed and analyzed in Ref.~\cite{peng2020simulating} and then further studied in
Ref.~\cite{Childs_2021_trotter_error} as well as Ref.~\cite{sun2022perturbative} under the name ``perturbative quantum simulation". Here, one considers a Hamiltonian $H$
defined on a system of $n$ qubits which is a sum of $\mathrm{poly}(n)$ terms
$H_j$ each of which acts non-trivially on at most $O(1)$ qubits and satisfies
$\norm{H_j}\leq 1$. We further assume that a given term is proportional to a
Pauli operator. Suppose we partition the qubits into two disjoint subsets $A$
and $B$ of $n_A$ and $n_B$ qubits, respectively, and denote the set of
interactions crossing the partition as $\partial A$. In \Cref{thm:clustered_ham_sim} we show that if $\eta:=\sum_{k\in\partial A}\norm{H_k}$ is the interaction strength, then by introducing two ancilla qubits $R_A$ and $R_B$ we can compute the expectation value of a time-evolved observable $\ee^{-\ii H t} (X_A\otimes X_B)\ee^{\ii H t}$ using an ensemble of polynomial-size \emph{local} quantum circuits on $A R_A$ and $BR_B$, with a sample overhead on the order of $\ee^{4\eta t}$, and tending to one as the interaction strength goes to zero, which slightly improves upon the result in~\cite{sun2022perturbative}, to be discussed shortly. The procedure in the theorem is depicted
schematically in \Cref{fig:hamiltonian_intro_fig}.

For clustered Hamiltonians whose interaction strength between the partitions
is sufficiently small, and for short enough times, we envision that the sample overhead may be
manageable in practical settings. It is also clear that we may execute these local
circuits one-at-a-time so long as the initial state is
a product state $\rho=\rho_A\otimes \rho_B$. This results in an algorithm which requires executing quantum circuits defined on just $\max\{n_A,n_B\}$ qubits and a single ancilla qubit.

Our result gives a rigorous proof that a runtime on the order of $\ee^{4\eta t}$ is possible for simulating
bipartite Hamiltonians on qubit-limited devices, up to polynomial factors. This runtime was previously derived using Hadamard tests and different decomposition methods in \cite{sun2022perturbative}, dramatically improving on the earlier works~\cite{peng2020simulating, Childs_2021_trotter_error}. However, the estimator constructed in~\cite{sun2022perturbative} has a small additional bias as well as a multiplicative overhead which does not tend to one in the limit of weak interactions\footnote{This comes from a different definition of cost in \cite{sun2022perturbative}, where a unit cost is assigned to terms which may require multiple Hadamard tests to estimate.}. Additionally, their decomposition technique works well for weakly interacting Hamiltonians, but gives too high a cost for more generic unitaries. For instance, the optimal cost (1-norm) in simulating a CNOT gate is known to be $\gamma(\textnormal{CNOT})=3$ and is recovered using our method, whereas the cost obtained using the decomposition in~\cite{sun2022perturbative} is approximately $9.6$\footnote{In the language of this paper, the cost stated in Eq.~(53) in the Supplementary Material of \cite{sun2022perturbative} is $2\ee^{2\theta}$ for any Pauli rotation $\ee^{\ii \theta P}$ for some Pauli $P$. Since, up to local unitaries, a CNOT can be written as such a rotation with $\theta=\pi/4$, we get a cost of $2\ee^{\pi/2}\approx 9.6$.} . Our contribution here is to remedy these issues through a unification of the ideas in~\cite{sun2022perturbative} and prior work on circuit cutting, e.g., \cite{piveteau2023circuit}.

Earlier, an
upper bound of $2^{O(\eta^2 t^2 |\partial A|/\eps)}$ was proven in
Ref.~\cite{peng2020simulating}. There $\eps$ is the precision
attained by the Trotter formula, and the $O(\cdot)$ in the exponent
hides a very large constant as well as a
dependence on the degree of the interaction graph. By invoking
higher-order Trotter formulae, Ref.~\cite{Childs_2021_trotter_error}
improved this to $2^{O(\eta^{1/p}t^{1+1/p}|\partial A|/\eps^{1/p})}$,
with the same constant in the exponent and where $p$ indicates the
order of the product formula. Furthermore, in the latter two previous
works it was assumed that the terms in the Hamiltonian act
geometrically locally, which is an assumption we are able to drop in
our scheme. Noting that $\eta\leq |\partial A|$, both bounds are
strictly worse than $\ee^{4\eta t}$. However, these bounds are valid even in the case of arbitrarily many subsystems (up to polynomial factors), whereas we focus on bipartite Hamiltonians. Although \cite{sun2022perturbative} suggests a procedure for $N$ subsystems, it is unclear at the time of writing whether the overhead from such a procedure would grow exponentially in $N$. We leave the task of providing more precise bounds on the cost of decomposing circuits into more than two subsytems to future work.

\subsubsection{Time-like cuts}
In \Cref{sec:time_like} we give an improved bound on the cost of replacing
wires in a circuit with measure-and-prepare operations, and we show that this
bound is tight in some cases. Such time-like cuts were previously studied in~\cite{peng2020simulating,Lowe2023fast} where they were applied to quantum optimization and Hamiltonian simulation algorithms, and were further analyzed in~\cite{brenner2023optimal,Mitarai2021constructing}. For Hamiltonian simulation, time-like cuts may be preferable in cases where the interaction graph has a small vertex separator, but no small edge separator.

The upper bound makes use of a certain time-like
cut.
\begin{definition}[Time-like cut]\label{def:time_like_cut}
    A \emph{time-like cut} of a quantum channel $\calN_{A\to B}$ is a decomposition of the form
    \begin{align}\label{eq:time_like_cut_defn}
        \calN & = \sum_{i=1}^m a_i\ (\id_{B}\otimes \calT_i) \circ \calM_i
    \end{align}
    where
    \begin{itemize}
        \item $a\in \mathbb{R}^m$;
        \item $\calM_i\colon \sfL(\calH_{A})\to \sfL(\calH_{BR_B})$ are measure-and-prepare channels; and
        \item $\calT_i: \sfL(\calH_{R_B})\to \mathbb{C}$ are post-processing operations of the form $\calT_i:X\mapsto \Tr(O X)$ for some Hermitian $O$ such that $\norm{O}\leq 1$.
    \end{itemize}
    We refer to the quantity $\norm{a}_1$ as the \emph{1-norm of the (time-like) cut} and the
    infimum of $\norm{a}_1$ over all time-like cuts as the \emph{time-like gamma factor}
    $\gamma_{\uparrow}(\calN)$.
\end{definition}
We remark that there is a way to unify the terminology in the time- and space-like cases using
the formalism presented in, for example, Ref.~\cite{Gour2021entanglementbipartite}. However,
in this work we find it more convenient to treat the two cases separately.

In \Cref{thm:rank_dep_wire_cutting} we show that, to estimate an observable
$X$ with respect to the output state of a generic circuit, one can use a
decomposition of the form in \Cref{def:time_like_cut} to replace $k$ wires in
the circuit with an ensemble of measure-and-prepare channels while increasing
i) the size of the circuit by at most $O(k^2)$ additional gates and ii) the
number of executions by a multiplicative factor of at most $O(2^k r)$ if $X$
has rank at most $r\leq 2^k$, and by at most $O(4^k)$ in general. In
particular, a cost of $O(2^k)$ executions suffices for additive-error estimates
of the output probabilities of the original circuit. Interestingly, the
proposal involves only random \textit{diagonal} unitary $2$-designs as well as
state preparation and measurement in the computational basis.

We then give an information-theoretic argument that any similar procedure
necessarily increases the number of required samples by a factor of at least
$\Omega(2^k)$, leading to \Cref{thm:time_like_lower_bound}. This argument is
somewhat reminiscent of quantum data hiding~\cite{divincenzo2002datahiding},
wherein a bit can be perfectly encoded in a random choice of mixed state shared
between Alice and Bob, but remains inaccessible so long as they use
measurements implementable with local operations and classical communication
(LOCC). Crucially, the ``data hiding states" can be chosen to be pure states in
our case since we consider a heavily restricted class of LOCC measurements,
whereas it is known that the states must be mixed in order to hide the bit
against LOCC more generally~\cite{Matthews2009distinguishability}. We leverage
this difference to prove the lower bound when estimating output probabilities.

\subsection{Related work}
Ref.~\cite{piveteau2023circuit} gives a procedure which achieves the minimal
1-norm in a QPD for the special case of bipartite Clifford unitaries. The
procedure we give in \Cref{sec:space_like} differs in a few key respects: i)
the upper bound on the 1-norm of our procedure is applicable to arbitrary
unitaries, ii) our procedure makes use of a single pair of ancilla qubits,
rather than a number of ancilla qubits growing with the dimension of the
unitary, iii) our procedure does not use classical communication between
parties, and iv) the overhead in gate complexity of our procedure is explicitly
shown to be small in relevant cases. The procedure in \Cref{sec:space_like} is
also closely related to ideas in Refs.~\cite{bravyi2016trading}
and~\cite{eddins2022doubling}. In Ref.~\cite{bravyi2016trading}, a circuit
resembling a Hadamard test is used to simulate $k$ physical qubits in sparse
quantum circuits defined on $n+k$ qubits. Roughly speaking, this would
correspond to classically simulating a $k$-qubit subsystem of the ``double
Hadamard test" circuit depicted in \Cref{fig:double_hadamard_test}, and would
therefore not be applicable in the settings we consider, where $k$ may be equal
to $n$ in the worst case. In Ref.~\cite{eddins2022doubling}, a QPD-based method
is suggested for ``doubling" the size of a quantum simulation, though their
analysis is performed at the level of quantum states rather than unitaries.
Another key technical difference is that our procedure does not require
preparing states corresponding to those in a Schmidt decomposition of the state produced by the circuit, which underlies the application of our result to
clustered Hamiltonian simulation.

The decomposition of the identity channel into measure-and-prepare channels
which we employ in \Cref{sec:good_mp_channels} has previously been used to
obtain similar results. To the best of our knowledge, the decomposition was
first explicitly given in~\cite{Yuan2021universalmemories} to describe an
application of the dynamic entanglement measure which they introduce. The
authors show that the resulting QPD can be used to estimate expectation values
of observables, though they do not provide explicit implementations of the
channels. A similar procedure which makes use of ancilla qubits was then
suggested in~\cite{brenner2023optimal}. Follow-up
works~\cite{harada2023optimal,pednault2023alternative} removed the need for
ancilla qubits and provided explicit gate complexities for implementing the
relevant channels. The procedure we give in \Cref{sec:good_mp_channels} makes
use of the same decomposition as in the works above, though we give a simpler
implementation of the channels based on diagonal 2-designs (cf.\
\cite[Algorithm 1]{harada2023optimal} versus \Cref{prot:prot_1}). All in all,
our improved upper bound comes from an improved \textit{analysis} of the
procedure, rather than a different choice of measure-and-prepare channels. Our
lower bound in this setting is not implied by lower bounds on the 1-norm
appearing in prior work~\cite{brenner2023optimal}, as discussed in
\Cref{sec:qpd}.

During the preparation of this paper we became aware of two other works whose
results overlap with some aspects of our procedure for space-like cutting.
In~\cite[{Theorem~5.1}]{schmitt2023cutting} the authors show using a different
analysis that for bipartite unitaries (referred to as ``KAK-like" unitaries in
their work), the minimal 1-norm in a QPD is at most the product
extent\footnote{Our description of their result is rewritten in the language
    of this paper.} defined in
\Cref{sec:space_like}, which overlaps with the content of
\Cref{thm:bound_on_overhead}. 
Ref.~\cite{ufrecht2023optimal} gives a similar set of results for the
special case of 2-qubit rotation gates.

\section{Preliminaries}
\subsection{Notation}\label{sec:notation}
\paragraph{Sets.}
Throughout, we let $\calH_A, \calH_B$, etc. denote finite-dimensional Hilbert
spaces representing quantum systems $A, B$, etc., and we denote their
dimensions by $d_A$, $d_B$, etc. respectively. We denote by
$\mathsf{L}(\calH_A,\calH_B)$ the set of all linear operators from $\calH_A$ to
$\calH_B$, and $\mathsf{L}(\calH_A)$ the set of all square linear operators
acting on $\calH_A$. We let $\mathsf{D}(\calH_A)\subset \mathsf{L}(\calH_A)$ be
the set of all quantum states of system $A$, and $\mathsf{U}(\calH_A)\subset
    \mathsf{L}(\calH_A)$ be the set of unitary operators acting on $\calH_A$. The
set of separable states (i.e., convex combinations of tensor product states) on
the bipartite Hilbert space $\calH_{AB}=\calH_A\otimes \calH_B$ will be denoted
by $\mathsf{SEP}(\calH_{AB}|A,B)$.

\paragraph{Vectorization, Choi and Bell states.}
The notation $\vect(\cdot)$ denotes vectorization, i.e., the natural linear
bijection from $\sfL(\calH_A,\calH_B)$ to
$\calH_B\otimes\calH_A$. For a given quantum channel $\calN:
    \mathsf{L}(\calH_A)\to \mathsf{L}(\calH_{A^\prime})$, we let $J_\calN \in
    \mathsf{L}(\calH_A\otimes \calH_{A^\prime})$ be the Choi-Jamiolkowski state
(referred to as the \textit{Choi state} from now on) corresponding to the
channel, i.e., $J_\calN = (\mathrm{id}_A\otimes \calN)(\Phi_A)$ where $\Phi_A =
    \frac{1}{d_A} \vect(\mathds{1}_{A})\vect(\mathds{1}_{A})^\dag$. For two quantum
systems of equal dimension $A$ and $A^\prime$ we let $\ket{\Phi}_{AA^\prime}:=
    \frac{1}{\sqrt{d_A}}\sum_{j\in [d_A]}\ket{j}_A\otimes \ket{j}_{A^\prime}$ be
the maximally entangled state (or Bell state) between $A$ and $A^\prime$.

\paragraph{Some special operations.}
The SWAP operator acting on the bipartite Hilbert space $(\mathbb{C}^{d})^{\otimes 2}$ is denoted by $F$ and has the action $F \ket{\psi}\otimes
    \ket{\phi} = \ket{\phi}\otimes \ket{\psi}$ for any $\ket{\psi},\ket{\phi}\in
    \mathbb{C}^d$. For $n$-partite Hilbert spaces we denote by $F_{\pi}$ the
permutation operator corresponding to the permutation $\pi\in \mathfrak{S}_n$
where $\mathfrak{S}_n$ is the symmetric group of order $n$. The partial
transpose map applied to an operator $X\in\sfL(\calH_{AB})$ is denoted
$X^\Gamma$ and has the action $\outerprod{i}{j}_{A}\otimes
    \outerprod{k}{\ell}_{B}\mapsto \outerprod{i}{j}_{A}\otimes
    \outerprod{\ell}{k}_{B}$ in the standard basis. For an operator
$M\in\mathsf{L}(\calH_A)$ let $\overline{M}$ denote the complex conjugate of
$M$. If an operator $M$ acts on a tensor product Hilbert space then $\Tr_j(M)$ denotes the partial trace over the $j^\text{th}$ subsystem in the tensor product. For example, if $M=A\otimes B\otimes C$ then $\Tr_2(M) = \Tr(B) A\otimes C$.

\paragraph{Random variables, distributions.}
We denote random variables, including matrix-valued random variables, using
bold font e.g., $\bm{x}$, $\bm{U}$, etc. If $\bm{x}$ is a real-valued random
variable we write $\bm{x}\in \mathbb{R}$, and similarly for other sets. The
total variation distance between two distributions $p$, $q$ is denoted by
$d_{\mathrm{TV}}(p,q)$.

\paragraph{Operator conventions.}
The $p$-norm $\norm{X}_p$ of an operator $X$ is the Schatten $p$-norm, and we
let $\norm{X}$ denote the Schatten $\infty$-norm of $X$, i.e., the operator
norm. We write $X\preceq Y$ if and only if $Y-X$ is positive semidefinite.

\subsection{Quasiprobability decompositions of quantum channels}\label{sec:qpd}
In this work, a QPD of a quantum channel $\calN: \mathsf{L}(\calH)\to
    \sfL(\calH)$ is\footnote{The current definition differs from some prior work in
    that the coefficients $a_i$ need not sum to one, because the $\calT_i$'s can
    introduce weights. This renders the moniker ``quasiprobability'' slightly
    misleading, though we keep the terminology to be consistent with other, more
    closely related, prior work, e.g., ~\cite{piveteau2023circuit}.} a
decomposition of the form
\begin{align}\label{eq:qpd_defn}
    \calN & = \sum_{i=1}^m a_i\ \calT_i \circ \calE_i = \norm{a}_1 \sum_{i=1}^m \frac{|a_i|}{\norm{a}_1} \mathrm{sign}(a_i)\ \calT_i \circ \calE_i.
\end{align}
\Cref{def:space_like_cut} and \Cref{def:time_like_cut} are special cases of this definition. The ingredients of the decomposition in \cref{eq:qpd_defn} are
\begin{itemize}
    \item a vector $a\in \mathbb{R}^m$;
    \item channels $\calE_i: \sfL(\calH)\to \sfL(\calH\otimes \calH_R)$ satisfying the
          desired constraint (i.e., an LOCC channel for space-like cuts, or a
          measure-and-prepare channel for time-like cuts); and
    \item \textit{post-processing operations} $\calT_i: \sfL(\calH\otimes \calH_R)\to\sfL(\calH)$ of the form
          $\calT_i : \rho \mapsto \Tr_R((\mathds{1}\otimes O_i)\rho)$, where $O_i\in\sfL(\calH_R)$ is some observable
          on the ancilla system $R$ satisfying $\norm{O_i}\leq 1$.
\end{itemize}

The key feature of decompositions of this form is that measuring the observable
$\norm{a}_1\ \mathrm{sign}(a_i) (O_i \otimes X)$ on the ensemble of states
$\{(|a_i|/\norm{a}_1,\ \calE_i(\rho))\}_i$ gives an unbiased estimator of
$\Tr(X\calN(\rho))$, for any $\rho\in \sfD(\calH)$ and Hermitian observable
$X\in \sfL(\calH)$. (Here, the index $i$ in the observable corresponds to the
value of $i$ that is drawn when randomly selecting the state in the ensemble to
prepare.)

For example, the local space-like cut we present in \Cref{sec:double_hadamard},
uses decompositions where $R=R_AR_B$ is a pair of ancilla qubits,
$O_i=\sigma_z\otimes \sigma_z$, and $\calE_i=\calV_i\otimes \calW_i$ are local
isometries in the ensemble. We then take the empirical mean of the outcomes
from measuring $\norm{a}_1\mathrm{sign}(a_i)\left[X\otimes
        (\sigma_z)_{R_A}\otimes (\sigma_z)_{R_B}\right]$ on the ensemble of states. For
the time-like cut presented in \Cref{sec:time_like}, we have $d_R=1$ (the
ancilla register is trivial), $O_i=1$, and the channels $\calE_i=\calM_i$ are
measure-and-prepare channels in the ensemble. In either case, taking the
empirical mean of $N$ trials
$\hat{\bm{\mu}}_1,\hat{\bm{\mu}}_2,\dots,\hat{\bm{\mu}}_N$ results in an
unbiased estimator $\hat{\bm{\mu}}$ with variance
$\Var[\hat{\bm{\mu}}]=\Var[\hat{\bm{\mu}}_1]/N$.

The definition of what constitutes a QPD given above contains as special cases
the definition given in Ref.~\cite{piveteau2023circuit} and the ``twisted
channel" construction in Ref.~\cite{zhao2023power}.

\paragraph{1-norm versus sample complexity.}
At first glance, an appropriate quantity to characterize the minimum value of
$N$ required for an accurate estimate would appear to be the 1-norm
$\norm{a}_1$. Indeed, since $\bm{\mu}_1$ clearly has magnitude at most
$\norm{a}_1$ with probability 1 (recall that $X$ is assumed to be bounded in
operator norm), taking $N$ be of the order of $\norm{a}_1^2$ suffices by a
straightforward application of Hoeffding's Inequality. This is the argument
provided in many if not all prior works on circuit cutting and related
applications of QPDs. For a space-like cut of a given channel $\calN$, the
minimum value of the 1-norm, $\gamma(\calN)$, can in turn be lower bounded by
examining the Choi state $J_{\calN}$ of $\calN$ and computing its
\textit{robustness of entanglement} $R(J_{\calN})$ (defined in
\Cref{sec:robustness}) using the results of Ref.~\cite{piveteau2023circuit}.
\begin{claim}[{Essentially~\cite[{Lemma~3.1}]{piveteau2023circuit}}]\label{claim:1_norm_lb}
    Let $\calN_{AB\to AB}$ be a bipartite quantum channel and consider space-like cuts of $\calN$ of the form in \Cref{def:space_like_cut}. It holds that
    \begin{align}
        \gamma(\calN) & \geq 1+2R(J_{\calN}).
    \end{align}
\end{claim}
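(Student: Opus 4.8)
The plan is to reduce the claimed lower bound on $\gamma(\calN)$ to the robustness of entanglement of the Choi state $J_\calN$ by passing through the channel--state (Choi) isomorphism. Suppose $\calN = \sum_{i=1}^m a_i (\id_{AB} \otimes \calT_i) \circ \calE_i$ is any space-like cut as in \Cref{def:space_like_cut}. Apply $\id_{A'B'} \otimes (\cdot)$ to the maximally entangled state $\Phi_{A'A} \otimes \Phi_{B'B}$ (equivalently, feed in halves of Bell states on the $AB$ input). This sends $\calN$ to its Choi state $J_\calN$ and sends each term to $a_i$ times the Choi-like object $(\id \otimes \calT_i)(\sigma_i)$, where $\sigma_i := (\id_{A'B'} \otimes \calE_i)(\Phi_{A'A}\otimes\Phi_{B'B})$ is a state on $A'R_AB'R_B$ (grouping $A'R_A$ with one party and $B'R_B$ with the other). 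The key structural point is that since $\calE_i$ is LOCC between $A$ and $B$, the state $\sigma_i$ is separable across the cut $A'R_A \,|\, B'R_B$; and since $\calT_i = \Tr(O^{(A)}_i \otimes O^{(B)}_i \cdot)$ with $\|O_i\|\le 1$, the partial-trace-with-observable operation applied to a separable state outputs an operator of trace norm at most (roughly) $1$ which is again "separable-like" — more precisely, a real linear combination of (sub-normalized) separable states with total weight controlled by $\|O_i\|\le 1$.

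The heart of the argument is then bookkeeping with the robustness of entanglement. Writing $\tau_i := (\id \otimes \calT_i)(\sigma_i)$, we get $J_\calN = \sum_i a_i \tau_i$. Because $\calT_i$ has operator norm bounded by $1$ and $\sigma_i$ is separable, $\tau_i$ lies in the (real) span of $\mathsf{SEP}(\calH_{A'R_A B'R_B})$ with a controlled "separable weight": one should show $\tau_i = p_i^+ \rho_i^+ - p_i^- \rho_i^-$ with $\rho_i^\pm$ separable states and $p_i^+ + p_i^- \le \|O_i\| \le 1$ (the positive and negative parts arise because $O_i$ need not be positive). Here I would invoke the fact (essentially from \cite{piveteau2023circuit}, and implicit in the robustness literature) that if a Hermitian operator $Y$ with $\Tr Y = 1$ can be written as $Y = \sum_i a_i (p_i^+\rho_i^+ - p_i^-\rho_i^-)$ with $\rho_i^\pm$ separable and $p_i^+ + p_i^- \le 1$, then $\|a\|_1 \ge 1 + 2R(Y)$, where $R$ is the standard (generalized/global) robustness of entanglement. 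Indeed, collecting the positive-coefficient separable pieces and negative-coefficient separable pieces and normalizing yields a decomposition $Y = s_+ \pi_+ - s_- \pi_-$ with $\pi_\pm$ separable states and $s_+ - s_- = 1$, $s_+ + s_- \le \|a\|_1$; this is exactly a robustness-type decomposition witnessing $R(Y) \le s_- = (s_+ + s_- - 1)/2 \le (\|a\|_1 - 1)/2$, which rearranges to $\|a\|_1 \ge 1 + 2R(Y)$. Taking the infimum over all space-like cuts gives $\gamma(\calN) \ge 1 + 2R(J_\calN)$.

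The main obstacle — and the step deserving the most care — is the claim that $\tau_i = (\id\otimes\calT_i)(\sigma_i)$ is genuinely a bounded-weight real combination of separable states on $\calH_{A'R_A B'R_B}$, i.e., that "tracing out part of a separable state against a local observable $O^{(A)}_i \otimes O^{(B)}_i$ with $\|O_i\|\le 1$" does not inflate the separable weight beyond $1$. Concretely, if $\sigma_i = \sum_k q_k \,\alpha_k \otimes \beta_k$ (product states across the cut), then $\tau_i = \sum_k q_k \,\Tr_{R_A}(O^{(A)}_i \alpha_k^{A'R_A})\otimes \Tr_{R_B}(O^{(B)}_i \beta_k^{B'R_B})$ — a sum of tensor products of Hermitian operators, but these factors are not states and can have both signs. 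One must bound $\sum_k q_k \|\Tr_{R_A}(O^{(A)}_i\alpha_k)\|_1 \|\Tr_{R_B}(O^{(B)}_i\beta_k)\|_1 \le \|O^{(A)}_i\|\,\|O^{(B)}_i\| \le 1$ (using $\|\Tr_{R}(O\,\rho)\|_1 \le \|O\|\,\Tr\rho$ for a state $\rho$) and then decompose each signed product factor into its positive/negative parts to express $\tau_i$ as a real combination of (sub-normalized) separable states with total $\ell_1$-weight at most $1$. Once this "separable-weight is non-expansive under the post-processing $\calT_i$" lemma is in hand, the rest is the routine normalization argument above. I expect the cleanest write-up simply cites \cite[Lemma 3.1]{piveteau2023circuit} for the final inequality and supplies the short verification that our slightly more general post-processing maps $\calT_i$ (with $\|O_i\|\le 1$ rather than $O_i$ a measurement) still fit into their framework.
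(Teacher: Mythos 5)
Your proposal is correct and follows essentially the same route as the paper: pass to Choi states, use that the Choi state of each LOCC channel is separable, verify that the post-processing $\calT_i$ does not inflate the total separable weight beyond $\norm{O_i}\leq 1$, and then apply the normalization/robustness lemma (the paper's \Cref{lem:lower_bound_sep_decomp}), which you reprove identically. The only cosmetic difference is that the paper handles the signed observables by expanding $O_i^{(A)}\otimes O_i^{(B)}$ in an eigenbasis and bounding the eigenvalues $|g_j(x,y)|\leq 1$, whereas you use Jordan decompositions of $\Tr_{R}(O\,\alpha)$ together with $\norm{\Tr_R(O\rho)}_1\leq\norm{O}\Tr\rho$ --- these are equivalent bookkeeping devices.
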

We provide a self-contained proof in \Cref{sec:choi_robustness_bounds_1_norm_proof} for completeness.
A similar argument can be used to lower-bound the optimal 1-norm $\gamma_{\uparrow}(\calN)$ for time-like cuts
as well, as in~\cite[{Prop.~4.2}]{brenner2023optimal}. Intuitively, the less entangling an operation, the easier it should
be to replace using a QPD.

There is a danger, however, in assuming a number of samples of the order
$\gamma(\calN)^2$, or $\gamma_{\uparrow}(\calN)^2$, is also \textit{necessary}.
Firstly, as shown in \Cref{thm:rank_dep_wire_cutting}, this conclusion is
demonstrably false in some cases of practical interest: the variance in the
procedure associated with the optimal time-like cut scales at most like
$\gamma_{\uparrow}(\calN)$ for a class of non-trivial observables which
nevertheless satisfy $\norm{X}=1$. Moreover, bounding the 1-norm of the cut in
itself does not constitute an information-theoretic lower bound on the number
of samples required for a procedure of a similar spirit, but perhaps not
utilizing QPDs, to succeed. This raises the natural question: can we rigorously
prove that a QPD-based approach is sample-optimal in a non-trivial setting? We
answer this in the affirmative by showing in \Cref{thm:time_like_lower_bound}
that, for estimating output probabilities, any choice of measure-and-prepare
channels and classical post-processing in \Cref{proc:wire_cutting} requires the
same number of samples, up to a constant factor, as the QPD-based procedure we
give in \Cref{sec:good_mp_channels}.

\subsection{Diagonal 2-designs and the robustness of entanglement}\label{sec:robustness}
A \textit{diagonal $t$-design} on $n$ qubits is a unitary-operator-valued
random variable $\bm{U}\in\mathsf{U}(\mathbb{C}^{2^n})$ satisfying
\begin{align}
    \underset{\bm{U}}{\expct}\ \bm{U}^{\otimes t}X(\bm{U}^\dag)^{\otimes t} = \underset{\bm{\theta}}{\expct}\ V_{\bm{\theta}}^{\otimes t}X(V_{\bm{\theta}}^\dag)^{\otimes t}\quad \forall X\in\mathsf{L}(\mathbb{C}^{2^n})
\end{align}
where $\bm{\theta}=\bm{\theta}_{00\dots 0}\bm{\theta}_{00\dots 1}\dots\bm{\theta}_{11\dots 1}\in [0,2\pi)^{\{0,1\}^n}$ is uniformly random and $V_{\theta}\in\mathsf{U}(\mathbb{C}^{2^n})$ maps $\ket{x}$ to $\ee^{\ii \theta_x} \ket{x}$ for any $x\in\{0,1\}^n$. The implementation we make use of is stated in the following proposition. Here, a \textit{$k$-qubit phase-random circuit} is a circuit in which random diagonal $k$-qubit unitaries are applied to every possible combination of $n\choose k$ wires.
\begin{proposition}[Prop. 2 in~\cite{Nakata_2014}]
    For a system comprising $n$ qubits, a $k$-qubit phase-random circuit is an exact diagonal $t$-design if and only if $\min\{n,\ \lfloor \log t\rfloor + 1\}\leq k$.
\end{proposition}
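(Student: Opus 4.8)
The plan is to diagonalise both twirling maps and reduce the statement to a combinatorial fact about marginals. First I would compute, for a diagonal unitary $V_\theta=\sum_x\ee^{\ii\theta_x}\outerprod{x}{x}$, the matrix elements of $V_\theta^{\otimes t}X(V_\theta^\dag)^{\otimes t}$: the $\outerprod{x_1\cdots x_t}{y_1\cdots y_t}$ entry acquires the factor $\ee^{\ii(\theta_{x_1}+\cdots+\theta_{x_t}-\theta_{y_1}-\cdots-\theta_{y_t})}$, so averaging $\theta$ uniformly annihilates it unless the multiset $\{x_1,\dots,x_t\}$ equals $\{y_1,\dots,y_t\}$; thus the ideal twirl is the projection onto the span of such ``multiset-matched'' matrix units. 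For the $k$-qubit phase-random circuit write $\bm U=\prod_{|S|=k}\bm D_S$ with the $\bm D_S$ independent (and mutually commuting) random diagonal unitaries on the qubits of $S$; then $\bm U$ is diagonal with phase $\sum_S\theta^{(S)}_{x|_S}$ on $\ket x$, and by independence across $S$ the average of $\bm U^{\otimes t}X(\bm U^\dag)^{\otimes t}$ annihilates the $(\vec x,\vec y)$ entry unless, for \emph{every} $S$ with $|S|=k$, the multiset of restrictions $\{x_1|_S,\dots,x_t|_S\}$ equals $\{y_1|_S,\dots,y_t|_S\}$. Comparing the two, the circuit is an exact diagonal $t$-design iff any two length-$t$ multisets over $\{0,1\}^n$ with identical multisets of $S$-restrictions for all $|S|=k$ are equal. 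Identifying a length-$t$ multiset with its non-negative integer ``count vector'' on $\{0,1\}^n$ (summing to $t$) and letting $\mu$ be the difference of two such vectors, this says exactly: the only integer-valued $\mu$ on $\{0,1\}^n$ with $\norm{\mu^+}_1=\norm{\mu^-}_1\le t$ all of whose $k$-qubit marginals vanish is $\mu\equiv0$.

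Second, I would Fourier-expand $\mu=\sum_{T\subseteq[n]}c_T\chi_T$ with $\chi_T(x)=(-1)^{\sum_{i\in T}x_i}$. A short computation gives the $S$-marginal $\mu_S=2^{n-|S|}\sum_{T\subseteq S}c_T\chi_T$, so all $k$-qubit marginals of $\mu$ vanish precisely when $c_T=0$ whenever $|T|\le k$, i.e.\ $\mu$ has Fourier support on degrees $\ge k+1$. Hence the circuit is a diagonal $t$-design iff there is no nonzero integer-valued $\mu$ with Fourier support on degrees $\ge k+1$ and $\norm{\mu^+}_1\le t$.

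Third, for sufficiency of $\min\{n,\lfloor\log t\rfloor+1\}\le k$: if $k=n$ nothing has Fourier degree $\ge n+1$, so the circuit is trivially a design; otherwise the hypothesis forces $\lfloor\log t\rfloor+1\le k$, i.e.\ $t<2^k$, and I claim any nonzero integer $\mu$ with Fourier support on degrees $\ge k+1$ has $\norm{\mu^+}_1\ge 2^k$, which rules out a bad $\mu$. This uses the standard fact that a nonzero $f\colon\{0,1\}^n\to\mathbb R$ with Fourier support on degrees $\ge d$ has $|\operatorname{supp}(f)|\ge 2^d$ (apply the low-degree bound $|\operatorname{supp}(g)|\ge 2^{n-\deg g}$ to $g=f\,\chi_{[n]}$), combined with integrality: $\operatorname{supp}(\mu)$ is the disjoint union of $\operatorname{supp}(\mu^+)$ and $\operatorname{supp}(\mu^-)$, every nonzero entry of an integer vector has modulus $\ge1$, so $2\norm{\mu^+}_1\ge|\operatorname{supp}(\mu)|\ge 2^{k+1}$. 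For necessity, when $k<n$ and $t\ge 2^k$ I would exhibit a bad $\mu$: set $\mu(x)=(-1)^{x_1+\cdots+x_{k+1}}$ if $x_{k+2}=\cdots=x_n=0$ and $\mu(x)=0$ otherwise; expanding $\mathbf 1[x_j=0]=\tfrac12(1+(-1)^{x_j})$ shows $\mu=2^{-(n-k-1)}\sum_{R\subseteq\{k+2,\dots,n\}}\chi_{[k+1]\cup R}$, which has Fourier support on degrees $\ge k+1$, is $\{0,\pm1\}$-valued, and has $\norm{\mu^+}_1=2^k\le t$. Writing $\mu=\mu^+-\mu^-$ and padding both parts with $t-2^k$ extra copies of $0^n$ gives two distinct length-$t$ multisets with the same $k$-qubit restriction-multisets, so the circuit is not a design.

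The main obstacle is the precise form of Steps 2--3: pinning down that the constraint on the count-difference is exactly ``Fourier degree $\ge k+1$'', and then exploiting \emph{integrality} of $\mu$ — not merely that it is real-valued — to upgrade the support-size bound $2^{k+1}$ to the sharp threshold $t<2^k$. Everything else (the two matrix-element computations, the marginal formula, the explicit bad pair, the $k=n$ edge case) is routine once this is in place, modulo citing the classical low-degree support lemma.
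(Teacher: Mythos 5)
This proposition is imported verbatim from Ref.~\cite{Nakata_2014} and the paper gives no proof of it, so there is no in-paper argument to compare against; your proposal is a genuine, self-contained derivation. Having checked it, I believe it is correct. The reduction in Step~1 is sound: both twirls act diagonally on matrix units $\outerprod{\vec x}{\vec y}$ with coefficient $0$ or $1$, the ideal twirl keeping exactly the multiset-matched pairs and the circuit twirl keeping exactly the pairs matched on every $k$-subset restriction, so equality of the two maps is precisely the combinatorial statement about count-difference vectors $\mu$. The marginal formula $\mu_S=2^{n-|S|}\sum_{T\subseteq S}c_T\chi_T$ correctly translates ``all $k$-marginals vanish'' into ``Fourier support on degrees $\geq k+1$'' (using $k\leq n$ so every small $T$ sits inside some $S$). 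The sharp threshold then comes from combining the standard low-degree support lemma (a nonzero function with Fourier support on degrees $\geq d$ has support size $\geq 2^d$, obtained by multiplying by $\chi_{[n]}$) with integrality of $\mu$ to get $2\norm{\mu^+}_1\geq 2^{k+1}$, and the explicit parity-on-a-subcube witness $\mu$ shows tightness when $t\geq 2^k$ and $k<n$; the padding argument and the $k=n$ edge case are handled correctly. The only external ingredient is the low-degree support lemma, which is classical and correctly invoked. Whether this matches the route taken in Ref.~\cite{Nakata_2014} is immaterial for the present paper, which uses only the statement (in the case $t=2$, $k=2$).
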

In particular, in \Cref{sec:time_like} we employ the case $t=2$ in our procedure for implementing time-like cuts, for which random $2$-qubit diagonal gates suffice.

We will often have occasion to examine the robustness of entanglement of the
Choi state of the channels we consider. The robustness of
entanglement
$R(\rho_{AB})$ of a quantum state
$\rho_{AB}\in\calH_{AB}=\calH_A\otimes \calH_B$ is an entanglement measure
which quantifies the amount of ``mixing" with a separable state that is
required in order to bring $\rho_{AB}$ into the set of separable states. It is
defined through
\begin{align}\label{eq:pure_robustness_defn}
    R(\rho_{AB}) = \min\left\{s\geq 0: \frac{1}{1+s}\left(\rho_{AB}+s
    \sigma_-\right)\in \mathsf{Sep}(\calH_{AB}|A,B),\ \sigma_-\in
    \mathsf{Sep}(\calH_{AB}|A,B)\right\}.
\end{align}
In the case where $\rho_{AB}$ is pure a closed-form
characterization of the robustness in terms of its Schmidt
coefficients may be given~\cite{vidal1999robustness}:
\begin{align}\label{eq:pure_state_rob_expr}
    R(\outerprod{\psi}{\psi}) = (\sum_j \lambda_j)^2 - 1
\end{align}
where $\ket{\psi}\in \calH_A\otimes\calH_B$ has a Schmidt
decomposition
\begin{align}\label{eq:schmidt_decomp}
    \ket{\psi}_{AB} = \sum_j \lambda_j \ket{a_j}_A\otimes \ket{b_j}_B
\end{align}
with Schmidt coefficients $\lambda_j$ satisfying
$\lambda_1\geq \lambda_2\geq \dots\geq 0$. The
following theorem is a slight
modification of the construction that appears in~\cite{vidal1999robustness} which
will enable us to give an efficient construction of
good measure-and-prepare channels for the time-like cuts
we consider in this work.
\begin{theorem}[Similar to~\cite{vidal1999robustness}]\label{thm:pure_state_robustness}
    Suppose $\psi = \outerprod{\psi}{\psi}$ is a pure state of the form
    in \cref{eq:schmidt_decomp}. Let $R:=R(\psi)=
        (\sum_j\lambda_j)^2 - 1$ and $\bm{\theta}=\bm{\theta}_1\bm{\theta}_2
        \dots$ be a collection of independent and uniformly random angles
    $\bm{\theta}_j\in [0,2\pi)$.
    Define the random unit vectors
    \begin{align}
        \ket{u_{\bm{\theta}}} := \frac{1}{(1+R)^{1/4}}\sum_j\sqrt{\lambda_j}
        \ \ee^{i\bm{\theta}_j}\ket{a_j},\quad \ket{v_{\bm{\theta}}} :=\frac{1}{(1+R)^{1/4}}
        \sum_j\sqrt{\lambda_j}\ \ee^{i\bm{\theta}_j}\ket{b_j}
    \end{align}
    along with the separable states
    \begin{align}
        \sigma_- := \frac{1}{R}\sum_{k\neq \ell}\lambda_k\lambda_\ell \outerprod{a_k}{a_k}
        \otimes \outerprod{b_\ell}{b_\ell},\quad \sigma_+:=\expct_{\bm{\theta}}
        \outerprod{\overline{u_{\bm{\theta}}}}{\overline{u_{\bm{\theta}}}}
        \otimes\outerprod{v_{\bm{\theta}}}{v_{\bm{\theta}}}.
    \end{align}
    It holds that
    \begin{align}\label{eq:optimal_states}
        \sigma_+ = \frac{1}{1+R}\left(\psi + R\sigma_-\right).
    \end{align}
\end{theorem}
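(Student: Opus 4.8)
\medskip
\noindent\emph{Proof strategy.}\ The plan is to verify \eqref{eq:optimal_states} by a direct computation: expand the two outer products in the definition of $\sigma_+$ in the Schmidt bases $\{\ket{a_j}\}$, $\{\ket{b_j}\}$ of \eqref{eq:schmidt_decomp}, and then average over the i.i.d.\ uniform phases $\bm{\theta}_j$. Writing $\outerprod{\overline{u_{\bm\theta}}}{\overline{u_{\bm\theta}}}\otimes\outerprod{v_{\bm\theta}}{v_{\bm\theta}}$ as a quadruple sum over indices $j,k,\ell,m$, each term carries a factor $\sqrt{\lambda_j\lambda_k\lambda_\ell\lambda_m}$, a scalar $(1+R)^{-1}$, the operator $\outerprod{a_j}{a_k}\otimes\outerprod{b_\ell}{b_m}$ (with the $A$-basis replaced by its complex conjugate, see below), and a phase $\ee^{\ii(\theta_k+\theta_\ell-\theta_j-\theta_m)}$. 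The only probabilistic input needed is that $\expct_{\bm\theta}\,\ee^{\ii(\theta_k+\theta_\ell-\theta_j-\theta_m)}$ equals $1$ when the multiset $\{k,\ell\}$ coincides with $\{j,m\}$ and vanishes otherwise.

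Next I would enumerate the surviving index patterns. The multisets match in exactly two ways: the ``straight'' pairing $j=k,\ \ell=m$, and the ``crossed'' pairing $j=\ell,\ k=m$; these overlap precisely on the fully diagonal locus $j=k=\ell=m$, which must be subtracted once by inclusion--exclusion. The straight pairing contributes $\sum_{j,\ell}\lambda_j\lambda_\ell\,\outerprod{a_j}{a_j}\otimes\outerprod{b_\ell}{b_\ell}$, whose $j\neq\ell$ part is exactly $R\,\sigma_-$ by the definition of $\sigma_-$ and whose $j=\ell$ part is $\sum_j\lambda_j^2\,\outerprod{a_j}{a_j}\otimes\outerprod{b_j}{b_j}$. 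The crossed pairing contributes $\sum_{j,k}\lambda_j\lambda_k\,\outerprod{a_j}{a_k}\otimes\outerprod{b_j}{b_k}=\bigl(\sum_j\lambda_j\ket{a_j}\otimes\ket{b_j}\bigr)\bigl(\sum_k\lambda_k\bra{a_k}\otimes\bra{b_k}\bigr)=\psi$. The inclusion--exclusion correction is $-\sum_j\lambda_j^2\,\outerprod{a_j}{a_j}\otimes\outerprod{b_j}{b_j}$, cancelling the $j=\ell$ part of the straight contribution. Restoring the prefactor $(1+R)^{-1}$ yields $\sigma_+=\tfrac{1}{1+R}(\psi+R\sigma_-)$.

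Two minor points are handled along the way. The complex conjugate on the $A$-register is harmless: $\{\overline{\ket{a_j}}\}_j$ is again an orthonormal basis, so the computation above goes through verbatim with $\ket{a_j}$ replaced by $\overline{\ket{a_j}}$, giving \eqref{eq:optimal_states} with $\psi$ and $\sigma_-$ read in that basis (in our intended application the Schmidt vectors may be taken real, so the distinction disappears). Second, one should record that $\sigma_\pm$ are genuine density operators: $\Tr\sigma_-=\tfrac1R\sum_{k\neq\ell}\lambda_k\lambda_\ell=\tfrac1R\bigl((\sum_j\lambda_j)^2-\sum_j\lambda_j^2\bigr)=1$ using $\sum_j\lambda_j^2=1$ and $R=(\sum_j\lambda_j)^2-1$, and both $\sigma_-$ and $\sigma_+$ are manifestly in $\mathsf{Sep}(\calH_{AB}|A,B)$, so \eqref{eq:optimal_states} indeed has the form required by the definition \eqref{eq:pure_robustness_defn} of the robustness (and in particular certifies $R(\psi)\le R$, consistent with \eqref{eq:pure_state_rob_expr}).

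There is no genuine obstacle here --- the argument is essentially bookkeeping. The one place that warrants care is the phase average: getting the inclusion--exclusion multiplicity of the fully diagonal term right so that it cancels exactly, and consistently tracking the conjugated $A$-basis throughout the quadruple sum.
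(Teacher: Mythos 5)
Your proof is correct, and it is worth noting that the paper itself supplies no proof of this theorem --- it is stated as ``Similar to~\cite{vidal1999robustness}'' and the reader is deferred to the construction there --- so your direct phase-averaging computation is a legitimate, self-contained verification rather than a reproduction of an argument in the text. The bookkeeping checks out: the expectation $\expct_{\bm\theta}\,\ee^{\ii(\theta_k+\theta_\ell-\theta_j-\theta_m)}$ is indeed the indicator of the multiset condition $\{k,\ell\}=\{j,m\}$, the straight and crossed pairings give $R\sigma_-$ plus the diagonal term and $\psi$ respectively, and the inclusion--exclusion subtraction of the fully diagonal locus cancels exactly. One point deserves slightly more emphasis than you give it: the conjugation on the $A$ register is not merely ``harmless'' but essential, since with $\outerprod{u_{\bm\theta}}{u_{\bm\theta}}$ in place of $\outerprod{\overline{u_{\bm\theta}}}{\overline{u_{\bm\theta}}}$ the surviving crossed pairing produces the partial transpose $\psi^\Gamma$ rather than $\psi$, which is not even positive semidefinite. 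And, as you observe, the identity as literally stated holds when the conjugation is read in the Schmidt basis of $A$ (equivalently, when the $\ket{a_j}$ may be taken real in the computational basis); this is consistent with how the paper uses the result, e.g.\ the finite implementation via $\ket{s(\bm U)}=A_2\bm U^\dag A_1\ket{1}_A$ conjugates the phases relative to the Schmidt basis, and in the application of \Cref{sec:good_mp_channels} the Schmidt vectors are computational basis states. With that caveat recorded, I see no gap.
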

We say that two states $\sigma_+$, $\sigma_-$ which satisfy \cref{eq:optimal_states}
for a particular $\psi$ are \textit{optimal} for $\psi$. Though at first glance it may appear that $\sigma_+$ is defined as a convex combination
of an infinite family of difficult-to-implement pure states, we may straightforwardly take this to be
a more tractable, finite set in the following manner. Let $d=2^n$ and suppose $\bm{U}$ is a diagonal 2-design. Also, let $A_1,B_1,A_2,B_2$ be any unitaries which
have the following actions:
\begin{align}
     & A_1:\ket{1}_A\mapsto \frac{1}{(1+R)^{1/4}}\sum_j \sqrt{\lambda_j}\ket{j}_A,
     &                                                                             & B_1: \ket{1}_B\mapsto \frac{1}{(1+R)^{1/4}}\sum_j\sqrt{\lambda_j}\ket{j}_B\nonumber                                    \\
     & A_2:\ket{j}_A\mapsto \ket{a_j}_A,                                           &                                                                                     & B_2:\ket{j}_B\mapsto\ket{b_j}_B.
\end{align}
Then defining the random
unit vectors
\begin{align}
    \ket{s(\bm{U})}:= A_2 \bm{U}^\dag A_1 \ket{1}_A,\quad \ket{t(\bm{U})}:= B_2 \bm{U} B_1\ket{1}_B,
\end{align}
one may verify that
\begin{align}
    \expect{\bm{U}}\ \outerprod{s(\bm{U})}{s(\bm{U})}\otimes
    \outerprod{t(\bm{U})}{t(\bm{U})} & = \expect{\bm{\theta}}\
    \outerprod{\overline{u_{\bm{\theta}}}}{\overline{u_{\bm{\theta}}}}
    \otimes\outerprod{v_{\bm{\theta}}}{v_{\bm{\theta}}}.
\end{align}
Furthermore, as shown in~\cite{Nakata_2014}, a phase-random circuit $\bm{U}$
can be implemented by drawing 2-qubit gates from a
finite set of $6$ gates, independently and uniformly at random, at $O(n^2)$ fixed locations
in the circuit. In summary, one obtains an explicit description
of $\sigma_+$ as a random mixture of at most $6^{O(n^2)}$ pure states. Moreover,
if $A_1,B_1,A_2,$ and $B_2$ can be efficiently implemented, then the entire procedure to
prepare $\sigma_+$ is efficient in $n$.
\section{Space-like cuts}\label{sec:space_like}
In this section we introduce and analyze a procedure for local space-like cuts.
We first give some preliminary definitions in \Cref{sec:decomp} and
\Cref{sec:prod_extent} which will help us bound the cost of the procedure. We
then describe the procedure and bound its cost in \Cref{sec:double_hadamard}
before applying it to the problem of clustered Hamiltonian simulation in
\Cref{sec:hamiltonian_simulation}.

\subsection{Local decompositions of entangling unitaries}\label{sec:decomp}
Fix a bipartite system $AB$ with a corresponding Hilbert space
$\calH_{A}\otimes \calH_B$. A set $\Gamma$ is called a \emph{local
    decomposition} if it is of the form $ \Gamma = \{(c_i, V_i\otimes W_i): i\in
    [m]\} $ for some positive integer $m>0$, where for each $i\in [m]$ it holds
that $c_i\in\mathbb{R}$, $V_i\in\sfU(\calH_A)$, and $W_i\in\sfU(\calH_B)$. (We
omit the term ``local" whenever it is unambiguous and safe to do so.) We say
that a decomposition $\Gamma$ of the above form is \emph{valid} for a unitary
operator $U\in \sfU(\calH_{AB})$ if $U=\sum_{i\in [m]}c_iV_i\otimes W_i$. We
also define the \emph{magnitude} of such a decomposition $\Gamma$ as
\begin{align}
    \phi(\Gamma) := 2\norm{c}_1^2 - \norm{c}_2^2
\end{align}
viewing $(c_i:i\in [m])$ as a column vector in $\mathbb{R}^m$. Finally, we define the \emph{product} of two
decompositions $\Gamma_1=\{(a_i, V^{(1)}_i\otimes W^{(1)}_i):i\in [m_1]\}$ and $\Gamma_2=\{(b_i,V^{(2)}_i\otimes W^{(2)}_i):i\in [m_2]\}$ as
\begin{align}
    \Gamma_1\cdot \Gamma_2 := \{(a_ib_j, V^{(1)}_iV^{(2)}_j\otimes W^{(1)}_iW^{(2)}_j):(i,j)\in [m_1]\times [m_2]\}.
\end{align}
We then have the following straightforward but important observations.
\begin{lemma}\label{lem:prod_of_decomp_valid}
    Let $U,V\in \sfU(\calH_{AB})$. If $\Gamma_1$ and $\Gamma_2$ are valid local decompositions for $U$ and $V$, respectively, then $\Gamma_1\cdot \Gamma_2$ is a valid local decomposition for $UV$.
\end{lemma}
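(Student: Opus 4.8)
The plan is to expand the product $UV$ directly from the two given decompositions and use the mixed-product property of the tensor product. First I would invoke validity: $U = \sum_{i\in[m_1]} a_i\, V^{(1)}_i\otimes W^{(1)}_i$ and $V = \sum_{j\in[m_2]} b_j\, V^{(2)}_j\otimes W^{(2)}_j$. Multiplying these operators and using bilinearity of operator composition gives $UV = \sum_{i\in[m_1]}\sum_{j\in[m_2]} a_i b_j\, (V^{(1)}_i\otimes W^{(1)}_i)(V^{(2)}_j\otimes W^{(2)}_j)$. The one identity that does the work is the mixed-product rule $(V^{(1)}_i\otimes W^{(1)}_i)(V^{(2)}_j\otimes W^{(2)}_j) = (V^{(1)}_i V^{(2)}_j)\otimes(W^{(1)}_i W^{(2)}_j)$, which turns the sum into $UV = \sum_{(i,j)\in[m_1]\times[m_2]} a_i b_j\, (V^{(1)}_i V^{(2)}_j)\otimes(W^{(1)}_i W^{(2)}_j)$, exactly the operator expansion associated to the set $\Gamma_1\cdot\Gamma_2$.

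Next I would check that $\Gamma_1\cdot\Gamma_2$ really is a local decomposition in the sense of \Cref{sec:decomp}: the coefficients $a_i b_j$ are real since $a_i,b_j\in\mathbb{R}$, and $V^{(1)}_i V^{(2)}_j\in\sfU(\calH_A)$, $W^{(1)}_i W^{(2)}_j\in\sfU(\calH_B)$ because the unitary group is closed under multiplication. Combined with the expansion of $UV$ above, this shows $\Gamma_1\cdot\Gamma_2$ is valid for $UV$, completing the proof.

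There is no genuine obstacle here: the statement is a direct consequence of the mixed-product property and closure of $\sfU(\calH_A)$, $\sfU(\calH_B)$ under products. The only point requiring care is bookkeeping — making sure the index set is taken to be $[m_1]\times[m_2]$ and that the pair $(a_ib_j,\,V^{(1)}_iV^{(2)}_j\otimes W^{(1)}_iW^{(2)}_j)$ matches the definition of $\Gamma_1\cdot\Gamma_2$ term by term.
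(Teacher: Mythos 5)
Your proof is correct and is exactly the argument the paper leaves implicit (the lemma is stated there as a "straightforward observation" without a written proof): expand $UV$ bilinearly, apply the mixed-product property $(A\otimes B)(C\otimes D)=(AC)\otimes(BD)$, and use closure of the unitary groups under multiplication. Nothing is missing.
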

\begin{lemma}\label{lem:submult_of_mag}
    Let $\Gamma_1$ and $\Gamma_2$ be as defined above. It holds that
    \begin{align}
    \phi(\Gamma_1\cdot \Gamma_2)=2\norm{a}_1^2\norm{b}_1^2-\norm{a}_2^2\norm{b}_2^2\leq \phi(\Gamma_1)\phi(\Gamma_2)
    \end{align}
    where $a$ and $b$ are the coefficients in $\Gamma_1$ and $\Gamma_2$, respectively. Thus, the magnitude is submultiplicative.
\end{lemma}
\begin{proof}
    Define the column vector $c$ through $c_{ij}:= a_ib_j$. A straightforward algebra reveals that $\norm{c}_1^2 = \norm{a}_1^2\norm{b}_1^2$ and
    $\norm{c}_2^2 = \norm{a}_2^2\norm{b}_2^2$.  It suffices to show the inequality $2\norm{c}_1^2 - \norm{c}_2^2\leq (2\norm{a}_1^2 -
        \norm{a}_2^2)(2\norm{b}_1^2 - \norm{b}_2^2)$.
    Subtracting the left-hand side of the inequality from the right-hand side
    yields
    \begin{align}
        2(\norm{a}_1^2 - \norm{a}_2^2)(\norm{b}_1^2 - \norm{b}_2^2)
        \label{eq:submult}
    \end{align}
    which is nonnegative due to the inequality between norms $\norm{\cdot}_1\geq \norm{\cdot}_2$.
\end{proof}

For any nontrivial
decomposition, \cref{eq:submult} will be strictly positive. This is related to
the fact that cutting multiple gates together reduces the cost, which has been
previously observed in Refs.~\cite{schmitt2023cutting,piveteau2023circuit}.
Here we see additionally that the strict inequality holds by simply taking
products of the unitaries in the original decompositions.

\subsection{The product extent: an operational measure of entanglement}\label{sec:prod_extent}
Making use of the terminology established in the previous section we have the
following definition.
\begin{definition}[Product extent of a unitary]\label{defn:product_extent}
    The \emph{product extent} $\xi(U)$
    of a unitary operator $U\in\sfU(\calH_{AB})$ is defined as the minimum
    of $\phi(\Gamma)$ over all local decompositions $\Gamma$ which are
    valid for $U$.
\end{definition}
This definition relies on the fact that the minimum is always achieved, which in turn relies on the fact that optimal decompositions exist with a bounded number of terms. We prove these facts in \Cref{sec:prod_extent_well_defined}.
In addition to satisfying the desired
criteria for a ``dynamic" entanglement measure~\cite{nielsen2003dynamics},
the product extent is submultiplicative under composition. (See \Cref{lem:product_extent_entanglement_measure}.) This enables one to bound, for example, the product extent of a circuit from knowing the product extent of its individual gates.
The definition of the product extent is motivated by \Cref{thm:bound_on_overhead}, which implies that it is an achievable cost in a simple space-like cutting procedure. We repeat the theorem below for convenience, with a minor addition since we have now defined $\xi(U)$.

\theoremstyle{theorem}
\newtheorem*{T1}{\Cref{thm:bound_on_overhead}}
\begin{T1}[Rephrased]
    Let $U\in\sfU(\calH_{AB})$ be a unitary operation. For any local decomposition $\Gamma$ which is valid for $U$, the double Hadamard test of \Cref{sec:double_hadamard} yields a local space-like cut of $\calU\colon\rho\mapsto U\rho U^\dag$ with two ancilla qubits (i.e., $d_{R_A}=d_{R_B}=2$) and 1-norm $\phi(\Gamma)$. Moreover, if this decomposition is an operator Schmidt decomposition then
    \begin{align}
        \phi(\Gamma) = \xi(U) = \gamma(\calU) = 2\norm{c}_1^2 - 1.
    \end{align}
\end{T1}

The first part of the theorem follows by combining
\Cref{lem:gate_cutting_unbiased_estimator} and
\Cref{lem:gate_cutting_qpd_valid}. The second part follows from
\Cref{claim:1_norm_lb} and \Cref{prop:optimality_condn}. We now compare the
product extent to a previously introduced~\cite{harrow2003robustness}
entanglement measure for unitaries called the Choi-Jamiolkowski robustness.
This quantity is reminiscent of an entanglement measure for quantum states
$\rho$, the \textit{log-negativity}~\cite{plenio2005lognegativity} $\log
    \norm{\rho^\Gamma}_1$, and may be interpreted as an analogous measure for
quantum dynamics.  

\begin{definition}[Choi-Jamiolkowski robustness]\label{defn:choi_robustness}
    The \emph{Choi-Jamilkowski robustness} $R_c(U)$ of a bipartite unitary operator
    $U\in\sfU(\calH_A\otimes \calH_B)$ is defined as $1+2R(J_{\calU})$.
\end{definition}
Up to a constant shift and rescaling factor, this is just the robustness of entanglement of
the Choi state of the channel $\calU:\rho\mapsto U\rho U^\dag$. This is similar to how one might define robustness for channels relative to entanglement breaking channels except that we relax the constraint that the states in the mixture must be Choi states of valid channels.
The Choi-Jamiolkowski robustness may
be equivalently expressed as
\begin{align}
    \begin{aligned} R_c(U)=\min  \quad  & \norm{a}_1                           \\
                \textrm{s.t.} \quad & J_U=\sum_j a_j\rho_j\otimes \sigma_j \\
                                    & \rho_j\in \mathsf{D}(\calH_A)        \\
                                    & \sigma_j\in\mathsf{D}(\calH_B)
    \end{aligned}
\end{align}
by collecting terms appropriately.
The following proposition gives an easily
computable expression for the solution to this optimization problem.
\begin{proposition}\label{prop:choi_rob_formula}
    For any bipartite unitary operator $U\in\sfU(\calH_A\otimes \calH_B)$
    it holds that
    \begin{align}\label{eq:choi_robustness_unitary_special}
        R_c(U) & = 2(d_Ad_B)^{-1}{\lVert (UF)^\Gamma\rVert}_1^2 - 1.
    \end{align}
\end{proposition}
\begin{proof}
    Any operator $X\in\sfL(\calH_{A}\otimes\calH_B)$ has an operator Schmidt decomposition of
    the form
    \begin{align}
        X = \sum_j \lambda_j A_j\otimes B_j
    \end{align}
    where $\Tr(A_j^\dag A_k) = d_A \delta_{jk}$, $\Tr(B_j^\dag B_k) = d_B \delta_{jk}$,
    and $\sum_j \lambda_j^2 = 1$. Taking $U=X$, a simple algebra shows
    \begin{align}
        (UF)^{\Gamma} = \sum_j\lambda_j \vect(A_j)\vect(\overline{B_j})^\dag.
    \end{align}
    Using the orthogonality of the operators $A_j$ and $B_j$, we find
    that the singular vectors of $(UF)^\Gamma$ are proportional to
    $\vect(A_j)$ and $\vect(\overline{B}_j)$, and hence the singular values
    of $(UF)^\Gamma$ are $\sqrt{d_A d_B}\lambda_j$. Therefore, the right-hand side
    of \cref{eq:choi_robustness_unitary_special} is equal to $2(\sum_j \lambda_j)^2 - 1$.
    Moreover, it may be straightforwardly verified that $\lambda_j$ are the
    Schmidt coefficients of the pure state corresponding to the unit
    vector $(\mathds{1}_{A^\prime B^\prime}\otimes U_{AB})\ket{\Phi}_{A^\prime B^\prime AB}$. The claim then follows from \cref{eq:pure_state_rob_expr}
    and the fact that we are defining the Choi-Jamiolkowski robustness
    such that $R_c(U) = 1+2 R(J_U) = 1+2 ((\sum_j\lambda_j)^2 - 1) = 2(\sum_j\lambda_j)^2 -1$.
\end{proof}
Clearly, $R_c(U)\leq
    2d_Ad_B-1$, with equality if $U$ is a dual unitary operator such as a
SWAP operation. We can relate the product extent to the Choi-Jamiolkowski robustness as follows.
\begin{lemma}\label{lem:extent_robustness_bounds}
    For any bipartite unitary operator $U\in\sfU(\calH_A\otimes \calH_B)$
    it holds that
    \begin{align}
        1\leq R_c(U) \leq \xi(U)\leq 2 d_A^2d_B^2 - 1.
    \end{align}
\end{lemma}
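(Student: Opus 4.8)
The plan is to establish the chain of inequalities one link at a time, since each has a distinct character. The leftmost inequality $1 \le R_c(U)$ is immediate: $R_c(U) = 1 + 2R(J_\calU)$ and the robustness of entanglement is nonnegative, with $R(J_\calU) = 0$ exactly when $J_\calU$ is separable. The rightmost inequality $\xi(U) \le 2d_A^2 d_B^2 - 1$ should follow by exhibiting an explicit valid local decomposition and bounding its magnitude $\phi$. A natural candidate is to expand $U$ in a basis of products of (generalized Pauli / Heisenberg–Weyl) unitaries $\{P_a \otimes Q_b\}$, which form an orthogonal basis of $\sfL(\calH_A)\otimes\sfL(\calH_B)$ consisting of unitaries; writing $U = \sum_{a,b} c_{ab}\, P_a\otimes Q_b$ with $m = d_A^2 d_B^2$ terms, we have by Cauchy–Schwarz $\|c\|_1^2 \le m\|c\|_2^2$, and $\|c\|_2^2$ is fixed by $\|U\|_2^2 = d_A d_B$ normalized appropriately (indeed $\sum |c_{ab}|^2 = 1$ with the right normalization of the Paulis). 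This gives $\phi = 2\|c\|_1^2 - \|c\|_2^2 \le 2m - 1 = 2d_A^2 d_B^2 - 1$, hence $\xi(U) \le 2d_A^2 d_B^2 - 1$.

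The substantive link is the middle inequality $R_c(U) \le \xi(U)$, and this is where I expect the main work to lie. The approach is to take any valid local decomposition $\Gamma = \{(c_i, V_i\otimes W_i)\}$ achieving (or approaching) $\xi(U) = \phi(\Gamma) = 2\|c\|_1^2 - \|c\|_2^2$ and convert it into a feasible point of the optimization defining $R_c(U)$, namely a decomposition $J_U = \sum_j a_j \rho_j \otimes \sigma_j$ into (subnormalized) product states with $\|a\|_1$ small. Since $J_\calU \propto (\id \otimes \calU)(\Phi) = \outerprod{\psi_U}{\psi_U}$ where $\ket{\psi_U} = (\mathds{1}\otimes U)\ket{\Phi}$, and $U = \sum_i c_i V_i\otimes W_i$ pushes through to $\ket{\psi_U} = \sum_i c_i (\mathds{1}\otimes V_i)\ket{\Phi}_{A'A} \otimes (\mathds{1}\otimes W_i)\ket{\Phi}_{B'B}$, i.e. a superposition of product vectors across the cut $A'A \,|\, B'B$. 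Expanding $\outerprod{\psi_U}{\psi_U} = \sum_{i,i'} c_i \bar c_{i'}\, \outerprod{\alpha_i}{\alpha_{i'}} \otimes \outerprod{\beta_i}{\beta_{i'}}$, the diagonal terms $i = i'$ are already product states with total weight $\|c\|_2^2 = 1$ (after normalization), while the off-diagonal terms must be symmetrized and resolved into product states; the standard trick (as in the robustness computation in \Cref{thm:pure_state_robustness}, or in \cite{vidal1999robustness}) is to use identities like $\outerprod{\alpha}{\alpha'} + \outerprod{\alpha'}{\alpha} = \tfrac12\sum_{\pm}\pm\outerprod{\alpha\pm\alpha'}{\alpha\pm\alpha'}$ applied tensor-wise, which converts each off-diagonal pair into a signed combination of four product states. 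Bookkeeping the resulting 1-norm should yield exactly $2\|c\|_1^2 - \|c\|_2^2 = \phi(\Gamma)$ (the cross terms contributing $2(\|c\|_1^2 - \|c\|_2^2)$ and the diagonal $\|c\|_2^2$), giving a feasible point for the $R_c(U)$ program of cost $\phi(\Gamma)$, hence $R_c(U) \le \phi(\Gamma)$; infimizing over $\Gamma$ gives $R_c(U)\le\xi(U)$.

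The main obstacle is making the product-vector resolution of the off-diagonal terms precise while tracking the 1-norm sharply enough to land on $\phi(\Gamma)$ rather than a lossier bound — one has to be careful that the four-way split $\outerprod{\alpha_i\pm\alpha_{i'}}{\cdot}\otimes\outerprod{\beta_i\pm\beta_{i'}}{\cdot}$ does not overcount, and that signs and normalization constants (the $(d_Ad_B)^{-1}$ factors from the Bell states, and the fact that $\rho_j,\sigma_j$ need only be positive semidefinite with their own trace absorbed into $a_j$) are handled consistently with \Cref{defn:choi_robustness}. An alternative, possibly cleaner route for this middle step is to invoke \Cref{claim:1_norm_lb}, which gives $R_c(U) = 1 + 2R(J_\calU) \le \gamma(\calU)$, together with the first part of \Cref{thm:bound_on_overhead} (restated as \textbf{T1}), which shows $\gamma(\calU) \le \phi(\Gamma)$ for every valid $\Gamma$ via the double Hadamard test being an explicit space-like cut of that 1-norm; infimizing over $\Gamma$ then gives $\gamma(\calU)\le\xi(U)$ and hence $R_c(U)\le\xi(U)$ without redoing the product-state combinatorics. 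I would present the argument via this second route, as it reuses machinery already in place, and relegate the direct decomposition only if a self-contained proof is wanted.
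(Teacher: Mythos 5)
Your proposal is correct, and for the two outer inequalities it coincides with the paper's proof: $1\le R_c(U)$ from nonnegativity of the robustness, and $\xi(U)\le 2d_A^2d_B^2-1$ from the discrete Weyl expansion with $\norm{c}_2^2=1$ and $\norm{c}_1^2\le d_A^2d_B^2$. For the middle inequality, your first (direct) route is exactly the paper's argument: push $U=\sum_i c_iV_i\otimes W_i$ through to $J_U=\sum_{i,j}c_ic_j\outerprod{a_i}{a_j}\otimes\outerprod{b_i}{b_j}$ with $\ket{a_i}=V_i\ket{\Phi}_{AA'}$, $\ket{b_i}=W_i\ket{\Phi}_{BB'}$, keep the diagonal (weight $\norm{c}_2^2$) and resolve each off-diagonal pair into four signed product states (weight $2\sum_{i\ne j}|c_ic_j|$), giving total 1-norm $2\norm{c}_1^2-\norm{c}_2^2=\phi(\Gamma)$. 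The one point you flag as the ``main obstacle'' is indeed where your sketch needs repair: the real identity $\outerprod{\alpha}{\alpha'}+\outerprod{\alpha'}{\alpha}=\tfrac12\sum_\pm\pm\outerprod{\alpha\pm\alpha'}{\alpha\pm\alpha'}$ applied independently to the two tensor factors produces unwanted cross terms of the form $\outerprod{a_i}{a_j}\otimes\outerprod{b_j}{b_i}$; the correct identity couples the phases on the two factors through the fourth roots of unity, $\sum_{p\in\mathbb{Z}_4}(-1)^p\outerprod{\alpha_{ij}^p}{\alpha_{ij}^p}\otimes\outerprod{\beta_{ij}^p}{\beta_{ij}^p}=\outerprod{a_i}{a_j}\otimes\outerprod{b_i}{b_j}+\outerprod{a_j}{a_i}\otimes\outerprod{b_j}{b_i}$ with $\ket{\alpha_{ij}^p}\propto\ket{a_i}+\ii^p\ket{a_j}$, which cancels those terms while still using only four product states per pair and the same 1-norm. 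Your preferred alternative route --- $R_c(U)=1+2R(J_{\calU})\le\gamma(\calU)\le\phi(\Gamma)$ via \Cref{claim:1_norm_lb} and the first part of \Cref{thm:bound_on_overhead} --- is also valid and non-circular, since the first part of that theorem rests only on \Cref{lem:gate_cutting_unbiased_estimator} and \Cref{lem:gate_cutting_qpd_valid}, neither of which uses this lemma (only the optimality statement, \Cref{prop:optimality_condn}, depends on it). What the paper's direct route buys is a self-contained, purely static statement about the Choi state that does not import the operational machinery of \Cref{sec:double_hadamard}; what your route buys is avoiding the product-state combinatorics entirely at the cost of depending on the correctness of the double Hadamard test analysis.
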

\begin{proof}
    The first inequality is clear from the definition of the
    Choi-Jamiolkowski robustness. The final inequality can be seen by
    writing $U$ in the discrete Weyl (generalized Pauli) basis: if
    the vector $\alpha$ satisfies $U=\sum_{j=1}^{d_A^2 d_B^2}\alpha_j
        P_j\otimes Q_j$ for $P_j,Q_j$ discrete Weyl operators then
    \begin{align}
        1 = \frac{\Tr(U^\dag U)}{d_Ad_B} = \norm{\alpha}_2^2.
    \end{align}
    The maximum value of
    $\norm{\alpha}_1^2$ given this constraint is attained when
    $\alpha_j=1/d_A d_B$ for every $j$, so $\norm{\alpha}_1^2 \leq
        (d_Ad_B)^2$. It remains to prove that $R_c(U)\leq \xi(U)$. To this
    end, we show in the following paragraph that if $U=\sum_jc_j
        V_j\otimes W_j$ then a decomposition of the Choi state $J_U$ of the
    form $J_U=\sum_j a_j \rho_j\otimes \sigma_j$ exists with
    $\norm{a}_1 = 2\norm{c}_1^2-\norm{c}_2^2$, which
    proves the claim.

    Let
    \begin{align}
        \ket{a_j} := V_j \ket{\Phi}_{AA^\prime},\quad
        \ket{b_j}:=W_j\ket{\Phi}_{BB^\prime},
    \end{align}
    so that we may write
    \begin{align}
        J_U & = \sum_{ij}c_ic_j \outerprod{a_i}{a_j}\otimes \outerprod{b_i}{b_j} \\
            & = \sum_i c_i^2 \outerprod{a_i}{a_i}\otimes \outerprod{b_i}{b_i} +
        \sum_{i\neq j}c_ic_j \outerprod{a_i}{a_j}\otimes \outerprod{b_i}{b_j}.
    \end{align}
    We can decompose the ``off-diagonal" terms in the second sum using the following
    identity, which also appears in~\cite{eddins2022doubling} (and implicitly
    in~\cite{Mitarai2021overhead}):
    \begin{align}
        \sum_{p\in
            \mathbb{Z}_4}(-1)^p\outerprod{\alpha_{ij}^p}{\alpha_{ij}^p}\otimes
        \outerprod{\beta_{ij}^p}{\beta_{ij}^p} = \outerprod{a_i}{a_j}\otimes
        \outerprod{b_i}{b_j} + \outerprod{a_j}{a_i}\otimes \outerprod{b_j}{b_i}
    \end{align}
    where we have set
    \begin{align}
        \ket{\alpha_{ij}^p}:=\frac{1}{\sqrt{2}}\left(\ket{a_i}+\ii^p\ket{a_j}\right),\quad
        \ket{\beta_{ij}^p}=\frac{1}{\sqrt{2}}\left(\ket{b_i}+\ii^p\ket{b_j}\right).
    \end{align}
    Hence, we have
    \begin{align}
        J_U & = \sum_i c_i^2 \outerprod{a_i}{a_i}\otimes \outerprod{b_i}{b_i} +
        \frac{1}{2}\sum_{p\in
            \mathbb{Z}_4}(-1)^p\sum_{i\neq j}c_ic_j \outerprod{\alpha_{ij}^p}{\alpha_{ij}^p}\otimes
        \outerprod{\beta_{ij}^p}{\beta_{ij}^p}.
    \end{align}
    The sum of the absolute values of the coefficients in the above is equal to
    \begin{align}
        \norm{c}_2^2+2\sum_{i\neq j}|c_ic_j| & = 2\norm{c}_1^2 -
        \norm{c}_2^2
    \end{align}
    as desired.
\end{proof}

We conclude this discussion by showing that the product extent satisfies the desirable properties mentioned at the beginning of this section.

\begin{lemma}\label{lem:product_extent_entanglement_measure}
    The product extent
    satisfies: \begin{enumerate}[label=\roman*)] \item \emph{Faithfulness}:
              $\xi(U)=1$ iff $U$ is a product of local unitaries.  \item \emph{Local
                  unitary invariance}: $\xi((V_A\otimes V_B) U (W_A\otimes W_B)) =
                  \xi(U)$.  \item \emph{Submultiplicativity}: $\xi(UV)\leq \xi(U)\xi(V)$.
    \end{enumerate} \end{lemma}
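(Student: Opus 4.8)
The plan is to prove the three properties in order of increasing difficulty, reusing the structural facts from \Cref{sec:decomp} (validity of products of decompositions and submultiplicativity of the magnitude) together with \Cref{lem:extent_robustness_bounds}.

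I would first handle \emph{local unitary invariance}. Given a valid local decomposition $\Gamma=\{(c_i,V_i\otimes W_i):i\in[m]\}$ for $U$, the set $\{(c_i,(V_AV_iW_A)\otimes(V_BW_iW_B)):i\in[m]\}$ is a valid local decomposition for $(V_A\otimes V_B)U(W_A\otimes W_B)$ with the \emph{same} coefficient vector, hence the same magnitude. This shows $\xi((V_A\otimes V_B)U(W_A\otimes W_B))\le\xi(U)$, and applying the observation to the inverse local unitaries gives the reverse inequality. For \emph{submultiplicativity} I would take decompositions $\Gamma_1,\Gamma_2$ valid for $U,V$ that attain the minima $\phi(\Gamma_1)=\xi(U)$, $\phi(\Gamma_2)=\xi(V)$ (these minimizers exist by \Cref{sec:prod_extent_well_defined}); then $\Gamma_1\cdot\Gamma_2$ is valid for $UV$ by \Cref{lem:prod_of_decomp_valid}, so $\xi(UV)\le\phi(\Gamma_1\cdot\Gamma_2)\le\phi(\Gamma_1)\phi(\Gamma_2)=\xi(U)\xi(V)$ using \Cref{lem:submult_of_mag}.

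For \emph{faithfulness}, one direction is immediate: if $U=V_A\otimes V_B$ with $V_A,V_B$ unitary, the one-term decomposition $\{(1,V_A\otimes V_B)\}$ has magnitude $2-1=1$, and since $\xi(U)\ge R_c(U)\ge 1$ by \Cref{lem:extent_robustness_bounds} we get $\xi(U)=1$. For the converse, $\xi(U)=1$ forces $R_c(U)=1$ by the same lemma; writing $U$ in an operator Schmidt decomposition $U=\sum_j\lambda_j A_j\otimes B_j$ with $\lambda_j\ge 0$ and $\sum_j\lambda_j^2=1$, the computation behind \Cref{eq:choi_robustness_unitary_special} gives $R_c(U)=2(\sum_j\lambda_j)^2-1$, so $\sum_j\lambda_j=1$. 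Combined with $\sum_j\lambda_j^2=1$ this forces exactly one nonzero coefficient $\lambda_1=1$, i.e.\ $U=A_1\otimes B_1$. Since $U$ is unitary, $(A_1^\dag A_1)\otimes(B_1^\dag B_1)=\mathds{1}$, which forces $A_1^\dag A_1=c\,\mathds{1}$ and $B_1^\dag B_1=c^{-1}\mathds{1}$ for some $c>0$, and after rescaling $U=(c^{-1/2}A_1)\otimes(c^{1/2}B_1)$ is a product of local unitaries.

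The only step that needs genuine thought is the tail of the faithfulness converse: upgrading ``operator Schmidt rank one plus unitarity'' to an honest tensor product of unitary operators, since the Schmidt operators $A_j,B_j$ are a priori only Hilbert--Schmidt normalized rather than unitary. Everything else follows directly from the lemmas already proved in \Cref{sec:decomp} and \Cref{sec:prod_extent}, so I expect the write-up to be short.
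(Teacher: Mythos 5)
Your proposal is correct and follows essentially the same route as the paper: local unitary invariance of the feasible set, submultiplicativity of the magnitude $\phi$ via \Cref{lem:prod_of_decomp_valid} and \Cref{lem:submult_of_mag}, and faithfulness via $1\le R_c(U)\le\xi(U)$ from \Cref{lem:extent_robustness_bounds}. The only difference is that the paper simply cites faithfulness of $R_c$ as a known fact, whereas you supply the (correct) argument from the formula $R_c(U)=2(\sum_j\lambda_j)^2-1$ and the rescaling step that upgrades a Schmidt-rank-one unitary to a tensor product of unitaries.
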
 \begin{proof} To show faithfulness of $\xi$, we
    make use of \Cref{lem:extent_robustness_bounds} along with the fact
    that $R_c(U)$ is faithful, from which it follows that if $\xi(U)=1$ then $U$
    must be a product of local unitaries. The other direction is
    straightforward. Local unitary invariance follows from the local unitary
    invariance of the feasible set in the optimization problem which defines
    $\xi$. Submultiplicativity follows from the definition of $\xi$ since the magnitude $\phi$ is submultiplicative.
\end{proof}

\subsection{Sufficient conditions for optimality of the local space-like cut}
By \Cref{claim:1_norm_lb} we have that $R_c(U)$ bounds from below the 1-norm in
any space-like cut of $\calU\colon \rho\to U\rho U^\dag$. Hence, we say that a
space-like cut of $\calU$ into LOCC channels whose 1-norm saturates this bound
is optimal. Remarkably, in many cases our procedure --- which only makes use of
local unitary operations and does not use classical communication --- achieves
this notion of optimality, which addresses an open question from
Ref.~\cite{piveteau2023circuit}.
\begin{proposition}\label{prop:optimality_condn}
    Suppose $U\in\sfU(\calH_A\otimes \calH_B)$
    admits an operator Schmidt decomposition whose
    Schmidt operators are each proportional to some unitary
    operator. Then $\xi(U) = R_c(U)$.
\end{proposition}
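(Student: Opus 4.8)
The plan is to establish the two inequalities $R_c(U)\le\xi(U)$ and $\xi(U)\le R_c(U)$ separately. The first is already available: it is exactly the middle inequality of \Cref{lem:extent_robustness_bounds} (which in turn rests on \Cref{claim:1_norm_lb}). So the content of the argument is the reverse bound $\xi(U)\le R_c(U)$, and since $\xi(U)$ is by definition (\Cref{defn:product_extent}) the minimum of $\phi(\Gamma)$ over valid local decompositions $\Gamma$, it suffices to exhibit a single valid local decomposition with $\phi(\Gamma)=R_c(U)$. The natural candidate is the hypothesised operator Schmidt decomposition itself.

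Concretely, I would start from an operator Schmidt decomposition $U=\sum_j\lambda_j A_j\otimes B_j$ with $\Tr(A_j^\dag A_k)=d_A\delta_{jk}$, $\Tr(B_j^\dag B_k)=d_B\delta_{jk}$, $\lambda_j>0$, $\sum_j\lambda_j^2=1$, in which, by assumption, each $A_j$ and each $B_j$ is proportional to a unitary. The first (tiny) step is to note that the orthonormality conditions force these proportionality constants to have modulus one: if $A_j=c\,V_j$ with $V_j$ unitary, then $A_j^\dag A_j=|c|^2\mathds{1}_A$, and taking the trace gives $|c|^2 d_A=d_A$, so $A_j$ is itself unitary, and likewise $B_j$. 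Hence $\Gamma:=\{(\lambda_j,\,A_j\otimes B_j):j\}$ is a valid local decomposition of $U$ in the sense of \Cref{sec:decomp}, with magnitude $\phi(\Gamma)=2\norm{\lambda}_1^2-\norm{\lambda}_2^2=2\bigl(\sum_j\lambda_j\bigr)^2-1$, using $\norm{\lambda}_2^2=\sum_j\lambda_j^2=1$ and $\lambda_j>0$. On the other hand, the proof of \Cref{eq:choi_robustness_unitary_special} shows that the operator Schmidt coefficients $\lambda_j$ of $U$ (which are unique up to ordering, being singular values of a reshaping of $U$, hence are exactly the numbers above) satisfy $R_c(U)=2\bigl(\sum_j\lambda_j\bigr)^2-1$. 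Therefore $\phi(\Gamma)=R_c(U)$, so $\xi(U)\le\phi(\Gamma)=R_c(U)$, and combined with $R_c(U)\le\xi(U)$ we conclude $\xi(U)=R_c(U)$.

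I do not anticipate a genuine obstacle here; the only thing to be careful about is not to invoke the ``moreover'' part of \Cref{thm:bound_on_overhead} in the argument, since that part is itself proved using this proposition — the lower bound $R_c(U)\le\xi(U)$ must instead be taken from \Cref{lem:extent_robustness_bounds}/\Cref{claim:1_norm_lb}, which have independent proofs. It is worth noting that, once the equality is in hand, feeding $\Gamma$ into the double Hadamard test and applying \Cref{thm:bound_on_overhead} upgrades the conclusion to $\xi(U)=R_c(U)=\gamma(\calU)$, i.e.\ the resulting space-like cut is optimal in $1$-norm, which is the point of the ``sufficient conditions for optimality'' discussion preceding the proposition.
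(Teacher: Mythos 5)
Your proposal is correct and follows essentially the same route as the paper: take the lower bound $R_c(U)\le\xi(U)$ from \Cref{lem:extent_robustness_bounds}, then feed the operator Schmidt decomposition itself (whose operators are genuinely unitary, as your modulus-one observation confirms) into the definition of $\xi$ to get $\xi(U)\le 2(\sum_j\lambda_j)^2-1=R_c(U)$. The only difference is that you make explicit the small normalization step and the non-circularity with \Cref{thm:bound_on_overhead}, both of which the paper leaves implicit.
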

\begin{proof}
    By \Cref{lem:extent_robustness_bounds}, we have $\xi(U)\geq R_c(U)$.
    Let $U=\sum_j \lambda_j V_j\otimes W_j$ be the operator Schmidt
    decomposition of $U$, such that $\Tr(V_j^\dag V_k)=d_A \delta_{jk}$
    and $\Tr(W_j^\dag W_k)=d_B \delta_{jk}$, and $\sum_j\lambda_j^2 = 1$,
    and $V_j$, $W_j$ are unitary. Then
    \begin{align}
        \xi(U) & \leq 2(\sum_j\lambda_j)^2 - \sum_j\lambda_j^2 \\
               & = 2(\sum_j\lambda_j)^2-1                      \\
               & = 1+2R(J_U)                                   \\
               & =R_c(U).\qedhere
    \end{align}
\end{proof}
We remark that 2-qubit gates, generalized SWAP operations, products of transversal 2-qubit gates, and certain
controlled-Pauli operations all fall into the required category of unitary.

\subsection{The double Hadamard test}\label{sec:double_hadamard}

In this section we describe the procedure which leads to
\Cref{thm:bound_on_overhead}. In addition to yielding the bound in the
statement of the theorem, this procedure allows us to give explicit
descriptions of the required products of local unitaries in the relevant QPD.

\begin{figure}
    \centering
    \begin{subfigure}[c]{0.28\textwidth}
        \includegraphics[width=\textwidth]{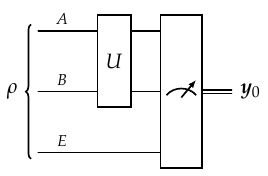}
    \end{subfigure}
    \hspace{2em}\begin{large}$\longleftrightarrow$\end{large}\hspace{2em}
    \begin{subfigure}[c]{0.53\textwidth}
        \includegraphics[width=\textwidth]{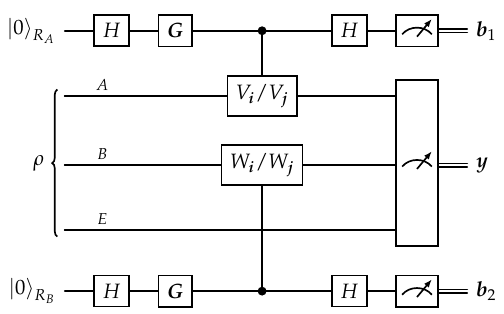}
    \end{subfigure}
    \caption{\label{fig:double_hadamard_test} \textbf{Left}: a bipartite unitary operation $U$ acting on a subsystem $AB$ of the system $ABE$ followed by measurement of an observable $X$, yielding an outcome $\bm{y}_0\in \mathrm{spec}(X)$. \textbf{Right}: an ensemble of
        ``double Hadamard test''
        circuits for space-like cutting of the unitary making use of the decomposition
        $U=\sum_i c_i V_i\otimes W_i$. The ensemble is generated
        by the random variables $\bm{i}$, $\bm{j}$, and $\bm{G}$,
        as described in \Cref{sec:double_hadamard}. The estimator
        described in \Cref{lem:gate_cutting_unbiased_estimator} is
        constructed from the measurement outcomes from these random
        circuits, and bounding the variance of this estimator leads to
        \Cref{thm:bound_on_overhead}.}
\end{figure}

Fix a positive integer $m\in \mathbb{Z}$ as well as a valid local decomposition
$\Gamma$ for the unitary $U\in\mathsf{U}(\calH_{AB})$ of the form
$\Gamma=\{(c_i, V_i\otimes W_i):i\in [m]\}$, such that $U=\sum_{i=1}^m c_i
    V_i\otimes W_i$ and $c_i>0$ for each $i\in [m]$. Note that this positivity
requirement is without loss of generality compared to the decompositions
appearing previously since the sign of each $c_i$ can be absorbed into the
unitary operators. We define the \emph{setting random variables}
$\bm{i},\bm{j}\in [m]$ and $\bm{g}\in \{0,1\}$ with joint probability mass
function (PMF) given by
\begin{align}\label{eq:algo_prob_dist}
    p(i,j,g) =
    \begin{cases}
        0                       & \text{if}\ i=j\ \text{and}\ g=1 \\
        c_ic_j\phi(\Gamma)^{-1} & \text{otherwise}
    \end{cases}
\end{align}
Note that this is a valid PMF since
\begin{align}
    \sum_{ijg} p(i,j,g) = \phi(\Gamma)^{-1}\left(\sum_i c_i^2 + 2\sum_{i\neq j}c_ic_j\right) = \phi(\Gamma)^{-1}(2\norm{c}_1^2-\norm{c}_2^2) = 1.
\end{align}
Using the setting random variables, we define a corresponding
\textit{random} local unitary circuit acting on $AB$ and a
pair of local ancilla qubits $R_AR_B$ according to \Cref{fig:double_hadamard_test}. These circuits have a nearly
identical form to that for two simultaneous Hadamard tests
except all qubits are measured. Here, $\bm{G}=\mathds{1}$
if $\bm{g}=0$ and $\bm{G}=S$ (a single-qubit phase gate) otherwise, and
\begin{align}
    \begin{tikzcd}[ampersand replacement=\&]
        \& \ctrl{1} \& \qw \\
        \& \gate{V_{\bm{i}}/V_{\bm{j}}} \& \qw
    \end{tikzcd} & = \outerprod{0}{0}\otimes V_{\bm{i}} + \outerprod{1}{1}\otimes V_{\bm{j}},
    \quad
    \begin{tikzcd}[ampersand replacement=\&]
        \& \ctrl{1} \& \qw \\
        \& \gate{W_{\bm{i}}/W_{\bm{j}}} \& \qw
    \end{tikzcd} & = \outerprod{0}{0}\otimes W_{\bm{i}} + \outerprod{1}{1}\otimes W_{\bm{j}}.
\end{align}
The following lemma implies that this ensemble of operations allows one to estimate an expectation value with respect to the output of the original circuit using a number of samples at most on the order of $\phi(\Gamma)^2\norm{X}^2$, where $X$ is the observable of interest.
\begin{lemma}\label{lem:gate_cutting_unbiased_estimator}
    Let $X\in\sfL(\calH_{ABE})$ be a Hermitian observable and $\rho\in\sfD(\calH_{ABE})$ be a quantum state. Define $\bm{y}\in \mathrm{spec}(X)$ and $\bm{b}=\bm{b}_1\bm{b}_2\in\{0,1\}^2$ to be the random variables obtained from measuring $X$ on the register $ABE$ and measuring $R_AR_B$ in the computational basis, respectively, on the output of the random circuit in \Cref{fig:double_hadamard_test}. It holds that $\hat{\bm{\mu}}:= \phi(\Gamma) (-1)^{\bm{g}+\bm{b}_1+\bm{b}_2}\ \bm{y}$
    is an unbiased estimator of $\mu:=\Tr(X U_{AB} \rho_{ABE} U_{AB}^\dag)$.
\end{lemma}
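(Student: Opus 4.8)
The plan is to unfold the random circuit of \Cref{fig:double_hadamard_test} explicitly, compute the conditional expectation of $(-1)^{\bm{b}_1+\bm{b}_2}\bm{y}$ given each value $(\bm{i},\bm{j},\bm{g})=(i,j,g)$ of the setting variables, and then average against the PMF $p(i,j,g)$, verifying that the spurious ``crossed'' terms introduced by running two Hadamard tests in parallel cancel in expectation. Fix $(i,j,g)$ and abbreviate the ``index-$0$'' multiplexer branches as $V_i,W_i$ and the ``index-$1$'' branches as $V_j,W_j$, where $V_a$ acts on $A$, $W_b$ on $B$, and trivially on $E$. For $a,a',b,b'\in\{0,1\}$ set $M_{ab,a'b'} := (V_a\otimes W_b\otimes\mathds{1}_E)\,\rho\,(V_{a'}\otimes W_{b'}\otimes\mathds{1}_E)^\dagger$ and
$$T_{ij}:=\Tr[X\,M_{00,11}]=\Tr[X(V_i\otimes W_i)\rho(V_j\otimes W_j)^\dagger],\qquad S_{ij}:=\Tr[X\,M_{01,10}]=\Tr[X(V_i\otimes W_j)\rho(V_j\otimes W_i)^\dagger].$$
Since $X$ and $\rho$ are Hermitian, $T_{ji}=\overline{T_{ij}}$, $S_{ji}=\overline{S_{ij}}$, and in particular $T_{ii}=S_{ii}\in\mathbb{R}$; moreover $\mu=\Tr[X U\rho U^\dagger]=\sum_{i,j}c_ic_jT_{ij}$, which is real and hence equals $\sum_{i,j}c_ic_j\,\mathrm{Re}\,T_{ij}$.

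\emph{Unfolding and the conditional expectation.} Conditioned on $(i,j,g)$, the circuit prepares $R_AR_B$ in $\ket{+}\ket{+}$, applies the multiplexed gates $\outerprod{0}{0}_{R_A}\otimes V_i+\outerprod{1}{1}_{R_A}\otimes V_j$ and $\outerprod{0}{0}_{R_B}\otimes W_i+\outerprod{1}{1}_{R_B}\otimes W_j$, then applies $\bm{G}$ followed by $H$ to each ancilla, and finally measures $R_A,R_B$ in the computational basis (outcomes $\bm{b}_1,\bm{b}_2$) and $X$ on $ABE$ (outcome $\bm{y}$). A short computation gives the state just after the multiplexers as $\tfrac14\sum_{a,a',b,b'}\outerprod{ab}{a'b'}_{R_AR_B}\otimes M_{ab,a'b'}$. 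Since the three measurements act on disjoint registers, $\mathbb{E}[(-1)^{\bm{b}_1+\bm{b}_2}\bm{y}\mid i,j,g]=\Tr[(\sigma_z\otimes\sigma_z\otimes X)\,\tau]$ for $\tau$ the pre-measurement state; pushing the final $H^{\otimes2}$ and $\bm{G}^{\otimes2}$ through the trace and using $H\sigma_zH=\sigma_x$ turns this into $\Tr\!\big[\big((\bm{G}^\dagger\sigma_x\bm{G})^{\otimes2}\otimes X\big)\cdot\tfrac14\sum_{a,a',b,b'}\outerprod{ab}{a'b'}\otimes M_{ab,a'b'}\big]$. For $g=0$, $(\bm{G}^\dagger\sigma_x\bm{G})^{\otimes2}=\sigma_x^{\otimes2}$ selects exactly the four terms with $a'=1-a$, $b'=1-b$ (coefficient $1$ each), giving $\tfrac14(T_{ij}+T_{ji}+S_{ij}+S_{ji})=\tfrac12\mathrm{Re}\,T_{ij}+\tfrac12\mathrm{Re}\,S_{ij}$; for $g=1$, $S^\dagger\sigma_x S=-\sigma_y$ so $(\bm{G}^\dagger\sigma_x\bm{G})^{\otimes2}=\sigma_y^{\otimes2}$, which picks the same four terms with signs yielding $\tfrac14(-T_{ij}-T_{ji}+S_{ij}+S_{ji})=-\tfrac12\mathrm{Re}\,T_{ij}+\tfrac12\mathrm{Re}\,S_{ij}$.

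\emph{Averaging over the settings.} Since $\phi(\Gamma)\,p(i,j,g)=c_ic_j$ on the support of $p$ and the estimator carries the factor $(-1)^{\bm{g}}$,
\[
\mathbb{E}[\hat{\bm{\mu}}]=\sum_{i,j}c_ic_j\left(\tfrac12\mathrm{Re}\,T_{ij}+\tfrac12\mathrm{Re}\,S_{ij}\right)+\sum_{i\neq j}c_ic_j\left(\tfrac12\mathrm{Re}\,T_{ij}-\tfrac12\mathrm{Re}\,S_{ij}\right),
\]
where $i=j$ is excluded from the second sum because $p(i,i,1)=0$. Because $T_{ii}=S_{ii}$, the $i=j$ summand of the second sum is zero, so the second sum may be extended to all $(i,j)$; the $\mathrm{Re}\,S$ contributions then cancel between the two sums, leaving $\mathbb{E}[\hat{\bm{\mu}}]=\sum_{i,j}c_ic_j\,\mathrm{Re}\,T_{ij}=\mu$.

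\emph{Main obstacle.} The conceptual crux — and the step most prone to slips — is the conditional-expectation computation: reading off $(-1)^{\bm{b}_1+\bm{b}_2}$ from two Hadamard tests run in parallel unavoidably produces the ``crossed'' contributions $S_{ij}=\Tr[X(V_i\otimes W_j)\rho(V_j\otimes W_i)^\dagger]$ alongside the desired $T_{ij}$, and one must track all four $M_{ab,a'b'}$ terms carefully to see that the phase gate $\bm{G}$ on \emph{both} ancillas, the sign $(-1)^{\bm{g}}$, and the suppression of the settings $(i,i,1)$ are exactly what is needed to annihilate these crossed terms in expectation. The rest — commuting the Clifford layer through the trace and the final index bookkeeping — is routine.
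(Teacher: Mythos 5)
Your proof is correct and follows essentially the same route as the paper's: condition on the setting variables, unfold the two parallel Hadamard tests to expose the four cross terms, and verify that the phase gate, the sign $(-1)^{\bm{g}}$, and the exclusion of the settings $(i,i,1)$ make the spurious ``crossed'' contributions cancel in expectation. The only difference is presentational: you conjugate $\sigma_z\otimes\sigma_z$ backward through the Clifford layer and read off matrix elements of $\sigma_x^{\otimes 2}$ and $\sigma_y^{\otimes 2}$, whereas the paper computes the outcome probabilities explicitly and performs the equivalent case analysis on the resulting phases.
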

If $\norm{X}\leq 1$ and we pick an optimal $\Gamma$ so that $\phi(\Gamma)=\xi(U)$ then the estimator has variance at most $\xi(U)^2$.

\subsubsection*{Proof of \Cref{lem:gate_cutting_unbiased_estimator}}
By linearity, it suffices to show that the lemma holds for any
pure input state $\rho_{ABE} = \outerprod{\psi}{\psi}_{ABE}$. Let $\mathcal{B}(i,j,g)$
denote the event where $\bm{i}=i$, $\bm{j}=j$, and $\bm{g}=g$ for any
value of the setting random variables $(i,j,g)\in [m]^2\times \{0,1\}$.
We may write
\begin{align}\label{eq:expect_decomp}
    \expct \hat{\bm{\mu}} & = \sum_{i}c_i^2\expct\left[(-1)^{\bm{b}_1+\bm{b}_2} \bm{y}|\mathcal{B}(i,i,0)\right] \nonumber
    \\ &\qquad\qquad + \sum_{i\neq j}c_ic_j\left(\expct\left[(-1)^{\bm{b}_1+\bm{b}_2} \bm{y}|\mathcal{B}(i,j,0)\right] - \expct\left[(-1)^{\bm{b}_1+\bm{b}_2}\bm{y}|\mathcal{B}(i,j,1)\right]\right).
\end{align}
Conditioned on $\mathcal{B}(i,i,0)$, the state (see
\Cref{fig:double_hadamard_test}) prior to measurement is $ \ket{0}_{R_A}
    (V_i\otimes W_i)_{AB}\ket{\psi}_{ABE} \ket{0}_{R_B}. $ (Here and for the
remainder of this section we use implicit identities acting on the register
$E$.) Hence, $(\bm{b}_1,\bm{b}_2)=(0,0)$ with probability 1 and the first sum
in \cref{eq:expect_decomp} is equal to
\begin{align}
    \sum_ic_i^2\Tr(X (V_i\otimes W_i)\outerprod{\psi}{\psi}(V_i^\dag\otimes W_i^\dag)).
\end{align}
A preview of the conclusion of the next part of the argument is as follows: for any
$i\neq j$, the $(i,j)^{\text{th}}$ term in the second sum is equal to
\begin{align}\label{eq:neq_term}
    \frac{c_ic_j}{2}\left[\Tr\left(X(V_i\otimes W_i)\outerprod{\psi}{\psi}(V_j^\dag\otimes W_j^\dag)\right) + \Tr\left(X(V_j\otimes W_j)\outerprod{\psi}{\psi}(V_i^\dag\otimes W_i^\dag)\right)\right]
\end{align}
which implies that the right-hand
side of \cref{eq:expect_decomp}  is equal to
\begin{align}
    \sum_{i,j\in [m]} c_ic_j\Tr(X(V_i\otimes W_i)\outerprod{\psi}{\psi}(V_j^\dag\otimes W_j^\dag)) = \Tr(X U\outerprod{\psi}{\psi}U^\dag)
\end{align}
as required. Let us now show this.
\begin{claim}\label{claim:expect_diff}
    For any $i,j\in [m]$ with $i\neq j$ it holds that
    \begin{align}\label{eq:expect_diff}
         & \expct\left[(-1)^{\bm{b}_1+\bm{b}_2} \bm{y}|\mathcal{B}(i,j,0)\right] - \expct\left[(-1)^{\bm{b}_1+\bm{b}_2}\bm{y}|\mathcal{B}(i,j,1)\right]\nonumber
        \\ &\qquad\quad = \frac{1}{2}\left[\Tr(X (V_i\otimes W_i)\outerprod{\psi}{\psi}(V_j^\dag\otimes W_j^\dag)) + \Tr(X (V_j\otimes W_j)\outerprod{\psi}{\psi}(V_i^\dag\otimes W_i^\dag))\right].
    \end{align}
\end{claim}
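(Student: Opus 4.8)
The plan is to evaluate each of the two conditional expectations $\expct[(-1)^{\bm b_1+\bm b_2}\bm y\,|\,\mathcal B(i,j,g)]$ directly from the circuit of \Cref{fig:double_hadamard_test} and then subtract. Conditioned on $\mathcal B(i,j,g)$ the circuit factorizes, exactly as in the $\mathcal B(i,i,0)$ computation above, into a Hadamard test on $AR_A$ in which $R_A$ coherently controls $V_i$ versus $V_j$ (followed by the phase gate $\bm G$ on $R_A$ and a final Hadamard) together with the analogous test on $BR_B$ using $W_i,W_j$ and the phase gate $\bm G$ on $R_B$. Propagating $\ket{0}_{R_A}\ket{0}_{R_B}\ket{\psi}_{ABE}$ through it and using $H\ket{b}=2^{-1/2}\sum_{b'}(-1)^{bb'}\ket{b'}$, the state before measurement is
\[
\ket{\Psi_g}=\tfrac14\sum_{b_1,b_2\in\{0,1\}}\ket{b_1}_{R_A}\ket{b_2}_{R_B}\otimes\big[(V_i+(-1)^{b_1}\ii^{g}V_j)_A\otimes(W_i+(-1)^{b_2}\ii^{g}W_j)_B\big]\ket{\psi}_{ABE},
\]
which collapses to $\ket{0}_{R_A}(V_i\otimes W_i)\ket{\psi}\ket{0}_{R_B}$ when $i=j$, $g=0$. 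Since $(\bm b_1,\bm b_2,\bm y)$ results from the joint eigenbasis measurement of $(\sigma_z)_{R_A}$, $(\sigma_z)_{R_B}$ and $X$, the conditional expectation equals $\bra{\Psi_g}(\sigma_z)_{R_A}\otimes(\sigma_z)_{R_B}\otimes X\ket{\Psi_g}$.

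The next step is the $b_1,b_2$ sums. The key identity is that weighting by $(-1)^{b_1}$ and summing kills the diagonal contributions, $\sum_{b_1}(-1)^{b_1}(V_i+(-1)^{b_1}\ii^{g}V_j)\,M\,(V_i+(-1)^{b_1}\ii^{g}V_j)^\dag=2(\ii^{-g}V_iMV_j^\dag+\ii^{g}V_jMV_i^\dag)$, and likewise for the $B$-side operators; since the $A$- and $B$-operators commute these two sums can be performed one after the other. Pairing the outcome with $X$ and writing $T_1:=\Tr(X(V_i\otimes W_i)\outerprod{\psi}{\psi}(V_j^\dag\otimes W_j^\dag))$ and $T_2:=\Tr(X(V_i\otimes W_j)\outerprod{\psi}{\psi}(V_j^\dag\otimes W_i^\dag))$, one obtains
\[
\bra{\Psi_g}(\sigma_z)_{R_A}(\sigma_z)_{R_B}X\ket{\Psi_g}=\tfrac{(-1)^{g}}{2}\,\mathrm{Re}\,T_1+\tfrac12\,\mathrm{Re}\,T_2 ,
\]
where the factor $\ii^{\pm2g}=(-1)^{g}$ multiplying $T_1$ comes from the phase gate acting on \emph{both} ancillas (the ``synchronized'' $V$-with-$W$ terms accumulate a phase from both $R_A$ and $R_B$), whereas the ``cross'' term $T_2$ accumulates $\ii^{-g}\!\cdot\!\ii^{g}=1$ and is independent of $g$.

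Finally, subtracting the $g=1$ value from the $g=0$ value cancels $\mathrm{Re}\,T_2$ and doubles $\mathrm{Re}\,T_1$, leaving $\mathrm{Re}\,T_1=\tfrac12(T_1+\overline{T_1})$; and $\overline{T_1}=\Tr(X(V_j\otimes W_j)\outerprod{\psi}{\psi}(V_i^\dag\otimes W_i^\dag))$ since $X=X^\dag$, $\outerprod{\psi}{\psi}$ is Hermitian, and by cyclicity of the trace --- which is exactly the right-hand side of \Cref{eq:expect_diff}. I expect the only genuine work to be the bookkeeping in the middle step: keeping track of how the two $(-1)^{b_k}$ weights eliminate the diagonal operator combinations on each side, and checking that with $\bm G$ on both ancillas the cross term $T_2$ survives with a $g$-independent sign while $T_1$ flips sign. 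This cancellation is precisely why the random gate $\bm G$ and the $(-1)^{\bm g}$ factor were put into the estimator; the rest is routine linear algebra.
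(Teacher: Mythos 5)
Your proposal is correct and follows essentially the same route as the paper: propagate $\ket{0}_{R_A}\ket{0}_{R_B}\ket{\psi}$ through the conditioned circuit, identify the conditional expectation with $\bra{\Psi_g}(\sigma_z)_{R_A}\otimes(\sigma_z)_{R_B}\otimes X\ket{\Psi_g}$, and observe that only the $(V_i\otimes W_i)\cdots(V_j\otimes W_j)^\dag$ cross-terms survive the subtraction over $g$. The only difference is bookkeeping — you collapse the $b_1,b_2$ sums via an operator identity, whereas the paper sums outcome probabilities and does an explicit case analysis over the index pairs $(a,a')$ — and both yield the same surviving terms $\mathrm{Re}\,T_1$.
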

\begin{proof}
    Let us consider the case where $i=1$ and $j=2$ for notational clarity:
    the other cases are identical. Define the states
    \begin{align}
         & \ket{\psi_{00}}:= (V_1\otimes W_1)\ket{\psi} & \ket{\psi_{01}}:= (V_1\otimes W_2)\ket{\psi}\nonumber \\
         & \ket{\psi_{10}}:= (V_2\otimes W_1)\ket{\psi} & \ket{\psi_{11}}:= (V_2\otimes W_2)\ket{\psi}
    \end{align}
    Conditioned on the event $\mathcal{B}(1,2,0)$,
    the circuit in \Cref{fig:double_hadamard_test} acts as
    \begin{align}
        \ket{00}_{R_A R_B}\ket{\psi}_{AB} & \longrightarrow \frac{1}{2}\sum_{a\in \{0,1\}^2}\ket{a}_{R_A R_B}\otimes \ket{\psi_{a_1a_2}}_{AB}                       \\
                                          & \longrightarrow \frac{1}{4}\sum_{a,a^\prime\in\{0,1\}^2}(-1)^{a\cdot a^\prime}\ket{a^\prime}\otimes \ket{\psi_{a_1a_2}}
    \end{align}
    where $a\cdot b := a_1b_1 + a_2b_2$. Therefore, the probability
    of observing the outcome $y$ from measuring $AB$ according to $X$ and
    $b=b_1b_2$ from measuring $R_A R_B$ in the computational basis
    is equal to
    \begin{align}\label{eq:real_part_prob}
        \frac{1}{16}\sum_{a,a^\prime\in\{0,1\}^2} (-1)^{(a-a^\prime)\cdot b} \langle \psi_{a_1^\prime a_2^\prime}| \Pi_y |\psi_{a_1 a_2}\rangle
    \end{align}
    in this case. Similarly, if $\mathcal{B}(1,2,1)$ occurs, the circuit
    produces the state
    \begin{align}
        \frac{1}{4}\sum_{a,a^\prime\in\{0,1\}^2}\ii^{a \cdot(1,1)} (-1)^{a\cdot a^\prime}\ket{a^\prime}\otimes \ket{\psi_{a_1a_2}}.
    \end{align}
    such that the probability of observing the outcomes $b=b_1b_2$ and $y$
    from measuring $R_A R_B$ and $AB$, respectively, is
    \begin{align}\label{eq:imag_part_prob}
        \frac{1}{16}\sum_{a,a^\prime\in\{0,1\}^2} \ii^{(a-a^\prime)\cdot (1,1)}(-1)^{(a-a^\prime)\cdot b} \langle \psi_{a_1^\prime a_2^\prime}| \Pi_y |\psi_{a_1 a_2}\rangle.
    \end{align}
    Using \cref{eq:real_part_prob} and \cref{eq:imag_part_prob} we find that
    the left-hand side of \cref{eq:expect_diff} (with $i=1$ and $j=2$)
    is equal to
    \begin{align}\label{eq:long_expect_diff}
        \frac{1}{16}\sum_{\substack{y\in \mathrm{spec}(X)         \\ b\in\{0,1\}^2\\ a,a^\prime \in \{0,1\}^2}}(-1)^{b\cdot (1,1)} y &\left((-1)^{(a-a^\prime)\cdot b}-\ii^{(a-a^\prime)\cdot (1,1)}(-1)^{(a-a^\prime)\cdot b}\right)\langle \psi_{a_1^\prime a_2^\prime}| \Pi_y |\psi_{a_1 a_2}\rangle\nonumber\\
         & = \frac{1}{16}\sum_{\substack{a,a^\prime \in \{0,1\}^2 \\ b\in\{0,1\}^2}} (-1)^{(a-a^\prime + (1,1))\cdot b}\left(1-\ii^{(a-a^\prime)\cdot (1,1)}\right)\langle \psi_{a_1^\prime a_2^\prime}| X |\psi_{a_1 a_2}\rangle.
    \end{align}
    A straightforward case analysis shows
    \begin{align}
        \sum_{b\in\{0,1\}^2}(-1)^{(a-a^\prime + (1,1))\cdot b}\left(1-\ii^{(a-a^\prime)\cdot (1,1)}\right) = \begin{cases}
                                                                                                                 8 & \textnormal{if}\ a=00\ \textnormal{and} \ a^\prime=11 \\
                                                                                                                 8 & \textnormal{if}\ a=11\ \textnormal{and}\ a^\prime=00  \\
                                                                                                                 0 & \textnormal{otherwise}
                                                                                                             \end{cases}
    \end{align}
    from which we may conclude that the right-hand side of \cref{eq:long_expect_diff}
    is equal to
    \begin{align}
        \frac{1}{2}\left(\langle \psi_{00}| X | \psi_{11}\rangle + \langle \psi_{11}| X | \psi_{00}\rangle\right).
    \end{align}
    The claim follows from the definitions of $\ket{\psi_{00}}$ and $\ket{\psi_{11}}$.
\end{proof}

The next lemma completes the proof of the first part of
\Cref{thm:bound_on_overhead}, concerning the existence of a space-like cut of
the desired form.
\begin{lemma}\label{lem:gate_cutting_qpd_valid}
    \Cref{lem:gate_cutting_unbiased_estimator} implies that the channel
    $\calU\colon \rho\mapsto U\rho U^\dag$ has a space-like cut of the form described in \Cref{thm:bound_on_overhead}.
\end{lemma}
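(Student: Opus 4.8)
The plan is to unpack the statement of \Cref{lem:gate_cutting_unbiased_estimator} and recognize that the construction it describes is, almost by definition, a space-like cut of the required form --- the only work is bookkeeping to match the data of the random circuit with the formal ingredients $a_i$, $\calE_i$, $\calT_i$ appearing in \Cref{def:space_like_cut}. Concretely, the ensemble of random circuits in \Cref{fig:double_hadamard_test} is indexed by the setting random variables $(\bm i,\bm j,\bm g)$; I would let the index $i$ in the decomposition \eqref{eq:space_like_cut_def} range over triples $(i,j,g)\in[m]^2\times\{0,1\}$ with $i\neq j$ whenever $g=1$.

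First I would read off the coefficients. From the PMF \eqref{eq:algo_prob_dist}, the probability of the triple $(i,j,g)$ is $c_ic_j\phi(\Gamma)^{-1}$ (with the caveat on $g=1$), and the estimator carries the sign $(-1)^{\bm g+\bm b_1+\bm b_2}$ together with the prefactor $\phi(\Gamma)$. Comparing with the normalized form in \eqref{eq:qpd_defn}, this tells me to set $a_{(i,j,g)} = (-1)^g\, c_i c_j$, so that $|a_{(i,j,g)}|/\|a\|_1 = c_ic_j\phi(\Gamma)^{-1} = p(i,j,g)$ and $\|a\|_1 = \sum_{i}c_i^2 + 2\sum_{i\neq j}c_ic_j = 2\|c\|_1^2 - \|c\|_2^2 = \phi(\Gamma)$, which is exactly the claimed 1-norm. (A small point: the $g=0$ and $g=1$ contributions should be grouped so the $i\neq j$ terms get the right total weight; this is the same computation as the normalization check just below \eqref{eq:algo_prob_dist}.)

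Next I would identify the channels $\calE_{(i,j,g)}$ and the post-processing maps $\calT_{(i,j,g)}$. The channel $\calE_{(i,j,g)}$ is: append ancilla qubits $R_A, R_B$ in $\ket{0}\!\bra{0}$, apply Hadamards on the ancillas, apply the two controlled unitaries $\outerprod{0}{0}\otimes V_i + \outerprod{1}{1}\otimes V_j$ on $R_A A$ and $\outerprod{0}{0}\otimes W_i + \outerprod{1}{1}\otimes W_j$ on $R_B B$, apply $\bm G\in\{\mathds 1,S\}$ on (say) $R_A$, and apply Hadamards again on the ancillas. This is manifestly a tensor product of an isometric channel $\calV_{(i,j,g)}$ on $A R_A$ and an isometric channel $\calW_{(i,j,g)}$ on $B R_B$ --- every gate listed acts within one of these two groups --- so the cut is \emph{local} in the sense of \Cref{def:space_like_cut}, with $d_{R_A}=d_{R_B}=2$. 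The post-processing is $\calT_{(i,j,g)}: X \mapsto \Tr\big((\sigma_z\otimes\sigma_z)\,X\big)$ on $R_AR_B$, which is of the product form $O^{(A)}\otimes O^{(B)}$ with $O^{(A)}=O^{(B)}=\sigma_z$ and $\|O\|=1$; note the sign $(-1)^{b_1+b_2}$ in the estimator is precisely the eigenvalue of $\sigma_z\otimes\sigma_z$ on the measured computational-basis string $b_1b_2$, and the $(-1)^g$ sign is absorbed into $a_{(i,j,g)}$. Then I would verify the channel identity: the content of \Cref{lem:gate_cutting_unbiased_estimator} is that $\Tr\!\big((X\otimes\sigma_z\otimes\sigma_z)\,\calE_{(i,j,g)}(\rho)\big)$, averaged against $a_{(i,j,g)}$, equals $\Tr(X\,\calU(\rho))$ for every observable $X$ on $ABE$ and every state $\rho$; since $E$ is an arbitrary spectator system, testing against all such $X$ forces the operator identity $\calU = \sum_{(i,j,g)} a_{(i,j,g)}\,(\id_{AB}\otimes\calT_{(i,j,g)})\circ\calE_{(i,j,g)}$, which is exactly \eqref{eq:space_like_cut_def}.

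The step I expect to be the only real (though mild) obstacle is the reindexing/sign-bookkeeping: \Cref{lem:gate_cutting_unbiased_estimator} is phrased probabilistically (an unbiased estimator built from a PMF and a random sign), whereas \Cref{def:space_like_cut} wants a deterministic linear-algebraic decomposition with a real vector $a$ and honest channels, so one has to be careful that the absolute values $|a_{(i,j,g)}|$ reproduce the PMF, that the signs land correctly (the $(-1)^g$ into $a$, the $(-1)^{b_1+b_2}$ into $\calT$ via $\sigma_z\otimes\sigma_z$), and that the $i\neq j,\ g=1$ exclusion in the PMF corresponds to simply not including those triples in the sum. None of this is deep, but it is the place where a careless translation could produce a factor of $2$ or a sign error, so I would do it explicitly. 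Everything else --- locality of $\calE_{(i,j,g)}$, the bound $\|O\|\le 1$, and the value $\|a\|_1=\phi(\Gamma)$ --- is immediate from the construction.
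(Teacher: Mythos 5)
Your proposal is correct and follows essentially the same route as the paper: identify the coefficients $a_{(i,j,g)}=(-1)^g c_ic_j$, the local isometric channels on $AR_A$ and $BR_B$, and the post-processing $\Tr((\sigma_z\otimes\sigma_z)\,\cdot\,)$, then upgrade the unbiased-estimator identity (which holds for all observables $X$ on $ABE$ and all states $\rho$) to the channel equality. The paper makes the last step explicit by testing against a basis of Hermitian observables and then evaluating both maps on the maximally entangled state to invoke the Choi bijection, which is the same argument you compress into ``testing against all such $X$ forces the operator identity.''
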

\begin{proof}
    Let $E$ be a copy of the system $AB$. By \Cref{lem:gate_cutting_unbiased_estimator} we have
    \begin{align}\label{eq:unbiased_1}
        \expct \hat{\bm{\mu}} & = \Tr(X(\id_E\otimes \calU)(\rho))
    \end{align}
    for any Hermitian observable $X\in\sfL(\calH_{ABE})$ and state $\rho\in\sfD(\calH_{ABE})$, where $\hat{\bm{\mu}}$ is defined as in the lemma. Using \cref{eq:expect_decomp}, the definitions of $\bm{y}$ and $\bm{b}_1\bm{b}_2$, and making use of \Cref{fig:double_hadamard_test}, the expected value of $\hat{\bm{\mu}}$ can be written
    \begin{align}
        \expct \hat{\bm{\mu}} & = \sum_{i,j}c_ic_j\Tr\left((X\otimes (\sigma_z\otimes\sigma_z)_{R_A R_B})(\id_E\otimes\calV^{(i,j,0)}_{AR_A}\otimes \calW^{(i,j,0)}_{BR_B})(\rho\otimes \outerprod{00}{00}_{R_AR_B})\right) \nonumber
        \\ &\qquad\qquad - \sum_{i\neq j}c_ic_j\Tr\left((X\otimes (\sigma_z\otimes\sigma_z)_{R_A R_B})(\id_E\otimes \calV^{(i,j,1)}_{AR_A}\otimes \calW^{(i,j,1)}_{BR_B})(\rho\otimes \outerprod{00}{00}_{R_AR_B})\right)\\
                              & =: \Tr\left(X(\id_E\otimes \widetilde{\calU})(\rho)\right)\label{eq:utilde_defn}
    \end{align}
    where $\calV^{(i,j,g)}$ and $\calW^{(i,j,g)}$ denote the actions of the local
    circuits on the subsystems $AR_A$ and $BR_B$, respectively, in \Cref{fig:double_hadamard_test},
    conditioned on the event where $\bm{i}=i$, $\bm{j}=j$, and $\bm{g}=g$, and in the second line we have defined the map $\widetilde{\calU}:\sfL(\calH_{AB})\to\sfL(\calH_{AB})$ by
    \begin{align}
        \widetilde{\calU}(Y_{AB}) & = \sum_{i,j}c_i c_j \Tr_{R_A R_B}\left((\mathds{1}_{AB}\otimes \sigma_z\otimes \sigma_z)(\calV^{(i,j,0)}_{AR_A}\otimes \calW^{(i,j,0)}_{BR_B})(Y_{AB}\otimes \outerprod{00}{00}_{R_AR_B})\right)\nonumber
        \\ &\qquad\qquad - \sum_{i\neq j}c_i c_j \Tr_{R_A R_B}\left((\mathds{1}_{AB}\otimes \sigma_z\otimes \sigma_z)(\calV^{(i,j,1)}_{AR_A}\otimes \calW^{(i,j,1)}_{BR_B})(Y_{AB}\otimes \outerprod{00}{00}_{R_AR_B})\right)
    \end{align}
    for all $Y_{AB}\in \sfL(\calH_{AB})$. By inspection, if $\calU=\widetilde{\calU}$ then the right-hand side of the above is a QPD of the desired form. It remains to show that $\calU=\widetilde{\calU}$. But this is clear from the fact we may pick a basis of Hermitian observables for the vector space $\sfL(\calH_{ABE})$ and use \cref{eq:unbiased_1} and \cref{eq:utilde_defn} for each element of this basis to conclude that
    \begin{align}
        (\id_E\otimes\widetilde{\calU})(\rho) = (\id_E\otimes\calU)(\rho)
    \end{align}
    for any $\rho\in\sfD(\calH_{ABE})$. Picking $\rho=\Phi_{AB}$ to be the maximally entangled state and noting that the function taking linear maps to their Choi states is a bijection proves the claim.
\end{proof}

\subsection{Application to clustered Hamiltonian simulation}\label{sec:hamiltonian_simulation}
Our procedure for space-like cutting is applicable to the simulation of
clustered quantum systems, as previously considered
in~\cite{peng2020simulating,Childs_2021_trotter_error}. Unlike the setting
introduced in these works, we focus on partitioning the system into just two
disjoint subsets, and we also assume the local terms in the Hamiltonian of the
system are proportional to Pauli strings. However, our setting is more general
in other ways: we do not require \emph{geometric} locality, nor a restriction
to 2-local interactions between qubits in a bounded-degree interaction graph.
Instead, we consider systems of $n$ qubits whose Hamiltonian we take to be of
the general form
\begin{align}\label{eq:clustered_hamiltonian}
    H=\sum_{i\in E_A}H_i + \sum_{j\in E_B}H_j + \sum_{k\in \partial A}H_k
\end{align}
where $A,B\subset [n]$ is a partition of the $n$ qubits into
disjoint subsystems comprising $n_A$ and $n_B$ qubits, respectively,
each term is $O(1)$-local (acts non-trivially on at most $O(1)$ subsystems),
and terms $H_i$ with $i\in E_A$ ($i\in E_B$) act non-trivially only on qubits in $A$ ($B$), while
terms $H_k$ with $k\in \partial A$ act non-trivially on qubits in both
subsystems. We also assume that each term satisfies $\norm{H_i}\leq 1$ and the
total number of terms is at most $\text{poly}(n)$. Suppose
further that there is an observable of interest $X$ such that $\norm{X}\leq 1$ which can be
efficiently measured and whose eigenvectors are
product states with respect to $AB$, e.g., computational basis
measurements and efficient post-processing, or Pauli observables
of the form $X=X_A\otimes X_B$. The following is
then a formal description of the task considered in prior work
in the case of bipartitioning.

\begin{tcolorbox}[colback=white, arc=0pt, boxrule=0.5pt]
    \begin{problem}[Clustered Hamiltonian Simulation]\label{prob:clustered_ham_sim}$ $\vspace{0.5em}\newline
    \textbf{Input}: $N$ copies of some initial state $\rho_{AB}$, an accuracy parameter $\veps > 0$, a simulation time $t\in \mathbb{R}$,
    and classical descriptions of i) a Hamiltonian $H$ of the form in \cref{eq:clustered_hamiltonian} and ii) an observable $X$ of the form described above.\vspace{0.5em}\\
    \textbf{Output}: An estimate $\hat{\bm{\mu}}$ of $\mu:=\Tr(X\ee^{-\ii H t} \rho_{AB}\ee^{\ii H t})$ s.t.\ $|\bm{\hat{\mu}}-\mu|\leq \veps$ with
    high probability.
    \end{problem}
\end{tcolorbox}
In the statement of the following theorem, ``polynomial-size"
indicates circuits which are of size
polynomial in $n$, $t$, and $\veps$, while
``locally'' refers to product unitaries with respect to subsystems $AR_A$
and $BR_B$.
\begin{theorem}\label{thm:clustered_ham_sim}
  \Cref{prob:clustered_ham_sim} can be solved using
  a quantum algorithm which computes $\hat{\bm{\mu}}$ using efficient classical post-processing of the measurement outcomes obtained from $N=O(\ee^{4 \eta t}/\veps^2)$ independent executions of random, polynomial-size quantum circuits each acting \emph{locally} on a copy of $\rho_{AB}\otimes \outerprod{00}{00}_{R_A R_B}$, where $\eta:=\sum_{k\in \partial A}\norm{H_k}$
  and $R_A$ and $R_B$ are a pair of ancilla qubits. 
    Moreover, there is a classical algorithm to sample these circuits in time $\mathrm{poly}(n,t,1/\veps)$.
\end{theorem}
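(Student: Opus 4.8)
The plan is to apply the double Hadamard test of \Cref{sec:double_hadamard} not to the exact evolution $\ee^{-\ii H t}$ but to its first-order Lie--Trotter approximation, and to control the cost using submultiplicativity of the magnitude $\phi$ (\Cref{lem:submult_of_mag}) together with the fact that only the crossing terms in $\partial A$ are entangling. Write each $H_\ell = h_\ell P_\ell$ with $P_\ell$ a Pauli string and $h_\ell = \norm{H_\ell}\ge 0$, set $\tau = t/r$, and let $\widetilde U := (\prod_\ell \ee^{-\ii H_\ell\tau})^r$ with the product over all $L = \mathrm{poly}(n)$ terms in a fixed order. Choosing $r = O(t^2 L^2/\veps)$, standard first-order Trotter error bounds give $\norm{\widetilde U - \ee^{-\ii H t}}\le \veps/4$, hence $|\Tr(X\widetilde U\rho_{AB}\widetilde U^\dag) - \mu| \le \veps/2$ (after normalizing $X$, which we may assume satisfies $\norm{X}\le 1$). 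Note that $r$, and therefore all circuit sizes produced below, are $\mathrm{poly}(n,t,1/\veps)$ and in particular do not depend on $\eta$.

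Next I would assemble a valid local decomposition $\Gamma$ of $\widetilde U$ from decompositions of its factors. For $\ell\in E_A$ (resp.\ $E_B$) the factor $\ee^{-\ii H_\ell\tau}$ acts within $A$ (resp.\ $B$), so it has the one-term decomposition $\ee^{-\ii H_\ell\tau}\otimes\mathds{1}$ (resp.\ $\mathds{1}\otimes \ee^{-\ii H_\ell\tau}$) of magnitude $1$. For $k\in\partial A$, the Pauli string $P_k$ factors as $P_k^{(A)}\otimes P_k^{(B)}$, so $\ee^{-\ii H_k\tau} = \cos(h_k\tau)(\mathds{1}_A\otimes\mathds{1}_B) - \ii\sin(h_k\tau)(P_k^{(A)}\otimes P_k^{(B)})$; absorbing the phase into $P_k^{(A)}$ yields a valid local decomposition $\Gamma_k$ with coefficient vector $c = (|\cos(h_k\tau)|,|\sin(h_k\tau)|)$, hence $\norm{c}_2^2 = 1$ and $\phi(\Gamma_k) = 2\norm{c}_1^2 - 1 = 1 + 2|\sin(2h_k\tau)| \le \ee^{4h_k\tau}$. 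Taking $\Gamma$ to be the product (in the sense of \Cref{sec:decomp}) of the $rL$ factor decompositions, \Cref{lem:prod_of_decomp_valid} shows $\Gamma$ is valid for $\widetilde U$, while \Cref{lem:submult_of_mag} gives
\begin{align}
    \phi(\Gamma) \;\le\; \prod_{k\in\partial A}\phi(\Gamma_k)^r \;\le\; \prod_{k\in\partial A}\ee^{4h_k t} \;=\; \ee^{4\eta t}.
\end{align}
The crucial point is that raising the per-step cost $1 + O(h_k\tau)$ to the $r$-th power cancels the $1/r$ in $\tau$, so neither $r$ nor any quadratic-in-$t$ term survives.

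I would then run the double Hadamard test with this $\Gamma$. By \Cref{thm:bound_on_overhead} (equivalently \Cref{lem:gate_cutting_unbiased_estimator} and \Cref{lem:gate_cutting_qpd_valid}) it is a \emph{local} space-like cut of $\widetilde{\calU}\colon\rho\mapsto\widetilde U\rho\widetilde U^\dag$ with the ancillas $R_A,R_B$ prepared in $\ket{00}$ and 1-norm $\phi(\Gamma)\le \ee^{4\eta t}$; the estimator $\hat{\bm\mu} = \phi(\Gamma)(-1)^{\bm g+\bm b_1+\bm b_2}\bm y$, built from measuring $X$ on $AB$ (local operations plus classical post-processing, since $X$ is efficiently measurable and has a product eigenbasis) and $\sigma_z$ on each ancilla, is unbiased for $\Tr(X\widetilde U\rho_{AB}\widetilde U^\dag)$ and satisfies $|\hat{\bm\mu}|\le\phi(\Gamma)$, so $\Var[\hat{\bm\mu}]\le \ee^{8\eta t}$. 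Because $\Gamma$ is a product decomposition, the circuits $\calV^{(i,j,g)}\otimes\calW^{(i,j,g)}$ conditioned on the setting variables factor over $AR_A$ and $BR_B$ into the fixed local Trotter gates together with $r|\partial A|$ one- or two-qubit controlled-Pauli selections (controlled on the ancilla of the corresponding side), so each has $O(rL) = \mathrm{poly}(n,t,1/\veps)$ gates. A median of $O(1)$ empirical means over $N = O(\ee^{8\eta t}/\veps^2)$ independent runs is then within $\veps/2$ of $\Tr(X\widetilde U\rho_{AB}\widetilde U^\dag)$ with high probability, hence within $\veps$ of $\mu$.

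For the last claim I would describe how to classically sample these circuits, which amounts to sampling $(\bm i,\bm j,\bm g)$ from the PMF \Cref{eq:algo_prob_dist}: first draw $\bm g\in\{0,1\}$ with $\Pr[\bm g=1] = (\norm{c}_1^2 - 1)/(2\norm{c}_1^2-1)$, where $\norm{c}_1^2 = \prod_{k\in\partial A}(|\cos(h_k\tau)| + |\sin(h_k\tau)|)^{2r}$ is computable in $\mathrm{poly}(n,t,1/\veps)$ time; then repeatedly draw a candidate pair $(\bm i,\bm j)$ by sampling each of the $r|\partial A|$ coordinates of $\bm i$ and of $\bm j$ independently from its two-outcome marginal, accepting the first candidate (if $\bm g=0$) or the first candidate with $\bm i\ne\bm j$ (if $\bm g=1$). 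The expected number of candidates drawn equals $2\norm{c}_1^2/(2\norm{c}_1^2 - 1)\le 2$: when $\norm{c}_1^2$ is close to $1$ the acceptance probability $1 - 1/\norm{c}_1^2$ in the $\bm g=1$ branch is small, but the branch is entered with proportionally small probability, and the two effects cancel. I expect the main work of the proof to be exactly the bookkeeping of the previous two paragraphs — checking that the product decomposition really does yield \emph{local}, polynomial-size, efficiently samplable circuits, and that the measurement of $X$ respects the bipartition — rather than any new estimate; the one genuinely new idea is the one isolated above, namely that cutting the Trotterized operator and invoking submultiplicativity converts the per-step cost into a total cost $\ee^{O(\eta t)}$ in which the step count drops out.
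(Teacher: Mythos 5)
Your proposal is correct and follows essentially the same route as the paper: Trotterize with $r=\mathrm{poly}(n,t,1/\veps)$ steps, decompose each crossing term as $\cos(h_k\tau)\,\mathds{1}\otimes\mathds{1}$ plus $\sin(h_k\tau)$ times a product of local Paulis, use submultiplicativity of $\phi$ so the step count cancels and $\phi(\Gamma)\le \ee^{4\eta t}$, and run the double Hadamard test with $O(\ee^{8\eta t}/\veps^2)$ executions. The only (immaterial) difference is the sampling subroutine: you draw $\bm{g}$ from its exact marginal and reject only in the $\bm{g}=1$ branch, whereas the paper samples $(\vec{\bm{x}},\vec{\bm{y}},\bm{g})$ jointly and post-selects on the complement of $\{\vec{\bm{x}}=\vec{\bm{y}},\ \bm{g}=1\}$; both are valid with $O(1)$ expected overhead.
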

It is straightforward to see that we may execute the circuits in this theorem
one-at-a-time on a single system of $\max\{n_A,n_B\}+1$ qubits so long as the initial state is
a product state $\rho=\rho_A\otimes \rho_B$, making use of the assumption that
the observable $X$ is implementable without applying entangling unitaries prior
to measuring. This allows one to recover the
originally suggested use-case of solving this problem from Ref.~\cite{peng2020simulating}; that is, reducing the number of qubits required
to perform a quantum simulation task. Additionally, a more precise bound on the variance of $\hat{\bm{\mu}}$ is possible. Namely, the proof relies on a local decomposition $\Gamma$ of the unitary circuit arising from $r$ Trotter steps having magnitude
\begin{align}\label{eq:gamma_magnitude}
    \phi(\Gamma)\leq 2\ee^{2\eta t}\left(1+ \frac{4\eta^2 t^2}{r}\right)-1.
\end{align}
Thus, we see that the multiplicative overhead incurred by the procedure is close to one for weak interactions.
\begin{proof}[Proof of \Cref{thm:clustered_ham_sim}]
    We describe an algorithm which satisfies all the required properties. The algorithm is based on a straightforward local decomposition of the first-order Trotter formula
    for the time-evolution $\mathscr{T}:\mathbb{R}\to \sfU(\calH_{AB})$ defined for all $x\in\mathbb{R}$ as
    \begin{align}
        \mathscr{T}(x) = \prod_{j\in E_A \cup E_B} \ee^{-\ii H_j x}\prod_{k\in \partial A} \ee^{-\ii H_k x}.
    \end{align}
    By \cite[{Corollary~12}]{Childs_2021_trotter_error}, for instance, taking a sufficiently large positive integer $r=O(\mathrm{poly}(n)\ t^2\ \veps^{-1})$ and defining $U:=\mathscr{T}(t/r)^r$ we have
    \begin{align}
        \lVert U-\ee^{-\ii H t}\rVert\leq \veps/4.
    \end{align}
    By this choice, making use of the assumption that $\norm{X}\leq 1$ it holds that
    \begin{align}
        |\Tr(X \ee^{-\ii H t} \rho \ee^{\ii H t})-\Tr(X U\rho U^\dag)|\leq \veps/2.
    \end{align}
    Hence, to solve \Cref{prob:clustered_ham_sim}, i.e., output an $\veps$-accurate estimate of the expectation $\Tr(X\ee^{-\ii H t}\rho \ee^{\ii H t})$, it suffices to produce an estimate for $\Tr(X U \rho U^\dag)$ which is accurate to within $\veps/2$. We may accomplish this using the procedure in \Cref{thm:bound_on_overhead} in the following manner. We consider local decompositions of the unitary operation $\ee^{-\ii H_j t / r}$. If $j\in E_A\cup E_B$ then this operator itself is a local decomposition via the singleton set $\Gamma_j = \{(1,\ \ee^{-\ii H_j t /r})\}$ since $\ee^{-\ii H_j t /r}$ is of the form $U_A\otimes U_B$. If instead $j\in \partial A$ we use the local decomposition
    \begin{align}
        \Gamma_j = \{(\cos(\lVert H_j\rVert t/r),\ \mathds{1}\otimes \mathds{1}),\ (\sin(\lVert H_j \rVert t/r),\ -\ii \ H_j/\rVert H_j \lVert)\}.
    \end{align}
    where here and throughout the proof we set $c_{j,0}:= |\cos(\lVert H_j\rVert t/r)|$ and $c_{j,1}:= |\sin(\lVert H_j\rVert t/r)|$ for each $j\in \partial A$. By \Cref{lem:submult_of_mag} the local decomposition
    \begin{align}\label{eq:gamma_defn_ham_sim}
        \Gamma:=\left(\prod_{j\in E_A\cup E_B}\Gamma_j\prod_{k\in\partial A}\Gamma_k\right)^r
    \end{align}
    of $U$ has magnitude
    \begin{align}
        \phi(\Gamma)&= \prod_{k\in\partial A}(|c_{k,0}|+|c_{k,1}|)^{2r} - \prod_{k\in \partial A}(c_{k,0}^2+c_{k,1}^2)^r\\
        &=\prod_{k\in\partial A}(|c_{k,0}|+|c_{k,1}|)^{2r} - 1.\label{eq:mag_expression}
    \end{align}
    Next, let $\eta_2:=\sqrt{\sum_{k\in\partial A}\norm{H_k}^2}$. Then so long as $r$ is at least $t$ it holds that
    \begin{align}
        \left|\prod_{k\in\partial A}(|c_{k,0}|+|c_{k,1}|)^{2r} - \ee^{2\eta t}\right|&= \ee^{2\eta t} \left|\prod_{k\in\partial A}\left(\frac{c_{k,0}+c_{k,1}}{\ee^{\norm{H_k}t/r}}\right)^{2r} - 1\right|\\
        &\leq 2r\ee^{2\eta t}\sum_{k\in \partial A}\left|\frac{c_{k,0}+c_{k,1}}{\ee^{\norm{H}_k t/r}}-1\right|\\
        &\leq 2r\ee^{2\eta t}\sum_{k\in \partial A}\left|c_{k,0}+c_{k,1}-\ee^{\norm{H}_k t/r}\right|\\ 
        &\leq 4\ee^{2\eta t}t^2\eta_2^2/r
    \end{align}
    where in the first line we used the definition of $\eta$ as well as the fact that $\cos(x)$ and $\sin(x)$ are nonnegative for $x\in [0,1]$, in the second line we used the bound
    \begin{align}
        \left|\prod_{i=1}^n a_i -\prod_{i=1}^nb_i\right|\leq \sum_{i=1}^n | a_i - b_i|
    \end{align}
    whenever $a_i,b_i\in [0,1]$, and in the final line we made use of the inequalities $0\leq \ee^x-\cos(x)-\sin(x)\leq 2x^2$ for every $x\in [0,1]$. This implies that
    \begin{align}
        \phi(\Gamma)\leq 2\ee^{2\eta t}\left(1+\frac{4t^2\eta_2^2}{r}\right) - 1
    \end{align}
    for $r$ at least $t$. The bound in Eq.~\eqref{eq:gamma_magnitude} follows from the inequality $\eta_2\leq \eta_1$. Furthermore, if $r$ is at least, say, $100 \eta^2t^2$, then $\phi(\Gamma)=O(\ee^{2\eta t})$. The bound in \Cref{thm:bound_on_overhead} then implies that we can use the procedure in \Cref{sec:double_hadamard} with the decomposition $\Gamma$ to estimate $\Tr(X U \rho U^\dag)$ to within accuracy $\veps/2$ using $O(\ee^{4\eta t}/\veps^2)$ independent executions of the random circuits described therein, which act locally on $A R_A$ and $B R_B$ as required. It remains to show that this procedure, including the classical post-processing step, can be implemented efficiently. Since $X$ is efficiently measurable by assumption, and the procedure in \Cref{sec:double_hadamard} involves computing the empirical mean of the observable
    $\phi(\Gamma)\left[(\sigma_z)_{R_A}\otimes (\sigma_z)_{R_B}\otimes X_{AB}\right]$ on the output of some random circuits, it suffices to show that we can efficiently i) compute $\phi(\Gamma)$, ii) implement a given circuit from the ensemble, and iii) sample from the appropriate distribution for $\Gamma$, as described in \cref{eq:algo_prob_dist}.

    The first claim is evident from \cref{eq:mag_expression}, since it is a product of polynomially-many terms.
    
    The second claim holds since, by inspecting \cref{eq:gamma_defn_ham_sim} and
    making use of the definitions of the $\Gamma_j$ terms, the decomposition
    $\Gamma$ comprises unitary operators which are implementable by local Pauli
    rotations interspersed with Pauli operators $-\ii \ H_k/\norm{H_k}$ for some
    term $k$. The double Hadamard test procedure from \Cref{sec:double_hadamard} is
    then implemented using controlled versions of these local circuits, as depicted
    in \Cref{fig:double_hadamard_test}, and the size of any such circuit is of the
    order $r\cdot \mathrm{poly}(n)$.

    To show the final claim, we give an
    explicit procedure for performing the sampling task. We have already defined $c_{k,0}$ and $c_{k,1}$ whenever $k\in\partial A$. Define also $c_{k,0}=1$ and $c_{k,1}=0$ for every $k\in E_A\cup E_B$, and let $E:=E_A\cup E_B\cup \partial A$. First, set $\ell=1$ and
    for each $k\in E$ independently sample the Bernoulli random variable
    $\bm{x}_{\ell,k}\in \{0,1\}$ where
    \begin{align}
        \bm{x}_{\ell,k} = \begin{cases}
                              0 & \text{w/ prob.}\ \ \frac{c_{k,0}}{c_{k,0}+c_{k,1}}  \\
                              1 & \text{w/ prob.}\ \ \frac{c_{k,1}}{c_{k,0}+c_{k,1}}.
                          \end{cases}
    \end{align}
    Then repeat this procedure for each $\ell \in \{2,\dots, r\}$. The end result is a random vector $\vec{\bm{x}}$ indexed by elements in $\{0,1\}^{[r]\times E}$. Repeat this entire procedure once more, resulting in a random vector $\vec{\bm{y}}\in \{0,1\}^{[r]\times E}$. We may interpret each fixed value $\vec{x}\in \{0,1\}^{[r]\times E}$ as an index set for the vectors $c$ in the decomposition $\Gamma$ with elements
    \begin{align}
        c_{\vec{x}}:=\prod_{\ell\in [r]}\prod_{k\in E}c_{k, x_{\ell, k}}.
    \end{align}
    The 1- and 2-norms of these vectors can then be written explicitly as
    \begin{align}
        \norm{c}_1 = \sum_{\vec{x}}c_{\vec{x}} = \left(\prod_{k\in E} (c_{k,0}+c_{k,1})\right)^r,\quad\norm{c}_2 = \sqrt{\sum_{\vec{x}}c_{\vec{x}}} =  \left(\prod_{k\in E}\left(c_{k,0}^2 + c_{k,1}^2\right)\right)^{r/2}.
    \end{align}
    Let $\bm{g}\in \{0,1\}$ be an independent Bernoulli random variable such that $\bm{g}=0$ with probability $1/2$, and let $\calB$ denote the event where $\vec{\bm{x}}=\vec{\bm{y}}$ and $\bm{g}=1$. To sample from the desired distribution, we post-select on $\calB$ \emph{not} occurring, i.e., the complement of $\calB$ which we denote by $\calB^c$. We can accomplish this post-selection by allowing one to repeat the procedure if $\calB$ occurs, up to $K$ times, and declaring failure if none of the trials yields the event $\calB^c$. Since $\Pr[\calB]\leq 1/2$ we can amplify the success probability to an arbitrarily small value by taking $K$ sufficiently large. We have in particular that
    \begin{align}
        \Pr[\calB^c] = 1-\frac{1}{2}\sum_{\vec{x}}\prod_{\ell\in [r]}\prod_{k\in E}\frac{c_{k,x_{\ell,k}}^2}{(c_{k,0}+c_{k,1})^2} & =1-\frac{1}{2}\frac{\norm{c}_2^2}{\norm{c}_1^2}.
    \end{align}
    Next, observe that the random variables $\vec{\bm{x}}$, $\vec{\bm{y}}$, and $\bm{g}$ are distributed such that for each value of $g\in \{0,1\}$ and $\vec{x},\vec{y}\in\{0,1\}^{[r]\times E}$ which are in the event $\calB^c$ we have
    \begin{align}
        \Pr[\vec{\bm{x}}=\vec{x},\ \vec{\bm{y}}=\vec{y},\ \bm{g}=g\ | \ \calB^c] & = \frac{1}{2}\prod_{\ell\in [r]}\prod_{k\in E}\frac{c_{k,x_{\ell,k}}}{c_{k,0}+c_{k,1}}\cdot \frac{c_{k,y_{\ell,k}}}{c_{k,0}+c_{k,1}}\left(1-\frac{1}{2}\frac{\norm{c}_2^2}{\norm{c}_1^2}\right)^{-1} \\
        & = \frac{c_{\vec{x}}c_{\vec{y}}}{2\norm{c}_1^2}\left(1-\frac{1}{2}\frac{\norm{c}_2^2}{\norm{c}_1^2}\right)^{-1}\\
                                                                                 & = \frac{c_{\vec{x}}c_{\vec{y}}}{2\norm{c}_1^2 - \norm{c}_2^2}                                                                                                                                                \\
                                                                                 & = c_{\vec{x}}c_{\vec{y}}\phi(\Gamma)^{-1}.
    \end{align}
    Thus, the setting random variables $\vec{\bm{x}}$, $\vec{\bm{y}}$, and $\bm{g}$ produced by this post-selected random process are distributed as in \cref{eq:algo_prob_dist}, as desired.
\end{proof}
We conclude with some additional observations about the circuits appearing in the procedure above which may be of interest. We state these without proof.
\begin{enumerate}
    \item For each possible circuit, the subgraph of the circuit interaction graph
          restricted to qubits in $A$ is identical to that for the Hamiltonian
          interaction graph, and similarly for $B$.
    \item For each possible circuit, the vertex corresponding to $R_A$ in the circuit
          interaction graph is adjacent only to those qubits in $A$ with which $\partial
              A$ is incident, and similarly for $R_B$.
\end{enumerate}
These observations are also depicted in \Cref{fig:hamiltonian_intro_fig}.
\section{Time-like cuts}\label{sec:time_like}
In this section, we analyze the performance of a specific time-like cut of the
identity channel (i.e., \Cref{def:time_like_cut} with $\calN_{A\to A} = \id_A$)
for a natural operational task. We pick a decomposition of the form in
\cref{eq:time_like_cut_defn} which is optimal, i.e., the 1-norm of the
time-like cut is equal to the time-like gamma factor
$\gamma_{\uparrow}(\id_A)=2d_A-1$. Additionally, the required
measure-and-prepare operations $\calM_i$ can be implemented efficiently using
diagonal 2-designs, and there is no post-processing of ancilla qubits required,
so $d_{R_B}=1$ and \cref{eq:time_like_cut_defn} becomes
$\id_A=\sum_{i}a_i\calM_i$. Note that the fact that the time-like gamma factor
is at most $2d_A-1$ is immediate from the decomposition we give, while the
argument for the lower bound is nearly identical to the proof of
\Cref{claim:1_norm_lb}, so we omit it here. See
\cite{Yuan2021universalmemories, brenner2023optimal} for more detailed
discussions. To analyze the performance of the time-like cut in an operational
task, we introduce the following template for an algorithm with desirable
properties.

\algrenewcommand\algorithmicrequire{\textbf{Input:}}
\algrenewcommand\algorithmicensure{\textbf{Output:}}
\floatname{algorithm}{Algorithm}
\begin{algorithm}[H]
    \caption{Mean estimation using time-like cut without ancillas}\label{proc:wire_cutting}
    \begin{algorithmic}[1]
        \Require $\rho_{AE}^{\otimes N}$, observable $X$ on $AE$, $\veps$, $d_A$
        \Ensure Estimate $\hat{\bm{\mu}}$ of $\Tr(X\rho_{AE})$
        \For{$k=1,\dots, N$}
        \State Sample $\bm{z}_k\sim p$
        \State Prepare $\bm{\rho}^{\prime}_k = (\calM_{\bm{z}_k}\otimes \id_E)(\rho)$ using $k^\text{th}$
        copy of $\rho$
        \State $\bm{x}_k\gets$ measure $X$ on $\bm{\rho}^{\prime}_k$
        \EndFor
        \State $\hat{\bm{\mu}}\gets \mathrm{ClassicalPostProcessing}((\bm{z}_1,\bm{x}_1),\dots,(\bm{z}_N,\bm{x}_N))$\label{line:classical_post_processing_wire}
        \State \Return $\hat{\bm{\mu}}$
    \end{algorithmic}
\end{algorithm}

To instantiate the algorithm, one specifies the classical post-processing step
and a choice of an ensemble of measure-and-prepare channels
$\{(p_z,\calM_z)\}$. We say that the algorithm is successful if its output
satisfies $|\hat{\bm{\mu}}-\Tr(X\rho_{AE})|\leq \veps$, and we would like to
bound the number of iterations, or \textit{copies}, $N$ required for the
algorithm to succeed with high probability using our time-like cut. Using the
reasoning based on Hoeffding's Inequality presented in \Cref{sec:qpd} and in
prior work, $N=O(\norm{a}_1^2/\veps^2)$ copies should suffice. In
\Cref{sec:good_mp_channels} we show that going beyond this analysis by bounding
the variance directly leads to an improved upper bound for some cases. In
\Cref{sec:lower_bound} we verify that this analysis is tight when $X$ is
rank-1, using an information-theoretic argument.

\subsection{The performance of optimal measure-and-prepare channels}\label{sec:good_mp_channels}
We show the following.
\begin{theorem}\label{thm:rank_dep_wire_cutting}
    Let $A$ be a subset of the qubits in an $n$-qubit quantum system. There exists a pair of
    measure-and-prepare channels
    $\calM_0,\calM_1: \sfL(\calH_A)\to\sfL(\calH_A)$ and a choice of
    ensemble distribution $p:\{0,1\}\to [0,1]$
    such that \Cref{proc:wire_cutting} succeeds with high probability
    using
    \begin{align}\label{eq:rank_dep_bound}
        N=O\left(d_A \veps^{-2}(1+\norm{\Tr_A(X^2)})\right)
    \end{align}
    copies of the unknown state. Furthermore, $\calM_0,\calM_1$ can be implemented
    using $O(\log^2(d_A))$ diagonal 2-qubit gates along with measurement and
    state preparation in the computational basis.
\end{theorem}

Let us first remark on some consequences of the bound in
\cref{eq:rank_dep_bound}. The operator norm in the right-hand side of
\cref{eq:rank_dep_bound} can in turn be bounded by $d_A^{(q-1)/q} \norm{X^2}_q$
for any $q\geq 1$ using the results of~\cite[{Prop.~1}]{Rastegin_2012}. The
following two consequences are of particular interest. When $q=\infty$, we get
$N=O(d_A^2/\veps^2)$, which reproduces the results obtained in prior work.
(More precisely, the dependence on the dimension scales at most like
$(2d_A-1)^2$ in this case.) For constant error $\veps=O(1)$, taking $q=1$ and
using the fact that all the eigenvalues of $X$ have magnitude at most $1$, we
find $N=O(d_A r)$ where $r$ is the rank of the observable $X$. This implies,
for instance, that additive error estimates of the output probabilities of a
unitary quantum circuit can be computed using $O(d_A)$ rounds of the above
procedure, which is a quadratic improvement over the bounds in prior work and
enables us to conclude that the information-theoretic lower bound we derive in
\Cref{sec:lower_bound} is tight in some cases.

\subsubsection*{Proof of \Cref{thm:rank_dep_wire_cutting}}
The Choi state of the identity channel $\id_{A\to A}$ is just the maximally entangled state $\Phi_A$. We will describe a pair of efficiently implementable measure-and-prepare channels $\calM_0,\calM_1:\sfL(\calH_A)\to\sfL(\calH_A)$ satisfying
\begin{align}\label{eq:choi_constraint_isom}
    \Phi_A = d_A J_{\mathcal{M}_0} - (d_A-1) J_{\mathcal{M}_1}.
\end{align}
This is equivalent to showing $\id_A = d_A\calM_0 - (d_A-1)\calM_1$, which is a space-like cut of $\id_A$ with 1-norm $2d_A-1$. Let $\{\ket{1},\ket{2},\dots,\ket{d_A}\}$ denote the standard basis for $\calH_A$ and consider a uniformly random ``equatorial" state
\begin{align}
    \ket{v_{\bm{\theta}}} := \frac{1}{\sqrt{d_A}}\sum_{j=1}^{d_A} \mathrm{e}^{\mathrm{i}\bm{\theta}_j}\ket{j}
\end{align}
where $\bm{\theta}=\bm{\theta}_1\bm{\theta}_2\dots\bm{\theta}_{d_A}$ for $\bm{\theta}_j$ drawn independently and uniformly at random from $[0,2\pi)$. Define
\begin{align}
     & \calM_0\colon \rho \mapsto d_A\expct_{\bm{\theta}} \Tr(\outerprod{v_{\bm{\theta}}}{v_{\bm{\theta}}}\rho) \outerprod{v_{\bm{\theta}}}{v_\btheta},\label{eq:plus_channel} \\
     & \calM_1\colon \rho\mapsto \frac{1}{d_A-1}\sum_{k\neq \ell}\Tr(\outerprod{k}{k}\rho)\outerprod{\ell}{\ell}
\end{align}
where $k,\ell \in [d_A]$. Let us first check that this choice satisfies \cref{eq:choi_constraint_isom}. We have
\begin{align}
    J_{\calM_0} & =(\mathrm{id}\otimes \calM_0)(\Phi)                                                                                                           \\ &= \sum_{j,k=1}^{d_A}\outerprod{j}{k}\expct_{\bm{\theta}}\langle v_{\bm{\theta}}| j\rangle \langle k | v_{\bm{\theta}}\rangle \outerprod{v_{\bm{\theta}}}{v_{\bm{\theta}}}\\
                & = \expct\left[\outerprod{\overline{v_{\bm{\theta}}}}{\overline{v_{\bm{\theta}}}} \otimes \outerprod{v_{\bm{\theta}}}{v_{\bm{\theta}}}\right]
\end{align}
and
\begin{align}
    J_{\calM_1} = \frac{1}{d_A (d_A-1)}\sum_{k\neq \ell}\outerprod{k}{k}\otimes \outerprod{\ell}{\ell}.
\end{align}
On the other hand, it is fairly straightforward to verify that
\begin{align}\label{eq:2nd_moment_equatorial}
    \expct\left[(\outerprod{v_{\bm{\theta}}}{v_{\bm{\theta}}})^{\otimes 2}\right] = \frac{1}{d_A^2}\left(F+ \sum_{k\neq \ell} \outerprod{k\ell}{k\ell}\right)
\end{align}
where $F$ is the swap operation on $(\mathbb{C}^{d_A})^{\otimes 2}$.
By taking partial transposes of both sides, \cref{eq:2nd_moment_equatorial} holds if and only if
\begin{align}
    \expct\left[\outerprod{\overline{v_{\bm{\theta}}}}{\overline{v_{\bm{\theta}}}} \otimes \outerprod{v_{\bm{\theta}}}{v_{\bm{\theta}}}\right] & = \frac{1}{d_A}\left(\Phi+\frac{1}{d_A}\sum_{k\neq \ell} \outerprod{k\ell}{k\ell}\right)\label{eq:choi_constraint}                      \\
                                                                                                                                               & = \frac{1}{d_A}\left(\Phi+ \frac{d_A-1}{d_A(d_A-1)}\sum_{k\neq \ell} \outerprod{k\ell}{k\ell}\right)\label{eq:choi_state_of_plus_channel}
\end{align}
Hence, $\calM_0$ and $\calM_1$ satisfy \cref{eq:choi_constraint_isom}.

We now explain how they can be implemented efficiently on a quantum circuit on
$n$ qubits, assuming $d_A=2^n$. The channel $\calM_1$ is straightforward to
implement using measurements and state preparations in the computational basis,
so we focus on $\calM_0$. We claim that the following procedure implements the
channel $\calM_0$.
\begin{tcolorbox}[colback=white, arc=0pt, boxrule=0.5pt]
    \begin{protocol}[Optimal measure-and-prepare channel $\calM_0$]\label{prot:prot_1}$ $\vspace{0.5em}\newline
        \textbf{Input}: $\rho\in\mathsf{D}(\calH_A\otimes \calH_E)$ for $A$ an $n$-qubit system.\vspace{0.5em}\\
        \textbf{Output}: $(\calM_0\otimes \id_E)(\rho)$, where $\calM_0$ is as in \cref{eq:plus_channel}.
        \begin{enumerate}
            \item Apply a phase-random circuit $\bm{U}^\dag$ to the system $A$.
            \item Apply single-qubit Hadamard gates on each qubit in $A$.
            \item Measure the system $A$ in the computational basis, obtaining $x\in\{0,1\}^n$.
            \item Prepare the state $\bm{U} H^{\otimes n} \ket{x}_A$ on system $A$.
        \end{enumerate}
    \end{protocol}
\end{tcolorbox}
This implements a channel acting on subsystem $A$ with the action
\begin{align}\label{eq:new_eqn_label}
    \rho \mapsto \expct_{\bm{U}} \sum_{x\in\{0,1\}^n} \Tr(\bm{U}\outerprod{h_x}{h_x}\bm{U}^\dag \rho) \bm{U} \outerprod{h_x}{h_x}\bm{U}^\dag
\end{align}
for any $\rho\in \sfL(\calH_A)$ where we let $\ket{h_x}$ denote the state $H^{\otimes n} \ket{x}$. The right-hand side of the above is equal to
\begin{align}
    \sum_{x\in\{0,1\}^n}\expct_{\bm{U}} \Tr_1 \left((\rho\otimes \mathds{1}) (\bm{U}\outerprod{h_x}{h_x}\bm{U}^\dag)^{\otimes 2}\right) & = \sum_{x\in\{0,1\}^n}\Tr_1 \left((\rho\otimes \mathds{1}) \expct_{\bm{U}}\ (\bm{U}\outerprod{h_x}{h_x}\bm{U}^\dag)^{\otimes 2}\right)                                                     \\
                                                                                                                       & = \sum_{x\in\{0,1\}^n} \Tr_1 \left((\rho\otimes \mathds{1}) \expct_\btheta \ V_\btheta^{\otimes 2} \outerprod{h_x}{h_x}^{\otimes 2} (V_\btheta^{\dag})^{\otimes 2}\right) \\
                                                                                                                       & = d_A \Tr_1 \left((\rho\otimes \mathds{1}) \expct_\btheta\ (\outerprod{v_\btheta}{v_\btheta})^{\otimes 2}\right)                                                          \\
                                                                                                                       & = d_A\expct_{\bm{\theta}} \Tr(\outerprod{v_{\bm{\theta}}}{v_{\bm{\theta}}}\rho) \outerprod{v_{\bm{\theta}}}{v_\btheta}
\end{align}
where the second line follows since $\bm{U}$ forms a diagonal unitary $2$-design and the third line follows from the fact that $V_\btheta \ket{h_x}$ is identically distributed to $\ket{v_\btheta}$ for any $x\in\{0,1\}^n$. The total gate complexity of this procedure is $O(n^2)$ and it is dominated by the phase-random circuit.

It remains to bound the number of additional samples required to achieve the
simulation task using these measure-and-prepare channels. To this end, consider
randomly applying one of the two possible modified circuits in the above scheme
according to a Bernoulli random variable $\bm{z}$ such that $\bm{z}=0$ with
probability $d_A/(2d_A-1)$ and $\bm{z}=1$ with probability $(d_A-1)/(2d_A-1)$.
For each possible outcome $z=0,1$, this yields the final state
\begin{align}\label{eq:sigma_defn}
    \sigma_z := (\calM_{z}\otimes \id_E) (\rho_{AE}).
\end{align}
Next, fix an eigendecomposition of $X$ of the form
$X=\sum_{j=1}^{d_{AE}} \lambda_j \outerprod{v_j}{v_j}$,
such that $\lambda_1\geq \lambda_2\geq \dots \geq \lambda_{d_{AE}}$.
If we measure $X$ we obtain a random variable $\bm{y}\in \mathrm{spec}(X)$
such that, conditioned on $\bm{z}=z$, we have $\bm{y}=y$ with
probability
$\sum_{j\in [d_{AE}]:\lambda_j=y} \bra{v_j}\sigma_z\ket{v_j}$.
As described in \Cref{sec:qpd}, we take our
unbiased estimator of the true expectation
$\mu := \Tr(X\rho_{AE})$ to be
\begin{align}
    \hat{\bm{\mu}} := (2d_A-1)(-1)^{\bm{z}} \ \bm{y}.
\end{align}
\begin{claim}
    It holds that $\expct \hat{\bm{\mu}} = \mu$ and
    \begin{align}
        \Var[\hat{\bm{\mu}}]\leq (2 d_A - 1)\cdot \min\{2\norm{\Tr_A(X^2)}+1,\ 2d_A-1\}.
    \end{align}
\end{claim}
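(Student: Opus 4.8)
The plan is to treat the two assertions of the claim separately, both resting on the channel identity $\id_A = d_A\calM_0 - (d_A-1)\calM_1$ (equivalently \Cref{eq:choi_constraint_isom}) established above, together with the two conditional-moment identities $\expct[\bm{y}\mid\bm{z}=z] = \Tr(X\sigma_z)$ and $\expct[\bm{y}^2\mid\bm{z}=z] = \Tr(X^2\sigma_z)$; the latter is immediate from the fixed eigendecomposition $X = \sum_j\lambda_j\outerprod{v_j}{v_j}$ and the law of $\bm{y}$. For unbiasedness, since $\Pr[\bm{z}=0] = d_A/(2d_A-1)$ and $\Pr[\bm{z}=1] = (d_A-1)/(2d_A-1)$, one computes $\expct\hat{\bm{\mu}} = d_A\Tr(X\sigma_0) - (d_A-1)\Tr(X\sigma_1) = \Tr\big(X\,((d_A\calM_0-(d_A-1)\calM_1)\otimes\id_E)(\rho)\big) = \Tr(X\rho) = \mu$.

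For the variance, note $(-1)^{2\bm{z}}=1$, so $\hat{\bm{\mu}}^2 = (2d_A-1)^2\bm{y}^2$, whence $\Var[\hat{\bm{\mu}}] = \expct[\hat{\bm{\mu}}^2]-\mu^2 \le \expct[\hat{\bm{\mu}}^2] = (2d_A-1)\,S$ with $S := d_A\Tr(X^2\sigma_0) + (d_A-1)\Tr(X^2\sigma_1)$. The bound $S \le 2d_A-1$ is immediate from $\Tr(X^2\sigma_z)\le\norm{X^2}\le 1$ (using $\norm{X}\le 1$). The refined bound $S \le 2\norm{\Tr_A(X^2)}+1$ comes from the telescoping $d_A\calM_0 + (d_A-1)\calM_1 = \id_A + 2(d_A-1)\calM_1$, which gives $S = \Tr(X^2\rho) + 2(d_A-1)\Tr\big(X^2(\calM_1\otimes\id_E)(\rho)\big)$; the first summand is $\le 1$, and the task reduces to showing $(d_A-1)\Tr\big(X^2(\calM_1\otimes\id_E)(\rho)\big) \le \norm{\Tr_A(X^2)}$.

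To finish, I would plug in the explicit form of $\calM_1$: writing $\rho^{(k)} := \bra{k}_A\rho_{AE}\ket{k}_A$ and $\tau := \Tr_A(\rho_{AE}) = \sum_k\rho^{(k)}$, one has $(\calM_1\otimes\id_E)(\rho) = \tfrac{1}{d_A-1}\sum_\ell \outerprod{\ell}{\ell}_A\otimes(\tau-\rho^{(\ell)})$, hence $(d_A-1)\Tr\big(X^2(\calM_1\otimes\id_E)(\rho)\big) = \sum_\ell\Tr\big((X^2)_{\ell\ell}(\tau-\rho^{(\ell)})\big)$ where $(X^2)_{\ell\ell} := \bra{\ell}_A X^2\ket{\ell}_A \succeq 0$. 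Since $\sum_\ell(X^2)_{\ell\ell} = \Tr_A(X^2)$ and each $\Tr\big((X^2)_{\ell\ell}\rho^{(\ell)}\big) \ge 0$ (trace of a product of PSD operators), this is at most $\Tr\big(\Tr_A(X^2)\,\tau\big) \le \norm{\Tr_A(X^2)}$. Taking the minimum of the two bounds on $S$ and multiplying by $2d_A-1$ gives the claim. The main obstacle is the refined variance bound — specifically, spotting the telescoping so that the potentially $O(d_A^2)$-sized contribution is isolated in the single dephasing-type channel $\calM_1$, and then keeping careful track of the register $E$ (working with the diagonal blocks $(X^2)_{\ell\ell}$ and the unnormalized operators $\rho^{(k)}$) so as to extract exactly $\norm{\Tr_A(X^2)}$ rather than a cruder quantity.
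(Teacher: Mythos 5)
Your proof is correct, and it diverges from the paper's in the one step that actually matters: the refined bound on the second moment $S = d_A\Tr(X^2\sigma_0)+(d_A-1)\Tr(X^2\sigma_1)$. The paper bounds the two terms separately; for the $\calM_0$ term it passes to the adjoint, $\Tr(X^2\sigma_0)=\Tr((\calM_0\otimes\id_E)(X^2)\rho)$, and then invokes the explicit second moment of the random equatorial states (\Cref{eq:2nd_moment_equatorial}) to show $(\calM_0\otimes\id_E)(X^2)\preceq\frac{1}{d_A}\left[X^2+\mathds{1}_A\otimes\Tr_A(X^2)\right]$, which yields $d_A\Tr(X^2\sigma_0)\leq 1+\norm{\Tr_A(X^2)}$; the $\calM_1$ term is handled as you do, by dropping the positive diagonal contribution. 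You instead never touch the structure of $\calM_0$ at all: the telescoping $d_A\calM_0+(d_A-1)\calM_1=\id_A+2(d_A-1)\calM_1$, which is an exact consequence of the QPD identity \Cref{eq:choi_constraint_isom}, reduces everything to the dephasing-type channel $\calM_1$, and your block computation with $(X^2)_{\ell\ell}\succeq 0$ and $\sum_\ell (X^2)_{\ell\ell}=\Tr_A(X^2)$ is sound (as is the final step $\Tr(\Tr_A(X^2)\,\tau)\leq\norm{\Tr_A(X^2)}$ since $\Tr\tau=1$). Both routes land on exactly $S\leq 1+2\norm{\Tr_A(X^2)}$. What the paper's approach buys is information about $\calM_0$ itself (the operator inequality for $(\calM_0\otimes\id_E)(X^2)$ could be reused elsewhere, and the argument generalizes to any 2-design-based $\calM_0$ whose second moment one can compute); what yours buys is robustness and economy --- it applies verbatim to \emph{any} channel $\calM_0$ satisfying $d_A\calM_0-(d_A-1)\calM_1=\id_A$ with this particular $\calM_1$, and avoids recomputing moments of the equatorial ensemble. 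The unbiasedness argument and the crude $(2d_A-1)^2$ bound match the paper's.
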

\begin{proof}
    First, we have
    \begin{align}
        \expct \hat{\bm{\mu}} & = d_A \expct [\bm{y}| \bm{z}=0] - (d_A-1)\expct [\bm{y}|\bm{z}=1] \\
                              & = d_A \Tr(X \sigma_0) - (d_A-1)\Tr(X \sigma_1)                    \\
                              & =\mu
    \end{align}
    where the last line follows by making use of
    \cref{eq:sigma_defn}, the linearity of trace,
    and the fact that
    $d_A\calM_0 - (d_A-1)\calM_1 = \id_{A}$. The
    bound $\Var[\hat{\bm{\mu}}]\leq (2d_A-1)^2$ follows from
    the definition of $\hat{\bm{\mu}}$, since
    $\norm{X}\leq 1$ and therefore
    $|\hat{\bm{\mu}}|\leq 2d_A-1$ with probability 1.
    Finally, we bound the variance by
    the second moment, which is equal to
    \begin{align}
        \expct \hat{\bm{\mu}}^2 & = (2d_A-1)^2\left(\frac{d_A}{2d_A-1}\expct \left[\bm{y}^2|\bm{z}=0\right] + \frac{d_A-1}{2d_A-1}\expct\left[\bm{y}^2|\bm{z}=1\right]\right) \\
                                & = (2d_A-1)\left(d_A\Tr(X^2 \sigma_0) + (d_A-1)\Tr(X^2 \sigma_1)\right).\label{eq:second_moment_bound}
    \end{align}
    Using the fact that $\calM_0$ and $\calM_1$ are self-adjoint maps, we have
    \begin{align}
        \Tr(X^2 \sigma_z) = \Tr((\calM_z\otimes\id_E)(X^2) \rho_{AE})
    \end{align}
    for each $z\in \{0,1\}$. We may then compute
    \begin{align}
        (\calM_0 \otimes \id_E) (X^2) & = d_A \expct_{\bm{\theta}} \Tr_2\left((\outerprod{v_{\bm{\theta}}}{v_{\bm{\theta}}}^{\otimes 2}_{12}\otimes \mathds{1}_3)(\mathds{1}_1\otimes X^2_{23})\right) \\
                                      & \preceq \frac{1}{d_A}\left[\Tr_2\left((F_{12}\otimes \mathds{1}_3)(\mathds{1}_1\otimes X^2_{23})\right)+\Tr_2\left(\mathds{1}_1\otimes X^2_{23}\right)\right]  \\
                                      & = \frac{1}{d_A}\left[X^2 + \mathds{1}_A \otimes  \Tr_{A}(X^2)\right]
    \end{align}
    where the second line follows from \cref{eq:2nd_moment_equatorial} as well as the fact that
    \begin{align}
        \sum_{k\neq \ell} \outerprod{k\ell}{k\ell} & = \mathds{1}\otimes\mathds{1} - \sum_{k} \outerprod{kk}{kk}
    \end{align}
    and the operator
    \begin{align}
        \Tr_2 \left((\outerprod{kk}{kk}_{12}\otimes\mathds{1}_3)(\mathds{1}_1\otimes X^2_{23})\right)
    \end{align}
    is positive semidefinite for all $k$.
    Therefore,
    \begin{align}
        \Tr(X^2\sigma_0) & \leq \frac{1}{d_A}\left[\Tr(X^2 \rho_{AE}) + \Tr(\Tr_A(X^2)\rho_E)\right]             \\
                         & \leq \frac{1}{d_A}\left(1 +\norm{\Tr_A(X^2)}\right).\label{eq:second_moment_helper_1}
    \end{align}
    where $\rho_E = \Tr_A(\rho_{AE})$ is the reduced state of the qubits which are not acted upon by the measure-and-prepare channels, and the second line follows since $\norm{O}\leq 1$. Similarly, we can bound the second term in \cref{eq:second_moment_bound} by observing that
    \begin{align}
        (\calM_1\otimes \id_E)(X^2) & = \frac{1}{d_A-1}\sum_{k\neq \ell} \Tr_2\left((\outerprod{k}{k}_1\otimes \outerprod{\ell}{\ell}_2\otimes \mathds{1}_3)(\mathds{1}_1\otimes X^2_{23})\right) \\
                                    & \preceq \frac{\Tr_2(\mathds{1}_1\otimes X^2_{23})}{d_A-1}                                                                                                   \\
                                    & = \frac{\mathds{1}_A\otimes \Tr_A(X^2)}{d_A-1}
    \end{align}
    and therefore
    \begin{align}
        \Tr(X^2\sigma_1) & \leq \frac{\norm{\Tr_A(X^2)}}{d_A-1}.\label{eq:second_moment_helper_2}
    \end{align}
    Substituting \cref{eq:second_moment_helper_1}
    and \cref{eq:second_moment_helper_2} into
    \cref{eq:second_moment_bound} yields the bound on the
    variance as claimed.
\end{proof}
Using the bound on the variance in the above claim as well as Chebyshev's Inequality concludes the proof of \Cref{thm:rank_dep_wire_cutting}. We remark that the result would follow from a similar analysis based on unitary 2-designs (e.g., random Clifford circuits) rather than diagonal 2-designs, though we chose the latter since the resulting pair of channels $\calM_0$, $\calM_1$ achieve the optimal space-like cut.

\subsection{An information-theoretic lower bound}\label{sec:lower_bound}

In this section, we show an information-theoretic lower bound of $\Omega(d_A)$
on the number of copies required in any instantiation of
\Cref{proc:wire_cutting} whenever $X$ is rank-1. In particular, this allows us
to conclude that the bound in \Cref{thm:rank_dep_wire_cutting} is tight for
the special case where one is interested in output probabilities of quantum
circuits. Furthermore, the lower bound suggests that, unlike in classical
shadows~\cite{Huang2020}, for example, a dependence on $d_A$ in the number of
samples required for circuit cutting is unavoidable even if one restricts the
task to estimating expectation values of low-rank observables.
\begin{theorem}\label{thm:time_like_lower_bound}
    Consider the setting in \Cref{proc:wire_cutting}.
    Suppose $X$ is known to be a rank-1 projection operator.
    Then the choice of measure-and-prepare channels
    in the proof of \Cref{thm:rank_dep_wire_cutting}
    is sample-optimal with
    respect to the dimension of subsystem $A$; that is,
    $N=\Theta(d_A)$ copies of the unknown state are both necessary and sufficient
    for a procedure of the form in \Cref{proc:wire_cutting}
    to succeed with high probability.
\end{theorem}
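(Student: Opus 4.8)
The matching upper bound $N=O(d_A)$ (for constant $\veps$) is already contained in \Cref{prop:rank_dep_wire_cutting}: when $X$ is a rank-$1$ projection we have $X^2=X$, so $\Tr_A(X^2)=\Tr_A(X)$ is a reduced density operator and $\norm{\Tr_A(X^2)}\le 1$, whence \Cref{eq:rank_dep_bound} reads $N=O(d_A\veps^{-2})$. So the task is to show $N=\Omega(d_A)$ for some worst-case rank-$1$ projection, which I would do by Le Cam's two-point method. Take $\calH_E$ to contain a subsystem $E'$ of dimension $d_{E'}=d_A=:d$, let $\ket{\Phi}$ be maximally entangled on $AE'$, and set $X:=\outerprod{\Phi}{\Phi}$ — a rank-$1$ projection, unitarily equivalent to $\outerprod{0^n}{0^n}$, hence an instance of ``estimate an output probability of a quantum circuit.'' Consider two state instances: $\rho^{(0)}:=\outerprod{\Phi}{\Phi}$ with $\Tr(X\rho^{(0)})=1$, and $\rho^{(1)}:=\outerprod{\psi}{\psi}$ for any unit vector $\ket{\psi}\perp\ket{\Phi}$, with $\Tr(X\rho^{(1)})=0$. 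For any $\veps<1/2$, an instantiation of \Cref{proc:wire_cutting} that returns $\hat{\bm{\mu}}$ with $|\hat{\bm{\mu}}-\Tr(X\rho)|\le\veps$ with probability $\ge 2/3$ yields (thresholding $\hat{\bm{\mu}}$ at $1/2$) a test distinguishing $\rho^{(0)}$ from $\rho^{(1)}$ with probability $\ge 2/3$, so it suffices to bound the total-variation distance between the transcript distributions.

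The crux is the following structural fact: for every measure-and-prepare channel $\calM\colon\sfL(\calH_A)\to\sfL(\calH_A)$ and every state $\tau_{AE}$,
\begin{align*}
    \Tr\!\big(X\,(\calM\otimes\id_E)(\tau_{AE})\big)\ \le\ \frac{1}{d}.
\end{align*}
Indeed, writing $\calM(\cdot)=\sum_i\Tr(E_i\,\cdot)\,\omega_i$ with $\{E_i\}$ a POVM on $A$ and $\omega_i\in\sfD(\calH_A)$, the output $(\calM\otimes\id_E)(\tau)=\sum_i\omega_i\otimes\Tr_A((E_i\otimes\mathds{1}_E)\tau)$ is a separable state across $A|E$, since each $\Tr_A((E_i\otimes\mathds{1}_E)\tau)$ is a positive operator on $E$. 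The bound then follows from the well-known fact that a maximally entangled state has overlap at most $1/\sqrt d$ with any product vector, hence squared overlap at most $1/d$ with any separable state. Consequently, conditioned on the channel index $\bm{z}_k$, the outcome $\bm{x}_k\in\{0,1\}$ of measuring the projection $X$ is a Bernoulli variable with parameter $\Tr(X(\calM_{\bm{z}_k}\otimes\id_E)(\rho^{(b)}))\le 1/d$ under \emph{both} instances $b\in\{0,1\}$.

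Given this, by a union bound the probability that some $\bm{x}_k=1$ over $N$ rounds is at most $N/d$ under either instance. Let $Q$ be the (instance-independent) distribution of $(\bm{z}_1,0,\dots,\bm{z}_N,0)$ with $\bm{z}_k\sim p$ i.i.d., and let $P_b$ be the transcript distribution under instance $b$; since $P_b$ agrees with $Q$ on all-zero transcripts up to the factor $\prod_k(1-\Tr(X(\calM_{z_k}\otimes\id_E)\rho^{(b)}))\le 1$, one gets $d_{\mathrm{TV}}(P_b,Q)\le\Pr_b[\exists k:\bm{x}_k=1]\le N/d$, and hence $d_{\mathrm{TV}}(P_0,P_1)\le 2N/d$. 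As distinguishing with probability $\ge 2/3$ requires $d_{\mathrm{TV}}(P_0,P_1)\ge 1/3$, we conclude $N\ge d/6=\Omega(d_A)$, which combined with the upper bound gives $N=\Theta(d_A)$.

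I expect the main obstacle to be pinning down the hard instance: recognizing that a maximally entangled projection $X$ together with \emph{pure} states $\rho^{(0)},\rho^{(1)}$ already suffices — so that, unlike classical data hiding against general LOCC, mixed states are not needed here — and extracting the clean estimate $\Tr(X(\calM\otimes\id_E)(\tau))\le 1/d_A$ from the separability of measure-and-prepare outputs. Once that estimate is in hand the statistical step is a routine union bound plus Le Cam. A minor caveat is that the construction needs $d_E\ge d_A$, i.e. the cut subsystem comprises at most half the qubits, which is the regime of interest anyway.
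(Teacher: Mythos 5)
Your proof is correct, but it reaches the lower bound by a genuinely different route than the paper. The paper's hard instance is a Haar-random unitary $U$ on $\calH_A\otimes\calH_E$: it distinguishes $U\outerprod{1}{1}U^\dag$ from $U\outerprod{2}{2}U^\dag$ with observable $X=U\outerprod{1}{1}U^\dag$, and shows by second-moment Haar integration over the Kraus operators $E_j=\outerprod{\psi_j}{\phi_j}\otimes\mathds{1}_E$ of the measure-and-prepare channel that the acceptance probability is $O(1/d_A)$ \emph{on average} over $U$ under both hypotheses, then extracts a fixed good $U$ and telescopes the TV distance. You instead exhibit an explicit instance --- $X=\outerprod{\Phi}{\Phi}$ with the two pure states $\ket{\Phi}$ and $\ket{\psi}\perp\ket{\Phi}$ --- and derive the same $O(1/d_A)$ acceptance bound pointwise from the observation that the output of a measure-and-prepare channel on $A$ is separable across $A|E$, so its overlap with a maximally entangled projector is at most $1/d_A$. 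The statistical step (both transcript distributions are within $N/d_A$ of the deterministic all-zero distribution, hence within $2N/d_A$ of each other) is essentially identical in the two proofs. Your version is more elementary and makes the data-hiding intuition --- which the paper only gestures at in its introduction --- completely explicit, including why \emph{pure} hiding states suffice against this restricted measurement class. What the paper's version buys in exchange is generality in the dimension of the reference system: the Haar argument gives $\Omega(d_A)$ for any $d_E\geq 1$, including the extreme case where all wires are cut and $E$ is trivial, whereas your construction requires $d_E\geq d_A$ (a caveat you correctly flag, and which does not affect the validity of the theorem as stated, since one hard instance suffices). One cosmetic point: as written, $\outerprod{\Phi}{\Phi}_{AE'}$ is rank $d_{E\setminus E'}$ as an operator on $\calH_{AE}$ when $E'\subsetneq E$; you should either take $E=E'$ or tensor with a rank-one projector on $E\setminus E'$ to keep $X$ rank-1.
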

\begin{proof}
    The upper bound follows directly from \Cref{thm:rank_dep_wire_cutting} using the fact that
    $\norm{\Tr_A(X^2)}\leq 1$ in this case. For the lower bound,
    consider the state discrimination task in which the goal is to
    distinguish between the two alternatives
    $\rho^{(1)}_U=U\outerprod{1}{1}U^\dag$
    and $\rho^{(2)}_U=U\outerprod{2}{2}U^\dag$, where
    $U\in \sfU(\calH_A\otimes\calH_E)$ is some unitary operator.
    Clearly, this task
    reduces to estimating $\Tr(X \rho)$ for
    the unknown state $\rho\in \{\rho^{(1)}_U,\rho^{(2)}_U\}$ and
    with the observable taken to be $X=U\outerprod{1}{1}U^\dag$. Namely,
    if the estimate $\hat{\mu}$ is sufficiently accurate, then outputting
    $1$ if $\hat{\mu}\geq 1/2$ and $2$ otherwise results in a successful
    discrimination. Hence, it suffices to show the existence of a
    unitary $U$ such that the information available from the procedure
    described in \Cref{proc:wire_cutting} is insufficient
    to identify $\rho$ unless it is repeated $N=\Omega(d_A)$
    times. To this end, let $\bm{x}\in \{1,2\}$ be a random variable,
    let $\bm{U}$ be a Haar-random unitary, let $\bm{z}$ denote the choice
    of measure-and-prepare channel, and let
    $\bm{y}\in \{0,1\}$ be the random variable corresponding to measuring
    $(\calM_{\bm{z}}\otimes \id_E)(\rho^{(\bm{x})}_{\bm{U}})$ according to the
    POVM $\{M,\mathds{1}-M\}$ where $M=\bm{U}\outerprod{1}{1}\bm{U}^\dag$
    and $M$ is the POVM element corresponding to the outcome $\bm{y}=0$.
    Let us now define the following shorthand for the joint distribution
    of the random variables $(\bm{y},\bm{z})$ (which are the ones
    available for use in the state discrimination task) conditioned on
    the others. For each
    $x\in \{1,2\}$, $U\in\sfU(\calH_A\otimes\calH_B)$, and
    $z$ a possible value of $\bm{z}$ let
    \begin{align}
        p^{(x)}_U(0,z) & := \Pr[\bm{y}=0,\bm{z}=z| \bm{x}=x, \bm{U}=U]
    \end{align}
    and $p^{(x)}_U(1,z):=1-p^{(x)}_U(0,z)$. We claim that
    \begin{align}\label{eq:expected_tv_distance}
        \expct_{\bm{U}\sim \mathrm{Haar}}
        d_{\mathrm{TV}}(p^{(1)}_{\bm{U}}, p^{(2)}_{\bm{U}})
        \leq O\left(\frac{1}{d_A}\right).
    \end{align}
    Therefore there exists a fixed unitary $U\in\sfU(\calH_A\otimes\calH_E)$
    for which the TV distance between the distributions
    $p^{(1)}_U$ and $p^{(2)}_U$ is at most $O(1/d_A)$. By the telescoping
    property of the TV distance,
    \begin{align}
        d_{\mathrm{TV}}((p^{(1)}_{U})^{\otimes N}, (p^{(2)}_{U})^{\otimes N})
        \leq O\left(\frac{N}{d_A}\right)
    \end{align}
    and therefore the left-hand side is small unless $N=\Omega(d_A)$.
    It remains to show \cref{eq:expected_tv_distance}.
    Let $q$ denote the marginal distribution of $\bm{z}$ and
    define $p_{U,z}^{(x)}(y)=p_{U}^{(x)}(y,z)/q(z)$. Then the left-hand
    side of \cref{eq:expected_tv_distance} is equal to
    $
        \expct_{\bm{z}\sim q} \expct_{\bm{U}\sim \mathrm{Haar}}
        d_{\mathrm{TV}}(p^{(1)}_{\bm{U},\bm{z}}, p^{(2)}_{\bm{U},\bm{z}})
    $
    since $\bm{U}$ and $\bm{z}$ are independent. Also, for any $z$ we have
    \begin{align}
        \expct_{\bm{U}\sim\mathrm{Haar}}
        d_{\mathrm{TV}}(p_{\bm{U},z}^{(1)},p_{\bm{U},z}^{(2)})
         & = \expct_{\bm{U}\sim\mathrm{Haar}}
        \left\lvert p_{\bm{U},z}^{(1)}(0) - p_{\bm{U},z}^{(2)}(0)\right\rvert \\
         & \leq \expct_{\bm{U}\sim\mathrm{Haar}}
        \left(p_{\bm{U},z}^{(1)}(0) + p_{\bm{U},z}^{(2)}(0)\right).\label{eq:triangle_ineq_tv}
    \end{align}
    Hence, it suffices to show that both terms in \cref{eq:triangle_ineq_tv} are $O(1/d_A)$ for any value of $z$.
    Let $\{E_j\}\subset \sfL(\calH_A\otimes\calH_E)$ be the
    Kraus operators corresponding to the
    channel $\calM_z\otimes \id_E$. We let $F_{\pi}$ denote the permutation operator corresponding to the permutation $\pi\in \mathfrak{S}_4$. For the first term, we compute
    \begin{align}
        \expct_{\bm{U}\sim\mathrm{Haar}}p_{\bm{U},z}^{(1)}(0)
         & = \sum_j\expct_{\bm{U}}
        \Tr\left\{\bm{U}\outerprod{1}{1}\bm{U}^\dag E_j \bm{U}\outerprod{1}{1}\bm{U}^\dag E_j^\dag\right\}                                                             \\
         & = \sum_j \Tr\left\{F_{(1324)}\left(\expct \ (\bm {U}\outerprod{1}{1}\bm{U}^\dag)^{\otimes 2}\otimes E_j\otimes E_j^\dag\right)\right\}.\label{eq:first_prob}
    \end{align}
    Using the well-known identity
    \begin{align}
        \expct_{\bm{\varphi}\sim\mathrm{Haar}} \outerprod{\bm{\varphi}}{\bm{\varphi}}
         & = \frac{1}{d(d+1)}\left(\mathds{1}\otimes\mathds{1} + F_{(12)}\right)
    \end{align}
    in dimension $d$, we can rewrite the $j^\text{th}$ term in the right-hand side of \cref{eq:first_prob} as
    \begin{align}
         & \frac{1}{d(d+1)}\left[\vphantom{\Tr\left\{E_j^\dag\right\}}\right.
            \underbrace{\Tr\left\{F_{(1324)}(\mathds{1}^{\otimes 2}\otimes E_j\otimes E_j^\dag)\right\}}_{=\Tr(E_j^\dag E_j)}
            + \underbrace{\Tr\left\{F_{(13)(24)}(\mathds{1}^{\otimes 2}\otimes E_j\otimes E_j^\dag)\right\}}_{=|\Tr(E_j)|^2}\left. \vphantom{\Tr\left\{E_j^\dag\right\}}\right]
    \end{align}
    where in the above and until the end of this proof we are setting $d:=d_Ad_E$.
    Therefore, we have
    \begin{align}
        \expct_{\bm{U}\sim\mathrm{Haar}}p_{\bm{U},z}^{(1)}(0)
         & = \frac{1}{d(d+1)}\left(\sum_j \Tr(E_j^\dag E_j) + \sum_j |\Tr(E_j)|^2\right) \\
         & = \frac{1}{d+1}+ \frac{\sum_j|\Tr(E_j)|^2}{d(d+1)}                            \\
         & \leq \frac{1}{d+1}+ \frac{d_E^2d_A}{d(d+1)}                                   \\
         & =O\left(\frac{1}{d_A}\right).
    \end{align}
    Here, the second line uses the fact that the Kraus operators satisfy
    $\sum_jE_j^\dag E_j = \mathds{1}_{AE}$. The third line is based on the
    following reasoning. Since
    the measure-and-prepare channels act trivially on the $E$ subsystem
    we may write $E_j=\outerprod{\psi_j}{\phi_j}\otimes \mathds{1}_E$ for some
    normalized $\ket{\psi_j}$ and potentially unnormalized
    $\ket{\phi_j}$ satisfying $\norm{\ket{\phi_j}}\leq 1$ and
    $\sum_j\outerprod{\phi_j}{\phi_j} = \mathds{1}_A$. (Using rank-1 Kraus
    operators for measure-and-prepare channels are without loss of generality
    by~\cite[{Thm.~4}]{Horodecki2003entanglementbreaking}.) Thus,
    \begin{align}
        \sum_j|\Tr(E_j)|^2 = d_E^2 \sum_j |\langle \phi_j | \psi_j \rangle|^2\leq d_E^2 \sum_j \langle \phi_j | \phi_j\rangle = d_E^2 d_A.
    \end{align}
    Finally, we apply a similar argument to bound the second term in \cref{eq:triangle_ineq_tv}.
    We have
    \begin{align}
        \expct_{\bm{U}\sim\mathrm{Haar}}p^{(2)}_{\bm{U},z} & = \sum_j \Tr\left\{F_{(1324)}\left(\expct \ \bm{U}^{\otimes 2}(\outerprod{1}{1}\otimes \outerprod{2}{2})(\bm{U}^\dag)^{\otimes 2}\otimes E_j\otimes E_j^\dag\right)\right\} \\
                                                           & = \frac{1}{d^2-1}\sum_j \left[\Tr(E_j^\dag E_j) - \frac{|\Tr(E_j)|^2}{d}\right]                                                                                             \\
                                                           & \leq \frac{d}{d^2-1}                                                                                                                                                        \\
                                                           & = O\left(\frac{1}{d_A}\right)
    \end{align}
    where the second line follows from the identity
    \begin{align}
        \expct_{\bm{U}\sim\mathrm{Haar}} \bm{U}^{\otimes 2}(\outerprod{u}{u}\otimes \outerprod{v}{v})(\bm{U}^\dag)^{\otimes 2}
         & = \frac{1}{d^2-1}\left(\mathds{1}\otimes \mathds{1} - \frac{F_{(12)}}{d}\right)
    \end{align}
    for any two orthogonal unit vectors $\ket{u},\ket{v}\in\mathbb{C}^d$
    and the third line follows from neglecting the second term and once again
    noting that $\sum_j E_j^\dag E_j = \mathds{1}_{AE}$.
\end{proof}

\section{Further directions}
Our work raises several open questions. Firstly, does there exist a bipartite
unitary $U$ for which $\xi(U)\neq R_c(U)$? The procedure for space-like cutting
described here (and also in simultaneous work~\cite{schmitt2023cutting}) shows
that classical communication does not lead to a lower 1-norm in a space-like
cut for a large class of unitaries. Is there an entangling operation for which
classical communication provably lowers the minimal 1-norm in a space-like cut,
as originally suggested in Ref.~\cite{piveteau2023circuit}?

It is also natural to ask how far techniques for circuit cutting can be pushed
from an information-theoretic standpoint. Can one show that any choice of
measure-and-prepare channel (and post-processing function) in
\Cref{proc:wire_cutting} necessarily incurs a sample overhead of $\Omega(4^k)$
for general observables, matching the upper bound? Note that this is false if
we relax \Cref{proc:wire_cutting} to allow access to the intermediate
measurement outcomes obtained during application of the measure-and-prepare
channels. In this setting, when the register $E$ is trivial (one ``cuts" all
the wires), the observable outcomes may be disregarded completely, and one may
perform classical shadows~\cite{Huang2020} on the wires to predict the
expectation value using at most $O(2^k)$ samples of the unknown state, though
perhaps computationally inefficiently. Could the answer depend on assumptions
regarding computational efficiency? What should one expect of an
information-theoretic lower bound for space-like cutting?

It would be interesting and potentially useful to extend the ``double Hadamard
test" construction to general multipartite systems, and apply this to clustered
Hamiltonian simulation as well. Another direction would be to investigate the
possibility of computing spatial
correlation functions in thermal states or ground states using fewer qubits than might be expected. It would also be
interesting to see how circuit cutting techniques may be applied to compute
temporal correlation functions. For example, consider a correlation function of
the form $C_{PQ}(t):=\bra{\psi_0}\ee^{-\ii H t} P \ee^{\ii H t} Q \ket{\psi_0}$
where $\ket{\psi_0}$ is some initial tensor product state, $P$ and $Q$ are two
multi-qubit Pauli operators, and $H$ has interaction strength $\eta$ across
some partition. We may then estimate the magnitude $|C_{PQ}(t)|$ using local
circuits of the form used in \Cref{thm:clustered_ham_sim} through a Trotter
decomposition of $\ee^{-\ii H t} P \ee^{\ii H t} Q$ and taking the observable
to be $\outerprod{\psi_0}{\psi_0}$. The cost would then be on the order of
$\ee^{O(\eta t)}/\veps^4$. Is there a way to estimate this quantity using
similar techniques, including the sign? What are some specific examples of
quantum systems which are amenable to techniques for clustered Hamiltonian
simulation?

Finally, we conclude by reiterating an open question raised in
Ref.~\cite{bravyi2023classical} regarding the power of limited quantum memory.
Can one provably simulate a restricted, yet classically-hard family of
$n$-qubit quantum circuits (e.g., shallow circuits) in time $\text{poly}(n)$
using \textit{far} fewer qubits than expected, for example, $O(\text{poly}(\log
    n))$? As remarked by the authors of~\cite{bravyi2023classical}, such a
simulation might be enabled by the techniques considered in their work. Note
that naively applying the circuit cutting methods discussed in this work, one
could only hope to reduce the number of qubits required for such a simulation
by a constant factor, generically. We view this as an exciting direction of
both theoretical and practical importance.

\section{Acknowledgments}
AL thanks Richard Allen for helpful comments and Christophe Piveteau for discussions.
This work is supported by a collaboration between the US DOE and other Agencies. This material is based upon work supported by the U.S. Department of Energy, Office of Science, National Quantum Information Science Research Centers, Quantum Systems Accelerator. Additional support is acknowledged from the NSF for AL (grant PHY-1818914) and  AWH (CCF-1729369).

\appendix
\section{Robustness of the Choi state lower-bounds 1-norm}\label{sec:choi_robustness_bounds_1_norm_proof}
In this section we prove \Cref{claim:1_norm_lb}. The proof uses the same idea
as that in~\cite[Lemma 3.1]{piveteau2023circuit}, though we need to generalize
it slightly to the definition of a space-like cut presented here,
\Cref{def:space_like_cut}.
\begin{lemma}\label{lem:lower_bound_sep_decomp}
    Let $\rho\in\sfD(\calH_{AB})$ be a bipartite quantum state. Suppose there exist separable states $\sigma_1,\sigma_2,\dots\in\mathsf{SEP}(\calH_{AB}|A,B)$ and coefficients $a_1,a_2,\dots\in \mathbb{R}$ such that
    \begin{align}\label{eq:sep_state_decomp}
        \rho = \sum_j a_j \sigma_j.
    \end{align}
    It holds that $\sum_j|a_j|\geq 1+2R(\rho)$.
\end{lemma}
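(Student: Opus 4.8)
The plan is to reduce the signed decomposition \Cref{eq:sep_state_decomp} to the two-term form that appears in the definition of the robustness \Cref{eq:pure_robustness_defn}, by grouping the positive and negative coefficients separately. Concretely, I would set $P=\{j: a_j>0\}$ and $Q=\{j: a_j<0\}$, and define $\lambda_+:=\sum_{j\in P}a_j$ and $\lambda_-:=\sum_{j\in Q}|a_j|$. Discarding terms with $a_j=0$, we then have
\begin{align}
    \rho = \lambda_+\,\sigma_+ - \lambda_-\,\sigma_-,\qquad
    \sigma_+ := \frac{1}{\lambda_+}\sum_{j\in P}a_j\sigma_j,\qquad
    \sigma_- := \frac{1}{\lambda_-}\sum_{j\in Q}|a_j|\sigma_j,
\end{align}
where $\sigma_+,\sigma_-\in\mathsf{SEP}(\calH_{AB}|A,B)$ since each is a convex combination of separable states (note $\lambda_+>0$, as otherwise $\Tr\rho\le 0$; the degenerate case $\lambda_-=0$, where $\rho$ is itself separable and $R(\rho)=0$, is handled trivially and gives $\sum_j|a_j|=1$).

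Next I would take the trace of the displayed identity: since $\Tr\rho=\Tr\sigma_\pm=1$, this yields $\lambda_+-\lambda_-=1$, i.e.\ $\lambda_+=1+\lambda_-$. Substituting back and rearranging gives $\rho+\lambda_-\sigma_- = (1+\lambda_-)\sigma_+$, so that
\begin{align}
    \frac{1}{1+\lambda_-}\bigl(\rho+\lambda_-\sigma_-\bigr)=\sigma_+\in\mathsf{SEP}(\calH_{AB}|A,B).
\end{align}
By the definition of the robustness of entanglement in \Cref{eq:pure_robustness_defn}, $\lambda_-$ is then a feasible value of $s$, so $R(\rho)\le\lambda_-$.

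Finally I would read off the conclusion: $\sum_j|a_j| = \lambda_++\lambda_- = 1+2\lambda_- \ge 1+2R(\rho)$. There is no real obstacle here — the argument is essentially just bookkeeping — the only points requiring a moment of care are checking that $\lambda_+>0$ (so $\sigma_+$ is well defined), that $\sigma_+,\sigma_-$ really are separable (immediate from convexity of $\mathsf{SEP}$), and the handling of the trivial case $\lambda_-=0$. The same computation will later feed directly into \Cref{claim:1_norm_lb} by applying it to $\rho=J_{\calN}$ together with the observation that the Choi state of an LOCC channel composed with a product post-processing map is separable.
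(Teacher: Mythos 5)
Your proposal is correct and follows essentially the same route as the paper's proof: split the decomposition into its positive and negative parts, normalize each into a separable state, use $\Tr\rho=1$ to get $\lambda_+=1+\lambda_-$, and compare with the definition of the robustness to conclude $R(\rho)\le\lambda_-=(\sum_j|a_j|-1)/2$. Your explicit treatment of the degenerate case $\lambda_-=0$ is a minor point of extra care that the paper leaves implicit.
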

\begin{proof}
    We may rewrite \Cref{eq:sep_state_decomp} as
    \begin{align}
        \rho & = \sum_{j:a_j\geq 0} |a_j| \sigma_j - \sum_{j:a_j< 0} |a_j|\sigma_j                                                    \\
             & = \kappa_+ \sum_{j:a_j\geq 0} \frac{|a_j|}{\kappa_+} \sigma_j - \kappa_-\sum_{j:a_j< 0}\frac{|a_j|}{\kappa_-} \sigma_j \\
             & = (1+\kappa_-)\sigma_+ - \kappa_- \sigma_-\label{eq:rewritten_separable_decomp}
    \end{align}
    where in the second line we defined $\kappa_+ = \sum_{j:a_j\geq 0}|a_j|$ and $\kappa_- = \sum_{j:a_j< 0}|a_j|$, and $\sigma_+,\sigma_-$ are separable states, and the third line follows from the observation that $1=\Tr(\rho) = \sum_ja_j = \kappa_+-\kappa_-$. Comparing \Cref{eq:rewritten_separable_decomp} to the definition of robustness in \Cref{eq:pure_robustness_defn}, we necessarily have that
    \begin{align}
        R(\rho)\leq \kappa_- & = \frac{\kappa_+ + \kappa_- -\kappa_+ + \kappa_-}{2} = \frac{\sum_j|a_j| - 1}{2}.\nonumber\qedhere
    \end{align}
\end{proof}
In the remainder of the proof we give distinct labels to the input and output
systems of the channel we consider. Let $\calN:
    \sfL(\calH_{A_1B_1}) \to\sfL(\calH_{A_2B_2})$ be a bipartite quantum channel which
has a QPD of the form in \Cref{eq:qpd_defn} into \textit{separable} channels,
i.e.,
$
    \calN = \sum_j c_j \calT_j\circ \calE_j
$
for some $c_j\in\mathbb{R}$ satisfying $\sum_j|c_j|=\kappa$, separable channels $\calE_j:\sfL(\calH_{A_1B_1})\to\sfL(\calH_{A_2R_A}\otimes\calH_{B_2R_B})$, and post-processing functions $\calT_j:\sfL(\calH_{A_2R_A}\otimes\calH_{B_2R_B})\to\sfL(\calH_{A_2B_2})$ with the actions
\begin{align}
    \calT_j\colon \rho_{A_2R_AB_2R_B}\mapsto \Tr_{R_A R_B}((O_j\otimes \mathds{1}_{A_2B_2})\rho_{A_2R_AB_2R_B})
\end{align}
for some $O_j$ of the form $O_j=O^{(A)}_j\otimes O^{(B)}_j$ such that $\norm{O_j}\leq 1$. For each $j$, we have that $J_{\calE_j}\in \mathsf{SEP}(\calH_{A_1A_2R_AB_1B_2R_B}| A_1A_2R_A,B_1B_2R_B)$ by definition, so we may write
\begin{align}
    J_{\calE_j} = \sum_k p^{(j)}(k)\ \rho^{(j)}_k\otimes \sigma^{(j)}_k
\end{align}
where $p^{(j)}(k) > 0$, $\sum_k p^{(j)}(k) = 1$, $\rho^{(j)}_k\in \sfD(\calH_{A_1A_2R_A})$, and $\sigma^{(j)}_k\in\sfD(\calH_{B_1B_2R_B})$.
Then the Choi state $J_\calN\in \sfD(\calH_{A_1B_1}\otimes \calH_{A_2B_2})$ is equal to
\begin{align}
    J_{\calN} & = \sum_{jk}c_jp^{(j)}(k) (\id_{A_1 B_1}\otimes \calT_j)\left(\rho^{(j)}_k\otimes \sigma^{(j)}_k\right)                                                                                       \\
              & = \sum_{jk} c_jp^{(j)}(k) \Tr_{R_A}\left((\mathds{1}_{A_1A_2 }\otimes O^{(A)}_j)\rho_{k}^{(j)} \right)\otimes \Tr_{R_B}\left((\mathds{1}_{B_1B_2 }\otimes O^{(B)}_j)\sigma_{k}^{(j)} \right) \\
              & = \sum_{jk} c_jp^{(j)}(k) \sum_{x\in [d_A]}\sum_{y\in [d_B]}g_j(x,y) p^{(j)}_{A,k}(x)p^{(j)}_{B,k}(y)\omega^{(j,x)}_{A_1A_2,k}\otimes \tau^{(j,y)}_{B_1 B_2,k}
\end{align}
where in the third line we let $\{\ket{j,x}\}_x$ and $\{\ket{j,y}\}_y$ be eigenbases for $O^{(A)}_j$ and $O^{(B)}_j$, respectively, we let $g_j(x,y)$ be the $(x,y)^{\text{th}}$ eigenvalue of $O_j$, and we define
\begin{align}
    p^{(j)}_{A,k}(x) := \Tr\left((\mathds{1}_{A_1A_2}\otimes \outerprod{j,x}{j,x})\rho^{(j)}_{k}\right),\qquad p^{(j)}_{B,k}(x) := \Tr\left((\mathds{1}_{B_1B_2}\otimes \outerprod{j,y}{j,y})\sigma^{(j)}_k\right)
\end{align}
and
\begin{align}
    \omega_{A_1A_2,k}^{(j,x)} := \Tr_{R_A}\left((\mathds{1}_{A_1A_2}\otimes \outerprod{j,x}{j,x})\rho^{(j)}_{k}\right)/p_{A,k}^{(j)}(x),\qquad \tau_{B_1B_2,k}^{(j,y)} = \Tr_{R_B}\left((\mathds{1}_{B_1B_2}\otimes \outerprod{j,y}{j,y})\rho^{(j)}_{k}\right)/p_{B,k}^{(j)}(y).
\end{align}
By \Cref{lem:lower_bound_sep_decomp} we have
\begin{align}
    1+2R(J_{\calU}) & \leq \sum_{j}\sum_k\sum_{x\in [d_A]}\sum_{y\in [d_B]} |c_jg_j(x,y)p^{(j)}(k)p^{(j)}_{A,k}(x)p^{(j)}_{B,k}(y)|      \\
                    & \leq  \sum_j|c_j|\sum_k\sum_{x\in [d_A]}\sum_{y\in [d_B]}p^{(j)}(k) p^{(j)}_{A,k}(x)p^{(j)}_{B,k}(y) = \sum_j|c_j|
\end{align}
using the fact that $|g_j(x,y)|\leq 1$ and $\sum_{x\in [d_A]}p^{(j)}_{A,k}(x) = \sum_{y\in [d_B]}p^{(j)}_{B,k}(y)=1$ for any $j$ and $k$. This establishes the lower bound on the QPD 1-norm of $\calN$ in terms of the robustness of its Choi state.

\section{The product extent is well-defined}\label{sec:prod_extent_well_defined}
In this section we prove that the product extent (\Cref{defn:product_extent})
is well-defined, using elementary facts from linear programming. (See
Ref.~\cite[Chapter 4]{matousek2007understanding} for an introduction to the
relevant concepts.) Let $U\in\sfU(\calH_{AB})$ be a bipartite unitary operator.
For any positive integer $m\geq d_A^2d_B^2$ define $\xi_m(U)$ by a restriction
of the optimization problem in \Cref{defn:product_extent} to column vectors
with $m$ entries through
\begin{align}
    \begin{aligned}\label{eq:opt_problem_m_prod_extent}
        \xi_m(U):=\min\quad & 2\norm{c}_1^2 - \norm{c}_2^2        \\
        \textrm{s.t.} \quad & \sum_{j=1}^mc_j V_j\otimes W_j = U  \\
                            & c\in \mathbb{R}^m                   \\
                            & (V_j)_{j=1}^m\subset \sfU(\calH_A)  \\
                            & (W_j)_{j=1}^m\subset \sfU(\calH_B).
    \end{aligned}
\end{align}
That this quantity is well-defined follows from the fact that the objective function
is continuous and the feasible set defined by the constraints is nonempty (decompose $U$ in the Pauli basis) and compact. Clearly, we have $\xi_n(U)\leq \xi_{m}(U)$ for all $m,n\in \mathbb{Z}$ such that $d_A^2d_B^2\leq m\leq n$. Also, from the definition of the product extent and the fact that $\xi(U)\geq 1$ (\Cref{lem:extent_robustness_bounds}) we have $\xi(U) = \lim_{m\to \infty}\xi_m(U)$. It therefore suffices to show there exists some positive $m^*\in \mathbb{Z}$ such that for all $m\geq m^*$ we have $\xi_m(U)\geq \xi_{m^*}(U)$ since this implies that $\xi(U)=\xi_{m^*}(U)$ and the minimum in \Cref{defn:product_extent} is attained. To this end, let $m^* = 2d_A^2d_B^2$ and consider $\xi_m(U)$ for some $m \geq m^*+1$. Let $c\in\mathbb{R}^m$, $c\geq 0$ and $(V_j)_{j=1}^m$, $(W_j)_{j=1}^m$ be an optimal solution to the optimization problem in \Cref{eq:opt_problem_m_prod_extent}. (We may take $c\geq 0$ without loss of generality since the sign can be absorbed into the
unitary operators in the first constraint without changing the value of the objective function.) For each $\gamma\in \mathbb{R}$ define
\begin{align}
    S(\gamma):=\{d\in \mathbb{R}^m: d\geq 0,\ U=\sum_{j=1}^m d_j V_j\otimes W_j,\ \norm{d}_1=\gamma\}.
\end{align}
Then $S(\norm{c}_1)$ is a nonempty, convex, compact set. Hence, the convex optimization $\max\{\norm{d}_2: d\in S(\norm{c}_1)\}$ attains its maximum at an extreme point of $S(\norm{c}_1)$. But $S(\norm{c}_1)$ is a polytope specified by the $2d_A^2d_B^2$ linear constraints given by the real and imaginary parts of the equation $U=\sum_{j=1}^m d_j V_j\otimes W_j$. This implies that the extreme points have support of size at most $2d_A^2d_B^2$ by the equivalence between extreme points and basic feasible solutions for convex polytopes. Letting $d^*$ denote such an optimal extreme point, we therefore have
\begin{align}
    \xi_{m^*}(U)\leq 2\norm{d^*}_1^2 - \norm{d^*}_2^2 & = 2\norm{c}_1^2 - \norm{d^*}_2^2\leq 2\norm{c}_1^2 - \norm{c}_2^2 = \xi_m(U).
\end{align}
\printbibliography
\end{document}